\documentclass[a4paper,reqno]{amsart}
\usepackage[left=2cm,right=2cm]{geometry}

\usepackage{amsmath,amssymb,amsfonts,mathtools,mathrsfs}
\usepackage{amsthm}
\usepackage{framed}
\allowdisplaybreaks

\usepackage{circuitikz}
\usepackage{psfrag}
\usepackage{tikz-cd}

\usepackage{xcolor}
\usepackage{enumitem}
\usepackage{outlines}
\usepackage{url}
\usepackage{comment}
\usepackage{anyfontsize}
\usepackage{cite}
\usepackage{algorithmic}

\usepackage[colorlinks=true,linkcolor=blue,citecolor=blue,urlcolor=blue]{hyperref}
\usepackage[nameinlink,capitalise]{cleveref}

\usepackage{thm-restate}

\newtheorem{theorem}{Theorem}[section]
\newtheorem{lemma}[theorem]{Lemma}

\newtheorem{claim}[theorem]{Claim}
\newtheorem{corollary}[theorem]{Corollary}

\newtheorem{example}[theorem]{Example}

\theoremstyle{definition}
\newtheorem{definition}[theorem]{Definition}

\Crefname{lemma}{Lemma}{Lemmata}
\Crefname{prop}{Proposition}{Propositions}


\newcommand{\tuple}{\mathbf}  

\newcommand{\br}[1]{[#1]}
\newcommand{\edge}{\rightarrow}
\newcommand{\ledge}{\leftarrow}

\newcommand{\rplus}{\oplus}

\DeclareMathOperator{\Csp}{CSP}

\DeclareMathOperator{\Pol}{Pol}

\DeclareMathOperator{\Aut}{Aut}

\DeclareMathOperator{\val}{val}
\DeclareMathOperator{\oin}{in}
\DeclareMathOperator{\out}{out}

\DeclareMathOperator{\OR}{OR}
\DeclareMathOperator{\labels}{L}
\DeclareMathOperator{\Betw}{Betw}
\DeclareMathOperator{\comp}{S}
\DeclareMathOperator{\CSP}{CSP}

\newcommand{\sA}{\mathbb A}
\newcommand{\sB}{\mathbb B}

\newcommand{\sH}{\mathbb H}
\newcommand{\sI}{\mathbb I}

\newcommand{\subgraph}{\mathbb H}
\newcommand{\subdomain}{H}
\newcommand{\fingraph}{\mathbb J}
\newcommand{\findomain}{J}
\newcommand{\finsubgraph}{\mathbb K}
\newcommand{\finsubdomain}{K}
\newcommand{\weakcomp}{H}

\newcommand{\urgraph}{\mathbb G}
\newcommand{\urdomain}{G}


\newcommand{\edgea}{\edge}

\newcommand{\edgeo}{\edge}
\newcommand{\ledgeo}{\leftarrow}
\newcommand{\edgek}{\overset{k}{\rightarrow}}

\newcommand{\merge}{\vee}

\newcommand{\inj}{I_{\urdomain}}
\newcommand{\prinj}[1]{I_{#1}}

\newcommand{\fence}[2]{\mathbb{F}[#1,#2]}

\usepackage{orcidlink}
\newcommand{\Addresses}{
  \bigskip
  \noindent
  Johanna Brunar, Institut f\"ur Diskrete Mathematik und Geometrie, Technische Universit\"at Wien, Wien, Austria.\\
  \textit{Email}: \texttt{johanna.brunar@tuwien.ac.at}

  \medskip
  \noindent
  Marcin Kozik, Theoretical Computer Science Department, Jagiellonian University, Krak\'ow, Poland.\\
  \textit{Email}: \texttt{marcin.kozik@uj.edu.pl}

  \medskip
  \noindent
  Tom\'{a}\v{s} Nagy, Theoretical Computer Science Department, Jagiellonian University, Krak\'ow, Poland.\\
  \textit{Email}: \texttt{tomas.nagy@email.com}

  \medskip
  \noindent
  Michael Pinsker, Institut f\"ur Diskrete Mathematik und Geometrie, Technische Universit\"at Wien, Wien, Austria.\\
  \textit{Email}: \texttt{marula@gmx.at}
}

\title[The sorrows of a smooth digraph]{The sorrows of a smooth digraph:\\ the first hardness criterion for infinite directed graph-colouring problems}

\date{}

\begin{document}

\maketitle

\begingroup
\renewcommand\thefootnote{}\footnotetext{This is the author's accepted version of a paper published in the
Proceedings of the 40th Annual ACM/IEEE Symposium on Logic in Computer Science (LICS'25).
© 2025 IEEE. 
The final version is available at \url{https://doi.org/10.1109/LICS65433.2025.00037}.

Funded by the European Union (ERC, POCOCOP, 101071674). Views and opinions expressed are however those of the author(s) only and do not necessarily reflect those of the European Union or the European Research Council Executive Agency. Neither the European Union nor the granting authority can be held responsible for them. 
This research was funded in whole or in part by the Austrian Science Fund (FWF) [I5948]. For the purpose of Open Access, the authors have applied a CC BY public copyright licence to any Author Accepted Manuscript (AAM) version arising from this submission.
Partially funded by the National Science Centre, Poland under the Weave program grant no. 2021/03/Y/ST6/00171. For the purpose of Open Access, the authors have applied a CC BY public copyright licence to any Author Accepted Manuscript (AAM) version arising from this submission. 
The first author is a recipient of a DOC Fellowship of the Austrian Academy of Sciences at the Technische Universit\"at Wien.
}
\addtocounter{footnote}{-1}
\endgroup

\begin{center}
\textsc{Johanna Brunar\orcidlink{0009-0000-7229-0172}, Marcin Kozik\orcidlink{0000-0002-1839-4824}, Tom\'{a}\v{s} Nagy\orcidlink{0000-0003-4307-8556}, and Michael Pinsker\orcidlink{0000-0002-4727-918X}} \\[0.5em]
\end{center}

\vspace{1em}

\begin{abstract}
Two major milestones on the road to the  full complexity dichotomy  for finite-domain constraint satisfaction problems were Bulatov's proof of the dichotomy for conservative templates, and the structural dichotomy for smooth digraphs of algebraic length~1 due to Barto, Kozik, and Niven. 
We lift the combined scenario to the infinite, and  prove that any smooth digraph of algebraic length~1 pp-constructs, together with pairs of orbits of an oligomorphic subgroup of its automorphism group, every finite structure -- and hence its conservative graph-colouring problem is NP-hard -- unless the digraph has a pseudo-loop, i.e.~an edge within  an orbit. We thereby overcome, for the first time,  previous obstacles  to lifting  structural results for digraphs in this context from finite to $\omega$-categorical structures; the strongest  lifting results hitherto not going beyond a ge\-ne\-ra\-li\-sa\-tion of the Hell-Ne\v{s}et\v{r}il theorem for undirected graphs.  As a consequence, we obtain a new algebraic invariant of arbitrary $\omega$-categorical structures enriched by pairs of orbits which fail to pp-construct some finite structure.
\end{abstract}

\section{Introduction}\label{sect:Intro}

\subsection{Complexity and structural dichotomies for graphs}
For a graph $\sH$, the \emph{$\sH$-colouring problem}  is the computational problem where given a finite input graph $\sI$, one has to decide whether there exists a \emph{homomorphism} from $\sI$ to $\sH$, i.e.~a map sending edges to edges; a colouring, so to speak, of $\sI$ by the vertices of $\sH$ which is consistent with edges. This problem is a natural generalisation of the $n$-colouring problem, which is equal to the $\sH$-colouring problem when $\sH$ is the clique on $n$ vertices. Yet more general are (fixed-template) \emph{Constraint Satisfaction Problems (CSPs)}, which are parametrised by arbitrary relational structures $\sA$ (in this context called \emph{templates})  rather than just graphs: here again, one is given a finite input structure $\sI$ in the same signature as $\sA$, and has to decide whether or not there exists a homomorphism from $\sI$ into $\sA$, i.e.~a map preserving all relations. This problem is denoted by $\CSP(\sA)$, and so the $\sH$-colouring problem in particular is $\CSP(\sH)$; the notion allows to model many classical computational problems, such as 3-SAT, within a uniform framework. On the other hand, every problem $\CSP(\sA)$ is logspace-interreducible with the $\sH$-colouring problem for some directed graph $\sH$ (\cite{BulinDelicJacksonNiven}, improving upon~\cite{FederVardi}), and hence in this sense digraphs already contain the entire complexity spectrum   brought along by the study of general Constraint Satisfaction Problems.

A  systematic research programme on the complexity of  CSPs was initiated by Feder and Vardi in the 1990s with their seminal work~\cite{FederVardiSTOC,FederVardi}, although the earliest results stem back to the 1970s, when Schaefer proved a P/NP-complete complexity dichototomy for CSPs over a Boolean domain~\cite{Schaefer}. The continuous results of the thriving  research field culminated,  25 years after the start of the programme, in 2017 with two independent proofs by Bulatov~\cite{BulatovFVConjecture} and Zhuk~\cite{ZhukFVConjecture, Zhuk20}, respectively, of a P/NP-complete dichotomy for CSPs of arbitrary finite structures: even if P$\neq$NP, then
no problem $\CSP(\sA)$ can, for finite $\sA$, be of intermediate complexity, contrasting Ladner's theorem~\cite{Ladner}. Similar complexity dichotomies had been obtained earlier in particular for undirected graphs (\cite{HellNesetril}; see also the later proof in~\cite{Bulatov}),  and for \emph{smooth} digraphs, i.e.~digraphs with no sources and no sinks~\cite{BartoKozikNiven}.

Remarkably, the above   complexity dichotomies coincide with  elegant  structural dichotomies for the classes of relational structures concerned. In fact, over the years  numerous  structural dichotomies inspired by applications to CSPs have been unveiled,   usually taking the following form: either  a template $\sA$ has large  expressive power (as measured by its ability e.g.~to \emph{pp-interpret}, to \emph{pp-interpret with parameters}, or  to \emph{pp-construct} a certain other finite structure, or all finite structures, or an abelian group), or there is an obvious obstacle for this (e.g.~some form of symmetry of the structure in the form of a \emph{polymorphism}, or a constant tuple in a relation). Of course, the former case then corresponds to hardness in the sense of computational complexity  of the CSP (e.g. NP-hardness, or the non-applicability of a certain algorithm such as $k$-consistency) the latter to tractability (e.g. polynomial-time solvability). In the case of finite undirected non-bipartite graphs $\sH$ 
 (note that the bipartite case is just the 2-colouring problem),  Hell and Ne\v{s}et\v{r}il proved, in modern terminology, that either $\sH$ pp-constructs all finite structures -- and hence $\CSP(\sH)$ is NP-complete -- or $\sH$ contains a loop~\cite{HellNesetril}; the latter is an obvious obstacle to the pp-construction of any loopless graph, and implies that $\CSP(\sH)$ is trivial.
 
 The next milestone achievement in this direction concerned \emph{directed} graphs which are smooth and of \emph{algebraic length~1}
(a property which can be viewed as a generalisation of non-bipartiteness for digraphs): again, any such graph either pp-constructs all finite structures, or it has a loop. Subsequently, similar statements for higher-arity relations were obtained, for example  relations  invariant under cyclic shifts~\cite{cyclicterms}, the proofs of which unsurprisingly often  incorporate the  combinatorial principles crucial in the case of digraphs. Finally, the structural dichotomy  
underlying the theorem of Bulatov and Zhuk can be phrased as follows: every finite template $\sA$ either pp-constructs all finite structures, or it is invariant under a 4-ary operation $s$ on its domain (a so-called \emph{polymorphism}) satisfying the \emph{4-ary Siggers identity}  $s(a,r,e,a)=s(r,a,r,e)$ for all values of $a,r,e$~\cite{KearnesMarkovicMcKenzie}. 

\subsection{Infinity}\label{subsect:infinity}
When it comes to (countably) infinite graphs and structures,
the context in which  results of a similar flavour are conceivable at least in principle is that of \emph{$\omega$-categoricity}: structures $\sA$ with an \emph{oligomorphic} automorphism group $\Aut(\sA)$, i.e.~an automorphism group so large that its action on $k$-tuples has only finitely many orbits (called \emph{$k$-orbits}) for all $k$. This  might be paraphrased as $\sA$ being ``close to finite'' in the sense that it has only finitely many distinguishable $k$-tuples for every $k$. The property of $\omega$-categoricity is prominent in model theory,  and is considered  necessary and sufficient (in a certain precise sense, see~\cite{BodirskyNesetrilJLC,Topo-Birk, wonderland}) for the  applicability  of the \emph{algebraic approach} to CSPs, crucial for the success of finite-template CSPs.
\begin{example}
    The automorphism group of the  digraph $(\mathbb Q;<)$ given by the strict order of the rational numbers is oligomorphic: the orbit of any $k$-tuple under  $\Aut(\mathbb Q;<)$ is completely  determined by the order induced on its elements, for which there are only finitely many possibilities. The digraph $(\mathbb Q;<)$ is also a good example of an infinite graph inducing a natural CSP: namely, a finite digraph $\sI$ is $(\mathbb Q;<)$-colourable if and only if it is acyclic, and the latter property cannot be modelled this way by any finite template.
\end{example}

 Note that $\omega$-categoricity of a template $\sA$ can  not only be witnessed by $\Aut(\sA)$ itself, but by any oligomorphic subgroup $\Omega$ of $\Aut(\sA)$ (which might be a more natural permutation group to consider): the orbits of $\Aut(\sA)$ in its action on $k$-tuples are unions of the orbits of $\Omega$. 
 \begin{example}
     Consider the template $\sA:=(\mathbb Q;\Betw)$, where $\Betw$ is the ternary \emph{betweenness relation} which holds for a tuple $(x,y,z)$ precisely if $x<y<z$ or $z<y<x$. An instance $\sI$ of its CSP is a yes-instance if and only if its elements can be assigned values in  $\mathbb Q$ in such a way that all triples $(x,y,z)$ of elements of $\sI$ which belong to its (only) relation end up in the relation $\Betw$ after the assignment. This amounts to finding a (non-strict, since homomorphisms need not be injective) linear ordering of $\sI$ so that $y$ is strictly between $x$ and $z$ with respect to this ordering for all such triples $(x,y,z)$; the problem is thus  the classical (NP-complete)  \emph{betweenness problem} from the 1970s~\cite{Opatrny}. In this case, it might be more natural to consider the proper subgroup  $\Omega:=\Aut(\mathbb Q;<)$ of $\Aut(\mathbb Q;\Betw)$ rather than the latter group itself, since we are thinking of finding an ordering, but also since the group $\Aut(\mathbb Q;<)$ is natural and well-understood from a model-theoretic perspective.
 \end{example}
  In general,  if $\Omega\leq \Aut(\sA)$, we have that 
  the relations of $\sA$, invariant under its automorphisms, are unions of orbits of $\Omega$, and one can thus imagine them as finite objects modulo $\Omega$ (in the example above, $\Betw$ consists of two orbits with respect to $\Omega:=\Aut(\mathbb Q;<)$). Thus, when solving an instance $\sI$ of $\CSP(\sA)$, we have to pick for every tuple in a relation of $\sI$ one of finitely many orbits (with respect to $\Omega$) in the corresponding relation of $\sA$ in a consistent way. This is the   general principle for $\omega$-categorical templates $\sA$: orbits of an oligomorphic  subgroup $\Omega$ of $\Aut(\sA)$ take the role of values for tuples in an instance, and in particular the 1-orbits of $\Omega$ take the role of values for its elements.  If the orbits of $\Omega$ have a locally   verifiable local  representation (see e.g.~\cite{Book, Pinsker22} for a precise discussion of such a context, namely when $\Omega=\Aut(\sB)$ for  a \emph{finitely bounded homogeneous structure} $\sB$), then $\CSP(\sA)$ is in NP, and a P/NP-complete complexity dichotomy has been conjectured by Bodirsky and Pinsker more than 10 years ago (see~\cite{BPP-projective-homomorphisms,BartoPinskerDichotomy,  BKOPP,BKOPP-equations} for formulations of the conjecture).

Barto and Pinsker obtained the first general structural dichotomy for $\omega$-categorical  graphs \cite{BartoPinskerDichotomy, Topo}. If an undirected graph $\urgraph$ contains the three-element clique $K_3$ as a subgraph, and $\Omega$ is an oligomorphic subgroup of its automorphism group, then    one of the following holds: either $\urgraph$ together with the $\Omega$-orbits pp-constructs every finite structure, or $\urgraph/\Omega$ has a loop, i.e.~there is an edge within an $\Omega$-orbit (also called a \emph{pseudoloop} of $\urgraph$ with respect to $\Omega$). This implies a structural dichotomy for all $\omega$-categorical templates  $\sA$: either $\sA$ together with the orbits of its automorphism group pp-constructs all finite structures, or $\sA$ has unary polymorphisms $u,v$ and a 6-ary polymorphism $s$ such that $us(x,y,x,z,y,z)=vs(y,x,z,x,z,y)$ holds for all $x,y,z$. The hope, sparked by this result, for  a full lifting of the above-mentioned Hell-Ne\v{s}et\v{r}il structural dichotomy for finite undirected non-bipartite graphs to $\omega$-categorical ones  was, however, only fulfilled in 2023~\cite{symmetries}. This delay can be attributed to the fact that  pp-constructions  from a template $\sA$ allow, together with the $\Omega$-orbits, for the addition of elements of $\sA$ as singleton unary relations; adding all elements of a finite $\sA$  yields a structure with a trivial automorphism group, thereby  granting access to decades of fruitful research on such structures; the same cannot be achieved for infinite structures.       Given the additional technical complications when passing from undirected to  directed graphs  already in the finite, it is hardly surprising  that the lifting of the dichotomy for finite smooth digraphs of algebraic length~1 had to remain open: in fact, hitherto not a single structural consequence of the inability to pp-construct all finite structures has been unveiled  for $\omega$-categorical digraphs, even under reasonable conditions such as the presence of a certain subgraph.

\subsection{Conservativity \& our contributions}

Following  the complexity dichotomies for 2-element~\cite{Schaefer} and 3-element templates~\cite{Bulatov-3-conf}, the next breakthrough on the path to the theorems of Bulatov and Zhuk  was Bulatov's proof of the  dichotomy for finite \emph{conservative} templates~\cite{Conservative}: that is,  templates $\sA$ enriched by all subsets of their domain as unary relations. An instance $\sI$ of $\Csp(\sA)$ might thus impose for each of its elements an arbitrary list of  allowed  values in $\sA$ which the element may be sent to under a solution   (that is, under a homomorphism to $\sA$).  Connecting this with colouring problems, one might view the elements of $\sA$ as colours, and the lists for elements of the instance $\sI$ as sets of allowed colours from which one colour has to be chosen. 
Adding these unary relations makes the template even more rigid than after adding unary singleton relations (discussed above), and thereby facilitates the analysis of its symmetries (polymorphisms). On the other hand, the conservative setting already exposes some of the most important building blocks required for a structural analysis of arbitrary templates; its importance is witnessed by several works in this setting after Bulatov's original proof~\cite{Barto-Conservative,Kazda2015,Bulatov-Conservative-Revisited}. 
 Moreover, variants of CSPs such as counting CSPs~\cite{BDGJM, CHENconservative} and valued CSPs~\cite{Kolmogorovconservative} have been studied in this setting. Conservative graph-colouring problems  are known as \emph{list homomorphism problems}, and have been examined 
for many classes of finite digraphs \cite{Helllisthom,Helllisthom2, Egrilisthom}. 

For $\omega$-categorical templates $\sA$, where the $1$-orbits of a subgroup $\Omega$ of $\Aut(\sA)$ take the role of elements, a conservative template would thus be enriched by a unary relation for all unions of 1-orbits of $\Omega$. 
In this work, assuming only access to unions of pairs of $1$-orbits, we obtain the first structural dichotomy for smooth digraphs of algebraic length~1. For a template $\sA$,  a subgroup $\Omega$ of $\Aut(\sA)$, and $k\geq 1$ let us call the expansion of $\sA$ by all unary relations $P_1\cup\cdots\cup P_k$, where each $P_i$ is a 1-orbit of $\Omega$, the \emph{$k$-conservative expansion of $\sA$ with respect to $\Omega$.}

\begin{restatable}
{theorem}{thmten}\label{thm:10}
    Let $\urgraph =(\urdomain; \edge)$ be a smooth digraph of algebraic length $1$, and let $\Omega\leq \Aut(\urgraph)$ be oligomorphic. Then one of the following holds:\begin{enumerate}
        \item $\urgraph / \Omega$ has a loop;
        \item the 2-conservative expansion of $\urgraph$ with respect to $\Omega$ pp-constructs EVERYTHING, i.e.~every finite structure.        
    \end{enumerate} 
\end{restatable}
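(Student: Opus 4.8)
The plan is to reduce \Cref{thm:10} to an oligomorphic analogue of the Barto--Kozik--Niven loop theorem~\cite{BartoKozikNiven} and to combine that analogue with the algebraic theory of $\omega$-categorical constraint satisfaction. Write $\urgraph^c$ for the $2$-conservative expansion of $\urgraph$ with respect to $\Omega$; since the relations added to $\urgraph$ are unions of $1$-orbits of $\Omega$, the structure $\urgraph^c$ is again $\omega$-categorical, still witnessed by $\Omega$. The elementary but crucial observation is that every unary polymorphism $f$ of $\urgraph^c$ preserves each $1$-orbit $P_i$ of $\Omega$ setwise, because $f(P_i)\subseteq\bigcap_{j\neq i}(P_i\cup P_j)=P_i$. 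Consequently, whenever $\Pol(\urgraph^c)$ satisfies a non-trivial minor condition --- which, by the standard loop-condition toolbox, may be taken to be a \emph{pseudo weak near-unanimity} identity $e_1w(y,x,\dots,x)=e_2w(x,y,x,\dots,x)=\dots=e_nw(x,\dots,x,y)$ for an $n$-ary $w\in\Pol(\urgraph^c)$ and $e_1,\dots,e_n\in\Pol^{(1)}(\urgraph^c)$ --- applying the orbit-preserving $e_i$ shows that for all $x,y$ the near-diagonal values $w(y,x,\dots,x),\dots,w(x,\dots,x,y)$ all lie in one common $\Omega$-orbit. Since, by the algebraic theory of $\omega$-categorical CSPs~\cite{BartoPinskerDichotomy, Topo}, $\urgraph^c$ pp-constructs every finite structure unless $\Pol(\urgraph^c)$ has such a polymorphism (a pseudo-Siggers operation in the formulation of~\cite{BartoPinskerDichotomy}, equivalently a pseudo weak near-unanimity operation), \Cref{thm:10} reduces to the following \emph{pseudoloop lemma}: \emph{if $\urgraph$ is a smooth digraph of algebraic length~$1$ carrying an $n$-ary polymorphism $w$ that preserves every union $P_i\cup P_j$ of $1$-orbits of $\Omega$ and for which, for all $x,y$, the values $w(y,x,\dots,x),\dots,w(x,\dots,x,y)$ lie in a common $\Omega$-orbit, then $\urgraph/\Omega$ has a loop.}

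To prove the pseudoloop lemma I would argue by minimising over counterexamples: among all triples $(\urgraph,\Omega,w)$ as in the statement with $\urgraph/\Omega$ loopless, pick one in which $\Omega$ has the fewest $1$-orbits. If there is a single $1$-orbit the statement is immediate: smoothness forces at least one edge, and that edge then joins two vertices of the same $\Omega$-orbit, i.e.\ a loop of $\urgraph/\Omega$. So suppose there are at least two $1$-orbits. The first task is to harvest the rigidity provided by the ``$2$-conservative'' hypothesis: for distinct $1$-orbits $P,Q$ and $x\in P$, $y\in Q$, preservation of $P\cup Q$ forces the common $\Omega$-orbit of the near-diagonal values of $w$ to be contained in $P\cup Q$, hence to equal $P$ or $Q$ --- a ``two-colouring'' of the unordered pairs of $1$-orbits. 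This two-colouring is the exact analogue, in our setting, of the part played by two-element subalgebras in Bulatov's analysis of finite conservative templates, and it is all the conservativity the argument is allowed to use. The goal is then to produce a proper, nonempty, $\Omega$-invariant, $\Pol(\urgraph^c)$-closed subset $S\subsetneq\urdomain$ for which the induced digraph $\urgraph|_S$ is still smooth and still of algebraic length~$1$; such an $\urgraph|_S$ automatically inherits (the restriction of) $w$ and the looplessness of its quotient, while having strictly fewer $1$-orbits, contradicting minimality. Note that algebraic length~$1$ is never destroyed by forming quotients, but it genuinely may be lost on passing to a substructure, which is precisely why $S$ must be chosen with care.

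The construction of such an $S$ is the place where the finite proof does not transfer mechanically, and it is the step I expect to be the main obstacle. In the finite conservative setting the analogous reduction rests on absorption theory (Barto--Kozik) and on the freedom to adjoin all orbits as singleton constants; neither is available here --- absorption does not lift verbatim to infinite $\omega$-categorical algebras, and adjoining all $\Omega$-orbits as singletons is impossible over an infinite domain, which is exactly the obstruction the paper sets out to overcome. One must therefore run the connectivity and reduction arguments directly in terms of the oligomorphic group $\Omega$ and the pair-of-orbits two-colouring: iterating $w$ along walks to collapse the ``offsets'' between their endpoints, using smoothness to keep producing walks of both orientations, and using algebraic length~$1$ to prevent the structure from behaving like a disjoint union of directed cycles --- the standard obstruction to a loop. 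Once $S$ is obtained, or once the minimal configuration is seen to live on a single orbit, the pseudoloop lemma, and with it \Cref{thm:10}, follows. In short, the opening algebraic reduction and the one-orbit base case are close to known templates; essentially all the difficulty is concentrated in this middle reduction step, i.e.\ in finding the correct infinitary, pairs-of-orbits-only substitute for conservative absorption.
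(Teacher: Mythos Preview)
Your opening reduction contains a genuine error. You claim that if $\urgraph^c$ does not pp-construct every finite structure then $\Pol(\urgraph^c)$ contains a pseudo weak near-unanimity operation, citing~\cite{BartoPinskerDichotomy,Topo} and asserting that pseudo-Siggers is ``equivalently a pseudo weak near-unanimity operation''. This equivalence is a finite-domain fact that is \emph{not} known to hold for $\omega$-categorical structures; indeed the paper explicitly recalls (see the discussion in \Cref{sect:poly}) that there exist $\omega$-categorical structures which do not pp-construct everything and yet admit no pseudo-WNU polymorphism. What~\cite{BartoPinskerDichotomy,Topo} actually yields is a $6$-ary pseudo-Siggers $us(x,y,x,z,y,z)=vs(y,x,z,x,z,y)$, and this identity has no ``near-diagonal'' tuples on which your orbit-collapsing trick could act. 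Your pseudoloop lemma therefore has no available input polymorphism, and the whole reduction collapses.

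Even setting this aside, you acknowledge that the construction of the reductionistic subset $S$ --- the step where smoothness and algebraic length~$1$ must be preserved while strictly shrinking the orbit count --- is ``the main obstacle'' and you do not provide an argument for it. This is precisely the heart of the theorem; waving at ``iterating $w$ along walks'' and ``using algebraic length~$1$ to prevent the structure from behaving like a disjoint union of directed cycles'' is not a proof. In the finite case this step already requires substantial combinatorics (the Barto--Kozik--Niven machinery), and the paper's entire technical contribution is a new infrastructure to make an analogous reduction work oligomorphically.

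The paper's route is essentially disjoint from yours: it never argues via polymorphisms of $\urgraph^c$. Instead it constructs a \emph{finitising equivalence} $\alpha\subseteq\omega$ (\Cref{sect:finitising}) so that $\urgraph/\alpha$ has finite weakly connected components, shows that the $2$-conservative expansion pp-defines $\alpha$ on unions of pairs of $\edge$-adjacent $\omega$-classes (\Cref{l:alphaonpairs}; this is where $2$-conservativity is actually used), and then runs an induction on the number of $\alpha$-classes. Each inductive step (\Cref{prop:claimBK12,thm:7}) either passes to a smaller $\alpha$-stable $\Omega$-reductionistic subgraph or produces the $\alpha$-blow-up of an $\OR(\sigma,\sigma)$ relation for a proper equivalence $\sigma$ on a finite set, which pp-constructs everything. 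The polymorphism conclusion (\Cref{cor:siggers}) is derived \emph{from} \Cref{thm:10}, not used to prove it.
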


As a corollary, we obtain a hardness criterion for conservative CSPs of smooth digraphs of algebraic length~1;  that is, CSPs of such graphs where unions of $1$-orbits 
can appear as constraints. As in the finite case discussed above, the $1$-orbits can be viewed as colours, and the unions of 1-orbits as lists of colours from which one has to choose. 
\begin{corollary}\label{corollary:listhom}
    Let $\urgraph =(\urdomain; \edge)$ be a smooth digraph of algebraic length $1$, and let $\Omega\leq \Aut(\urgraph)$ be oligomorphic.
    Consider the CSP where each instance $\sI$ is a finite digraph enhanced with a list of  allowed $1$-orbits of $\Omega$ for each of its vertices, and the question is whether the graph on $\sI$ can be $\urgraph$-coloured whilst respecting the lists. Then this problem is NP-hard unless $\urgraph/\Omega$ has a loop.
\end{corollary}

\begin{example}
Let $n\geq 3$, and let $\mathcal F$ be a finite set of $n$-coloured tournaments  (i.e.~tournaments which are expanded by unary predicates $P_1,\ldots,P_n$ which form a partition of its domain). Then there exists an $\omega$-categorical $n$-coloured digraph $\sA$ whose induced substructures are precisely those $n$-coloured digraphs which do not contain a copy of  any member of ${\mathcal F}$ and  which do not have any edge within any colour $P_i$, and such that  the 1-orbits of $\Omega:=\Aut(\sA)$ are precisely the colours $P_i$ (this follows from Fra\"{i}ss\'e's 
theorem~\cite{Fraisse53}; we skip the details). Let $\urgraph$ be the graph reduct of $\sA$, i.e.~the graph obtained by forgetting the colours $P_i$. 
The conservative CSP (or the list homomorphism problem)   
of $\urgraph$ with respect to $\Omega$  gets as an input a finite digraph $\sI$ and for every vertex $v$ a subset $\labels(v)$ of the colours of $\sA$, and asks for the existence of a homomorphism from $\sI$ into $\urgraph$ such that every vertex $v$ is sent into one of the colours specified in $L(v)$; in other words, a colouring of $\sI$ avoiding $\mathcal F$ as well as monochromatic edges.  This problem is NP-complete, as follows in particular from  Corollary~\ref{corollary:listhom}. A similar albeit slightly more involved construction exists for any homomorphism-closed set $\mathcal F$ of connected digraphs~\cite{CherlinShelahShi,Hubicka-Nesetril-All-Those}; these templates fall into the framework of the logic MMSNP~\cite{MMSNP-Journal}.
\end{example}

Using standard methods to derive algebraic consequences from loops in graphs, we then moreover see that any $\omega$-categorical template which is \emph{strongly 2-conservative} in the sense that not only unions of pairs of 1-orbits of  $\Omega$, but also of $k$-orbits are part of it, has a \emph{4-ary pseudo-Siggers polymorphism}  unless its CSP is NP-hard. To put this in context, while certain algebraic conditions corresponding to undirected graphs are known to hold for tractable $\omega$-categorical templates, this   condition is wide open in general, and one of the most intriguing open problems in this context. In particular, it is not even known to apply to tractable \emph{temporal CSPs}~\cite{tcsps-journal}, i.e.~CSPs of templates first-order definable in the order of the rational numbers.

\begin{restatable}{corollary}{Siggers}\label{corollary:siggers}
    Let $\sA$ be a structure, and let $\Omega\leq \Aut(\sA)$ be oligomorphic. Let $\sA'$ be the expansion of $\sA$ by all  unions $P\cup Q$, where $P,Q$ are $\Omega$-orbits 
    of $k$-tuples and $k\geq 1$.  
     Then one of the following holds:
    \begin{itemize}
        \item $\sA'$ has a 4-ary pseudo-Siggers polymorphism, i.e.~there are unary polymorphisms $u,v$ and a 4-ary polymorphism $s$ such that $us(a,r,e,a)=vs(r,a,r,e)$ holds for all values $a,r,e$;
        \item $\sA'$ pp-constructs  EVERYTHING,  and hence $\CSP(\sA')$ is NP-hard.
    \end{itemize}
\end{restatable}

We ought to remark here that expanding a template $\sA$ as in \Cref{corollary:siggers} will yield an NP-hard  CSP for many prominent templates previously studied in the literature: in fact, the expansion $\sA'$ cannot have any binary injective polymorphisms, in contrast to these  templates. One could argue, however, that this is not necessarily the typical case: recent results show the existence of templates within the range of the Bodirsky-Pinsker conjecture  that 
are even solvable by local consistency methods, but do not possess any binary injective polymorphism~\cite[Corollary 8.10]{MarimonPinsker24}. Moreover, if the inequality relation $\neq$ is pp-definable in $\sA$ (frequent in prominent examples, see e.g.~\cite[Proposition~19]{MottetPinskerSmooth}), 
then by the same proof one obtains the following variant of \Cref{corollary:siggers}: $\sA$ is only expanded by  $P\cup Q$ for orbits $P,Q$ of~\emph{injective} tuples, and in the first case, the   polymorphisms $u,v,s$ satisfy said equation only on injective tuples. In this variant, the first case applies 
to several tractable cases of  major complexity classifications, e.g.~for Graph-SAT or Hypergraph-SAT problems~\cite{BodPin-Schaefer-both,MottetNagyPinsker24,SmoothapproxJACM,MottetPinskerSmooth}; moreover, the satisfaction of the equation on \emph{all} tuples can then often be easily derived by different methods (such as pre-composition with binary \emph{canonical injections of type projection},  see e.g.~\cite{SmoothapproxJACM}). 
 Altogether, this hints at the possibility that  for $\omega$-categorical structures, the existence of a 4-ary pseudo-Siggers polymorphism might be equivalent to that of the 6-ary pseudo-Siggers polymorphism from~\cite{BartoPinskerDichotomy,Topo};  the latter has been  conjectured to separate the hard from the tractable CSPs within the Bodirsky-Pinsker conjecture.

\subsection{Outlook: a more liberal future?}\label{sect:liberalfuture}

If one were able to replace 2-conservativity by 1-conservativity 
in \Cref{thm:10}, then one would obtain a significant strengthening of the theorem of Barto and Pinsker for undirected graphs  from~\cite{BartoPinskerDichotomy,Topo}. This would imply the amazingly general statement that any $\omega$-categorical template which does not pp-construct all finite structures has polymorphisms $u,v,s$ as in \Cref{corollary:siggers}. Moreover, even the converse would be true for \emph{model-complete cores} (which we will not need here; any $\omega$-categorical template has the same CSP as a model-complete core~\cite{Cores-journal}). 
 Evidence that our proof lays the foundations for such  endeavour is provided, for example, by  the following proposition, which using concepts from the proof of our main theorem 
  solves the case where $\urgraph/\Omega$ has only three orbits and is isomorphic to the graph corresponding to the equations of \Cref{corollary:siggers} (see its proof). Already for this particular situation, no  proof was hitherto  known, and its undirected version was dedicated an own proof in~\cite[Proposition~5.3]{BartoPinskerDichotomy}.

\begin{restatable}{prop}{Siggerscrap}\label{prop:Siggerscrap}
Let $\urgraph=(\urdomain;\edge)$ be a smooth digraph and let $\Omega\leq \Aut(\urgraph)$ be oligomorphic, and such that $\urdomain / \Omega$ is isomorphic to the following graph.
    \begin{center}
        
    \begin{tikzcd}[ampersand replacement=\&]
                                 \& 2 \arrow[rd] \&   \\
0 \arrow[ru] \arrow[rr, no head] \&              \& 1
\end{tikzcd}.\end{center}
Then the 1-conservative expansion of  $\urgraph$ with respect to $\Omega$ pp-constructs EVERYTHING.
\end{restatable}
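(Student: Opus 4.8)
The plan is to show that the $1$-conservative expansion $\urgraph^{\ast}$ of $\urgraph$ with respect to $\Omega$ already pp-defines the three pairwise unions of $1$-orbits, so that it has the same pp-definable relations as the $2$-conservative expansion, and then to invoke \Cref{thm:10}. Write $P_0,P_1,P_2$ for the three $1$-orbits of $\Omega$ corresponding to the vertices $0,1,2$ of the displayed graph. Since $\urdomain/\Omega$ is \emph{isomorphic} to that graph, the only edges between orbits in $\urgraph$ are $P_0\edge P_2$, $P_2\edge P_1$, $P_0\edge P_1$, $P_1\edge P_0$; in particular there is no pseudoloop, and no edge $P_2\edge P_0$, $P_1\edge P_2$, or from any orbit into itself. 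As each $P_i$ is a \emph{single} $\Omega$-orbit and $\urgraph$ has no sources or sinks, I would first record that these incidences are \emph{uniform}: every $v\in P_0$ has an out-neighbour in $P_2$, one in $P_1$, and an in-neighbour in $P_1$; every $v\in P_2$ has an in-neighbour in $P_0$ and an out-neighbour in $P_1$, and no other neighbours; and every $v\in P_1$ has an out-neighbour in $P_0$ and in-neighbours in both $P_0$ and $P_2$. (This configuration also exhibits algebraic length $1$ --- the running hypothesis of \Cref{thm:10} --- already in $\urgraph/\Omega$, via the closed walk $0\edge 2\edge 1$ followed by the reverse of an edge $0\edge 1$, of net length $2-1=1$.)

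The heart of the argument would be the three identities
\[
 P_0\cup P_2=\{x:\exists y\,(x\edge y\wedge P_1(y))\},\qquad
 P_1\cup P_2=\{x:\exists y\,(y\edge x\wedge P_0(y))\},
\]
valid because the $P_1$-vertices are exactly the ones with no out-neighbour in $P_1$, and the $P_0$-vertices exactly the ones with no in-neighbour in $P_0$; and, crucially using that $P_2$ is ``sandwiched'' between $P_0$ and $P_1$,
\[
 P_0\cup P_1=\{x:\exists y\,\exists z\,(x\edge y\wedge y\edge z\wedge P_1(z))\},
\]
valid because a $P_0$-vertex reaches $P_1$ in two steps through $P_2$, a $P_1$-vertex reaches $P_1$ in two steps through $P_0$, whereas from a $P_2$-vertex every directed path of length two ends in $P_0$. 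Each of these is a routine check against the uniform incidence pattern above.

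Granting them, all three pairwise unions are pp-definable in $\urgraph^{\ast}$, so $\urgraph^{\ast}$ and the $2$-conservative expansion of $\urgraph$ with respect to $\Omega$ have the same relational clone, equivalently the same polymorphism clone, and hence pp-construct exactly the same finite structures. Since $\urgraph$ is a smooth digraph of algebraic length $1$, $\Omega$ is oligomorphic, and $\urgraph/\Omega$ has no loop, \Cref{thm:10} then yields that the $2$-conservative expansion pp-constructs every finite structure; hence so does $\urgraph^{\ast}$, which is what is wanted. (Equivalently, and more in the spirit of the statement, one may instead run the proof of \Cref{thm:10} for $\urgraph$ directly: for this three-orbit configuration every unary predicate that proof uses is among $P_0,P_1,P_2$ and the three unions just produced, all of which are already pp-definable in the $1$-conservative expansion.) The only genuinely new point over \Cref{thm:10}, and the step I expect to require the most care, is the last pp-definition, that of $P_0\cup P_1$: it is precisely here that the combinatorial position of the sandwiched orbit $P_2$, together with the presence of edges in both directions between $P_0$ and $P_1$, must stand in for a unary predicate that only $2$-conservativity would otherwise provide; the remaining ingredients --- the uniformity of the incidence pattern and the verification of the three displayed identities --- are straightforward.
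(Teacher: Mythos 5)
Your opening step, showing that all three pairwise unions $P_0\cup P_1$, $P_0\cup P_2$, $P_1\cup P_2$ are already pp-definable in the $1$-conservative expansion, is correct and closely parallels what the paper does (the paper uses the slightly different formulae $0\cup 1=(0+\edgeo)+\edgeo$, $0\cup 2=1+\ledgeo$, $1\cup 2=0+\edgeo$, but the substance is the same). However, the second half of your argument has a genuine gap: you then invoke \Cref{thm:10}, whose hypotheses require that $\urgraph$ \emph{itself} have algebraic length $1$, and you justify this by observing that $\urgraph/\Omega$ has a closed walk of algebraic length $1$. This inference is false. The statement of \Cref{prop:Siggerscrap} makes no algebraic-length assumption on $\urgraph$, and having the $\Omega$-quotient isomorphic to the displayed triangle does \emph{not} force $\urgraph$ to have algebraic length $1$. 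The paper's own \Cref{fig:windings} is exactly a picture of this phenomenon: there $\urgraph/\omega$ is the displayed three-vertex graph, yet the finer quotient $\urgraph/\alpha$ is a (bidirected-chord-free) six-cycle whose underlying undirected graph is bipartite, so every closed walk in it has even algebraic length; consequently no closed walk in $\urgraph$ can have algebraic length $1$ either, and \Cref{thm:10} simply does not apply. Your parenthetical alternative, ``run the proof of \Cref{thm:10} directly,'' inherits the same problem, since that proof also uses algebraic length $1$ of $\urgraph$ to produce the closed walk from which the weakly connected $\alpha$-finite component is extracted.

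The paper avoids this by a genuinely different route. After establishing, as you do, that all unions of pairs of $1$-orbits are pp-definable, it invokes \Cref{l:alphaonpairs} to pp-define the restriction of $\alpha:=\alpha(\urgraph,\Omega)$ to each pair of adjacent $\omega$-classes, and then exploits the specific three-orbit shape to assemble a pp-definition of $\alpha$ on the \emph{entire} domain (this is the step that makes the proposition nontrivial and which your proposal does not attempt). Having $\alpha$ pp-definable, $\urgraph$ pp-constructs the \emph{finite} digraph $\urgraph/\alpha$, and the proof concludes by applying \cite[Theorem 2]{symmetries}, which is the finite statement requiring only that the $\Omega$-quotient $\urgraph/\Omega=(\urgraph/\alpha)/\Omega$ have algebraic length $1$, precisely the hypothesis actually available. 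Put differently: the proposition is explicitly presented as evidence \emph{beyond} \Cref{thm:10} (see the discussion in \Cref{sect:liberalfuture}), and reducing it to a citation of \Cref{thm:10} would both be circular in spirit and, more importantly, fail on examples like \Cref{fig:windings} where $\urgraph$ has algebraic length greater than $1$.
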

Another, independent,  direction is to replace the assumption of algebraic length~1 by its weaker version modulo $\Omega$, i.e., require only  $\urgraph/\Omega$ to have  algebraic length~1. Resulting statements have been shown in~\cite{symmetries} for finite graphs, and turn out useful when comparing the strength of  algebraic conditions (see~\cite{olsak-loop,Pseudo-loop}). 

\subsection{A couple of thoughts on how to approach the paper}

This paper is organised as follows: We introduce definitions and notation in~\Cref{sect:prelims}. In \Cref{sect:poly}, we discuss the results on $4$-ary pseudo-Siggers polymorphisms derived from~\Cref{thm:10}. \Cref{sect:newideas} motivates the new methods developed for the proof of our main theorem.  We formalise these methods in~\Cref{sect:finitising,sect:defalpha,sect:reductionistic}. Finally, in \Cref{sect:summaryofproof} we  derive \Cref{thm:10} from two auxiliary propositions that will be proved in the subsections therebefore.
A sketch of the key ideas behind the main theorem is provided in \Cref{sect:structureofproof},  once the necessary terminology has been established to ensure comprehension of these concepts. Although the structure of the proof of \Cref{thm:10} is rather complex, the reader might find comfort in \Cref{fig:overview}, designed to provide guidance. We hope the figure, as well as this introductory text, might  appeal to the reader's reason; otherwise, perhaps the following, equivalent  lines might appeal to their heart:

\begin{center} 
    \emph{A graph  set forth, one day, to find}
    
    \emph{A finite thing of any kind}
    
    \emph{Directed 'twas,   yet it was  smooth}
    
    \emph{And at length one still in its youth}
    
    \emph{With pairs of orbits at its hand}
    
    \emph{It strived to build things on its land}
    
    \emph{But when it failed to build $K_3$}
    
    \emph{A pseudoloop it had to see.}
\end{center}

\section{Preliminaries}\label{sect:prelims}

\subsection{Relational structures}

For $n\geq 1$, we write $[n]$ for the set $\{1,\dots,n\}$. A \emph{relation} on a set $G$ is a subset  $ R \subseteq G^n$ for some $n \geq 1$. We refer to the number $n$ as the \emph{arity} of the relation $R$. A \emph{relational structure} (or simply \emph{structure}) is a tuple $\urgraph = (G; \mathcal R)$ consisting of a set $G$ and a family $\mathcal R$ of relations on $G$. A structure $\subgraph=(\urdomain; \mathcal S)$ is an \emph{expansion} of a  
structure $\urgraph=(\urdomain;\mathcal R)$ if $\mathcal S\supseteq \mathcal R$. 
For $\subdomain\subseteq\urdomain$, we write $\urgraph|_\subdomain=(\subdomain;\mathcal R)$ for the \emph{induced substructure} of $\urgraph$ on $\subdomain$. 
 We will understand an $n$-ary function $f$ on a set $\urdomain$ also as a function on $k$-tuples of elements from $\urdomain$ for any $k\geq 1$, where we apply $f$ componentwise.

Let $\urgraph=(\urdomain;\mathcal R)$ and $\subgraph=(\subdomain;\mathcal R')$ be relational structures such that the families $\mathcal R$ and $\mathcal R'$ use the same relational symbols. A mapping $f\colon \urdomain \rightarrow \subdomain$ is a \emph{homomorphism} from $\urgraph$ to $\subgraph$ if for every tuple $\tuple t$ contained in some  relation  of $\urgraph$ it holds that $f(\tuple t)$ belongs to the corresponding relation of $\subgraph$.
 Two structures $\urgraph$ and $\subgraph$ are \emph{homomorphically equivalent} if there exists a homomorphism both from $\urgraph$ to $\subgraph$, and from $\subgraph$ to $\urgraph$. 
A \emph{polymorphism} of a relational structure $\urgraph=(\urdomain;\mathcal R)$ is a function $f\colon \urdomain^n\rightarrow \urdomain$ for some $n\geq 1$ such that for every $R\in\mathcal R$ of some arity $k\geq 1$, and for all tuples $\tuple t_1,\dots,\tuple t_n\in R$, the tuple $f(\tuple t_1,\dots,\tuple t_n)$ is contained in $R$; we then say that $f$ \emph{preserves} $R$, or that $R$ is \emph{invariant} under $f$. An \emph{automorphism}  is a bijective unary polymorphism $f$ so that $f^{-1}$ is a polymorphism as well. 
We write $\Pol(\urgraph)$ and $\Aut(\urgraph)$ for the sets of all polymorphisms and automorphisms of $\urgraph$, respectively. 

\subsection{Formulae}

A first-order formula is a \emph{primitive positive formula} (or simply \emph{pp-formula}) over a family $\mathcal R$ of relational symbols if it is built using only predicates from $\mathcal R$, conjunction, and existential quantification. We abuse notation by using the same name for a relational symbol and its interpretation in a relational structure. A relation is \emph{pp-definable} in a relational structure $\urgraph=(G; \mathcal R)$  if it is first-order definable  by a pp-formula over $\mathcal R$. A structure is pp-definable in $\urgraph$ if all its relations are. We say that $\urgraph$ pp-defines $\subgraph$ \emph{with parameters} if $\subgraph$ is pp-definable in the expansion of $\urgraph$ by the singleton unary relations $\{a\}$ for every element $a$ of its base set. 
 
 The structure $\subgraph$ is a \emph{pp-power} of $\urgraph$ if there exists $n\geq 1$ such that the domain of $\subgraph$ is $\urdomain^n$, and its relations are pp-definable from $\urgraph$ (where a $k$-ary relation on $\urdomain^n$ is understood as a $kn$-ary relation on $\urdomain$).
We say that $\urgraph$ \emph{pp-constructs} $\subgraph$ if $\subgraph$ is homomorphically equivalent to a pp-power of $\urgraph$; $\urgraph$ pp-constructs EVERYTHING if it pp-constructs every finite structure.

A pp-formula $\phi$ is a \emph{tree}  if the following bipartite undirected graph is a tree, i.e.~connected without cycles: the vertices are the variables of $\phi$ (on one side) as well as the conjuncts of $\phi$ (on the other side), and there is an edge between a variable and a conjunct if and only if the variable appears in that conjunct. A relation is \emph{tree pp-definable} in a structure $\urgraph$ if it is pp-definable in $\urgraph$ by a pp-formula which is a tree.

For a relation $R\subseteq \urdomain^n$, the relation $\OR(R,R)$ is the relation defined by the formula $R(\tuple x)\vee R(\tuple y)$ which contains precisely all tuples $(\tuple s,\tuple t)\in \urdomain^{2n}$ for which either $\tuple s\in R$, or $\tuple t\in R$.

\subsection{Permutation groups}
For a permutation group $\Omega$ acting on a set $G$ and $a\in G$, we write $\Omega_a:=\{f\in\Omega \mid f(a)=a\}$ for the stabiliser of $a$ in $G$. For a subset $H$ of $G$, we set $\Omega|_H:=\{f|_H  \mid f \in \Omega, \  f(H)= H\}$ to be the \emph{restriction} of $\Omega$ to $H$. 
For any $k\geq 1$, we will call the orbits of $k$-tuples under the natural (componentwise) action of $\Omega$ on this set \emph{$k$-orbits} of $\Omega$. By an \emph{$\Omega$-orbit} we refer to a $k$-orbit for some $k$. We say that $\Omega$ is \emph{oligomorphic} if it has only finitely many $k$-orbits for every $k\geq 1$.
A structure $\urgraph$ is \emph{$\omega$-categorical} if $\Aut(\urgraph)$ is oligomorphic.
Let $\urgraph$ be a relational structure whose underlying set is $\urdomain$, and let $\Omega$ be a permutation group acting on $\urdomain$. 
For $k\geq 1$, the \emph{$k$-conservative expansion of $\urgraph$ with respect to $\Omega$} is the expansion of $\urgraph$ by all unary relations of the form $O_1\cup\dots\cup O_k$, where $O_i$ are $1$-orbits of $\Omega$. 
We will denote the $\Omega$-orbit equivalence on $\urdomain$
by $\omega$, i.e. the $1$-orbits of $\Omega$ are precisely the $\omega$-classes.   We say that a relation $R$ is \emph{invariant} under $\Omega$ (or simply \emph{$\Omega$-invariant}) if it is invariant under all functions in $\Omega$.
It is easy to see that  a relation first-order definable from a set of $\Omega$-invariant relations is $\Omega$-invariant itself. 
A structure  is a \emph{model-complete core} if it pp-defines all orbits of $k$-tuples under its own automorphism group for every $k\geq 1$ (this definition, though different  from the one commonly found in the literature, is equivalent to it; see~\cite[Theorem 4.5.1]{Book}).  For every $\omega$-categorical structure $\urgraph$, there exists a structure $\urgraph'$ which is a model-complete core and which is homomorphically equivalent to $\urgraph$. Moreover, $\urgraph'$ is again $\omega$-categorical and determined uniquely up to isomorphisms~\cite{Cores-journal}.

If $R$ is a  relation on a set $G$, and $\eta$ an equivalence relation on $G$, then $R$ induces a relation on the factor set $G/\eta$ which we shall again denote by $R$ to avoid cumbersome notation:  $R(A_1,\ldots,A_n)$ holds for $\eta$-classes $A_1,\ldots,A_n$ if there exist $a_i\in A_i$ such that $R(a_1,\ldots,a_n)$ holds for the original $R$. For a relational structure $\urgraph$, the quotient structure modulo an equivalence $\eta$ on its domain is denoted $\urgraph/\eta$. 
We say that a relation $R$ is \emph{$\eta$-stable} if $R(a_1,\ldots,a_n)$ implies $R(b_1,\ldots,b_n)$ whenever $\eta(a_i,b_i)$ holds for all $i\in \br n$. Note that a unary relation is $\eta$-stable if and only if it is a union of $\eta$-classes. The \emph{$\eta$-blow-up} of a relation $R$ on $A/\eta$, denoted by $R^{\eta}$, is the $\eta$-stable relation on $A$ which factors to $R$. 
 If $\Omega$ is a permutation group acting on the domain of $R$, then we write $R/\Omega$ for the factor of $R$ by $1$-orbits of $\Omega$; we use similar notation for structures.

\subsection{Digraphs and paths}\label{sec:relations}

A \emph{directed graph} (or simply \emph{digraph}) is a relational structure $\urgraph=(\urdomain;E)$ with a single binary relation $E$  (which will usually be  denoted by $\edge$). A digraph  is \emph{smooth} if the relation $E$ is \emph{subdirect}, i.e. if the projection to any one of its coordinates equals $G$. The \emph{smooth part} of a digraph $\urgraph$ is the largest subset $\subdomain\subseteq \urdomain$ such that $\urgraph|_\subdomain$ is smooth.
If $E$ is a binary relation on a set $G$, we write $E^{-1}$ for the set $\{(y,x) \mid (x,y)\in E\}$. If $E=\edge$, we prefer the notation $\ledge$ for $E^{-1}$. A \emph{loop} of $E$ is a pair $(a,a)\in E$. 
An \emph{(abstract) $E$-path} is a finite sequence $p=(E_1,\ldots,E_n)$ where each $E_i$ is either $E$ or $E^{-1}$. The \emph{algebraic length} of $p$ is the number of occurrences of $E$ minus the number of occurrences of $E^{-1}$; to make it crystal clear, $\sum_{i}\log_E(E_i)$.  A \emph{realisation} of $p$ is a sequence $(a_1,\ldots,a_{n+1})$ of elements of $G$ such that 
$E_i(a_{i},a_{i+1})$
 holds for all $i$. We also call $(a_1,\ldots,a_{n+1})$ an  \emph{$E$-walk} (along $p$). A path which has a realisation is \emph{realisable}. Note that if $E$ is the edge relation of a smooth digraph, then any $E$-path is realisable. If $E_i=E$ for all $i$, then we call $p$ an \emph{$E$-forward path}, and any realisation of it an \emph{$E$-forward walk}; an \emph{$E$-backward path  (or walk)} is defined accordingly. A \emph{cycle} is an $E$-forward walk $(a_1,\dots,a_{n+1})$ with $a_1=a_{n+1}$.  
A digraph $\urgraph=(\urdomain;E)$ is \emph{weakly connected} if for all $a,b\in\urdomain$ with $a\neq b$, there exists an $E$-walk from $a$ to $b$. 
 A \emph{weakly connected component} of $\urgraph$ is any subset $\subdomain\subseteq\urdomain$ which is maximal with respect to inclusion such that the induced digraph $\urgraph|_{\subdomain}$ is weakly connected. 
 The digraph $\urgraph$ has \emph{algebraic length $1$} if there exists a  
path of algebraic length $1$ that has a realisation of which the first and last element coincide.

Let $\Omega$ be a permutation group on the  domain of the digraph $\urgraph=(\urdomain;E)$. An \emph{$\Omega$-orbit-labelled $E$-}path is a pair $\pi=(p,(P_1,\ldots,P_{n+1}))$ where $p$ is an $E$-path of length $n$, and each $P_i$ is an $\omega$-class; we call $\pi$ an $\Omega$-labelling of $p$, and say that $\pi$ is \emph{from $P_1$ to $P_{n+1}$}. Usually, we shall only  write \emph{$\Omega$-orbit-labelled path when  $E$ is clear from the context.} 
A \emph{realisation} of $\pi$ is a realisation $(a_1,\ldots,a_{n+1})$ of $p$ with the additional property that $a_i\in P_i$ for all $i$; if a realisation exists, then we say that $\pi$ is \emph{realisable}. If $p,q$ are $E$-paths, then we write $p+q$ for the path obtained by concatenation of the two. If $p=(E_1,\ldots,E_n)$, then we set $-p:=(E_n^{-1},\ldots,E_1^{-1})$. 
Finally,  $p-q:=p+(-q)$. We say that $p$ is \emph{symmetric} if it is of the form $q-q$. We use similar notation for $\Omega$-orbit-labelled paths; note, however, that the sum of $\Omega$-orbit-labelled paths $\pi,\rho$ is only well-defined if the last orbit of $\pi$ is equal to the first orbit of $\rho$. An \emph{extension} of a path $p=(E_1,\ldots,E_n)$ is obtained by iteratively inserting either $(E,E^{-1})$ or $(E^{-1},E)$ at arbitrary places of the sequence; an extension of an $\Omega$-orbit-labelled path is obtained via the same process by replacing an orbit label $P_i$ by $P_i,P',P_i$ for some $\omega$-class $P'$ at the appropriate place. A \emph{relabelling} of $\pi=(p,(P_1,\ldots,P_{n+1}))$ is any $\Omega$-orbit-labelled path $\pi'=(p,(P_1',\ldots,P_{n+1}'))$.
  The \emph{merge} of $\pi$ and $\pi'$ then is the pair $\pi\merge \pi':=(p,(P_1\cup P_1',\ldots,P_{n+1}\cup P_{n+1}'))$, with the obvious semantics of a realisation.
To make notation more readable,  
for such a merge $\pi\merge \pi'$ we will sometimes write   $$\binom{P_1}{P'_1} E_1 \binom{P_2}{P'_2} E_2\; \cdots\; E_{n} \binom{P_{n+1}}{P'_{n+1}}. $$ The \emph{relation induced} by a path $p$ is the binary relation $\gamma_p \subseteq \urdomain^2$ containing all pairs  $(a,b)$ for which there exists a realisation of $p$ starting at $a$ and ending at $b$. Similarly, we define the relation  $\gamma_{\pi}$ induced by an $\Omega$-orbit-labelled path $\pi$,  and the relation  $\gamma_{\pi\vee \pi'}$ induced by a merge. Note that $\gamma_p$ is pp-definable from $E$, and  that $\gamma_\pi$ is pp-definable from $E$ together with  the $\omega$-classes; in order to pp-define $\gamma_{\pi\merge\pi'}$, we need $E$ and unions of pairs of $\omega$-classes.

If $R_1,\dots,R_k$ are binary relations, we write $R_1+\cdots + R_k$ for the relation containing all pairs $(a_0,a_k)$ for which there exist $a_1,\dots,a_{k-1}$ with $(a_{i-1},a_i)\in R_i$ for every $i\in[k]$. 
If $R_1=\cdots=R_k=E$, we denote this sum by $E^k$.
If $E=\edge$, we also write $\edgek$ for  $E^k$. 
We set $E^{-k}:=(E^k)^{-1}$. For a unary relation $S$, we write $S+E$ for the relation $\{b \mid \exists a\in S\; (a,b)\in E\}$; note that this is a pp-definition.
 For better readability, we set $a+E:=\{a\}+E$. 
For $k,n\geq 1$, the $(k,n)$-$E$-fence is the $E$-path $q+\cdots+q$, with $n$ occurrences of $q$, where $q=p-p$ for $p=(E,\ldots,E)$,  with $k$ occurrences of $E$ (so the path has length $2kn$).
 If $E$ is clear from the context, we denote the relation induced by the $(k,n)$-$E$-fence by $\fence{k}{n}$. Clearly, $\fence{k}{n}\subseteq \fence{k}{n+1}$. 
 The \emph{k-linkedness relation} is  $\bigcup_n\fence{k}{n}$; it is an equivalence relation, and if  $\urgraph=(\urdomain;E)$ is finite, then it is equal to some $\fence{k}{n}$ and hence tree pp-definable in  $\urgraph$~\cite{cyclicterms}. 
  We say that $E$ is \emph{$k$-linked} if this equivalence relation is full. 

\section{Lifting the Finite to the Infinite}\label{sect:lifting}

\subsection{Polymorphisms}\label{sect:poly}

While some algebraic invariants of finite relational structures have been shown to have \emph{pseudo-}counterparts in the context of $\omega$-categoricity (for example, the existence of a \emph{$6$-ary Siggers polymorphism} as in~\cite{Siggers} corresponding to the existence of a \emph{$6$-ary pseudo-Siggers polymorphism} as presented in~\cite{BartoPinskerDichotomy, Topo}), and for others it is known that these counterparts do not apply (for example, there exist $\omega$-categorical structures that do not pp-construct EVERYTHING, and
 yet do not admit \emph{pseudo-WNU polymorphisms} \cite{symmetries},  contrasting the existence of \emph{WNU polymorphisms} in finite structures as in~\cite{MarotiMcKenzie}), the question of whether or not algebraic invariants can be lifted remains unresolved for many cases. As a consequence of~\Cref{thm:10}, we are now able to derive a first pseudo-version of~\cite[Theorem 2.2]{JMMM} asserting the existence of a \emph{$4$-ary Siggers polymorphism } for any finite structure that fails to pp-construct EVERYTHING.

\Siggers*

\begin{proof}[Proof of~\Cref{corollary:siggers} from~\Cref{thm:10}]
  Let $k \geq 1$, and let  $\tuple a,\tuple e,\tuple r$ be arbitrary $k$-tuples of elements of $A$.
Consider the digraph $\subgraph$ on these three elements which has edges from $\tuple a$ to $\tuple r$, from $\tuple r$ to $\tuple a$, from $\tuple a$ to $\tuple e$, and from $\tuple e$ to $\tuple r$. Let $\urgraph=(\urdomain;\edge)$ be the smallest digraph containing this digraph that is invariant under the action of $\Pol(\sA')$ on $A^k$. More precisely, the domain $\urdomain$ of $\urgraph$ consists of all tuples in $A^k$ of the form $g(\tuple b_1,\dots,\tuple b_n)$, where $g\in\Pol(\sA')$, $n$ is its arity, and  $\tuple b_i\in \{\tuple a,\tuple e,\tuple r\}$ for every $i \in \br n$; the  relation $\edge$ is given by  $g(\tuple b_1,\dots,\tuple b_n)\edge g(\tuple c_1,\dots,\tuple c_n)$ if there is an edge from $\tuple b_i$ to $\tuple c_i$ in $\subgraph$ for every $i\in[n]$. As by construction  $\edge$ is preserved by all polymorphisms,  it is pp-definable in $\sA'$~\cite{BodirskyNesetrilJLC}. Hence, $\urgraph$ is a pp-power of $\sA'$. 
Moreover, the action of $\Omega$ on $k$-tuples is an oligomorphic subgroup $\Omega$ of $\Aut(\urgraph)$ (which contains at least the action of $\Aut(\sA')$ on $k$-tuples). We may thus apply~\Cref{thm:10}.
    
If for any choice of $k,\tuple a,\tuple e,$ and $\tuple r$, the second case of the theorem occurs, then  we get that $\urgraph$ together with the relations $P\cup Q$, where $P,Q$ are orbits of $k$-tuples of the action of $\Omega$ on $k$-tuples, pp-constructs EVERYTHING; since $\urgraph$ is itself pp-constructible from $\sA'$, we are done.
    Hence, assume that  $\urgraph / \Omega$ contains a loop for all $k,\tuple a,\tuple e,\tuple r$. This means that there exist $n\geq 1$, an $n$-ary polymorphism $s' \in \Pol(\sA')$, for every $i \in \br n$ tuples $\tuple b_i,\tuple c_i\in \{\tuple a,\tuple e,\tuple r\}$ with an edge from $\tuple b_i$ to $\tuple c_i$ in $\subgraph$, and $f \in\Omega$ such that $fs'(\tuple b_1,\dots,\tuple b_n)=s'(\tuple c_1,\dots,\tuple c_n)$. Let us set $s(x_1,x_2,x_3,x_4):=s'(y_1,\dots,y_n)$, where $y_i=x_1$ if $(\tuple b_i,\tuple c_i)=(\tuple a,\tuple r)$, $y_i=x_2$ if $(\tuple b_i,\tuple c_i)=(\tuple r,\tuple a)$, $y_i=x_3$ if $(\tuple b_i,\tuple c_i)=(\tuple e,\tuple r)$, and $y_i=x_4$ if $(\tuple b_i,\tuple c_i)=(\tuple a,\tuple e)$. Then $s\in \Pol(\sA')$ satisfies $fs(\tuple a,\tuple r,\tuple e,\tuple a)=s(\tuple r,\tuple a,\tuple r,\tuple e)$,  i.e.~$s$ witnesses the 4-ary pseudo-Siggers identity on the tuples $\tuple a,\tuple e,\tuple r$. Since we obtain such ``local'' witness for arbitrary tuples, a standard compactness argument using the oligomorphicity of $\Omega$  yields the first case of this corollary. Such compactness argument is given, for example, in the  proof of \cite[Lemma 4.2]{Topo} (for the 6-ary pseudo-Siggers identity); the appearance of both $u$ and $v$, which might not be permutations, instead of $f$ happens in the argument and cannot be avoided~\cite{BP-canonical}.
\end{proof}

Our hope that the existence of a $4$-ary pseudo-Siggers polymorphism may in fact characterise \emph{all} $\omega$-categorical structures that do not pp-construct EVERYTHING is sparked, additionally, by~\Cref{prop:Siggerscrap}:  
in assuming that a digraph looks, modulo orbit-equivalence, exactly like the digraph corresponding (as the graph $\subgraph$ in the proof of \Cref{corollary:siggers} above) to  the 4-ary Siggers identity, our result  can be viewed as a base case for such a statement. In other words, the knowledge of the particular shape of the digraph $\urgraph /\Omega$ allows us to drop the $2$-conservativity assumption from \Cref{thm:10} and replace it by $1$-conservativity. In order to obtain the improvement of~\Cref{corollary:siggers} where $\sA$ is not expanded by unions of $\Omega$-orbits,  
it is sufficient to prove the variant  of~\Cref{thm:10} where  $\urgraph$ satisfies the stronger assumption of containing the 3-element graph in~\Cref{prop:Siggerscrap} as a (not necessarily induced) substructure, but assuming only $1$-conservativity instead of $2$-conservativity.
In the proof of~\Cref{prop:Siggerscrap}, we build upon the new method of \emph{finitising digraphs}, as employed in the proof of \Cref{thm:10} (see the outline below and \Cref{sect:finitising}).

\subsection{Revisiting established methods}\label{sect:newideas}

Our endeavours of lifting results from finite to infinite structures align with an active field of research that has thrived through the application of various valuable techniques. Building upon recent results in this direction, we make use of some of the methods presented in~\cite{BartoKozikNiven, symmetries} for finite digraphs (or digraphs with finite weakly connected components), 
and apply them to the realm of $\omega$-categorical digraphs. In particular, in contrast to the setting of the most general result in this direction hitherto, \cite[Theorem 10]{symmetries}, our new approach must address weakly connected components that are not finite.

When working with an $\omega$-categorical structure, it is natural to apply  methods designed for finite structures to  a quotient of it that is either finite 
or at least sufficiently tame (e.g., has finite weakly connected components), and then infer information on the original structure.
However, the choice of such a quotient must be made judiciously to ensure it retains sufficient structural information. While taking the quotient  modulo $\omega$, i.e.~the $1$-orbit equivalence with respect to $\Omega$,  seems to be the natural choice, we might lose, for example, all information about the connectivity of a graph.
To this end, given a digraph $\urgraph=(\urdomain;\edge)$ as in the assumptions of~\Cref{thm:10}, and an oligomorphic subgroup $\Omega$ of its automorphism group, we strive to find a suitable refinement $\alpha$ of $\omega$, which shall in particular have the same weakly connected components as $\urgraph$. Clearly, we want such a refinement to be compatible with $\Omega$, whose orbits are our basic building blocks; that is, 
\begin{enumerate}[label = A1)]
    \item $\alpha$ is $\Omega$-invariant.
\end{enumerate}
This implies, in particular, that $\Omega$ acts on $\alpha$-classes, and that it makes sense to speak of a realisation of an $\Omega$-orbit-labelled path $\pi$ in $\urgraph/\alpha$. Moreover, we would like such realisations to be unique in the following sense:
\begin{enumerate}[label = A2)]
    \item for every realisable $\Omega$-orbit-labelled path $\pi$ whose first label is the $\omega$-class $O$, and for every $A\in O / \alpha$, there exists a unique realisation of $\pi$ in $\urgraph / \alpha$ starting in $A$.
\end{enumerate}
Finally, in order to make use of the techniques developed for digraphs with finite weakly connected components, we require:
\begin{enumerate}[label=A3)]
    \item the $ \alpha$-quotient of every weakly connected component of $\urgraph$ is finite.
\end{enumerate} 

For the precise formulation of  properties A1) -- A3), see \Cref{defi:finitises}. \Cref{fig:windings} provides an example for the interaction between $\omega$ and $\alpha$ for $\Omega=\Aut(\urgraph)$. Here, every $\omega$-class $O \in \{0,1,2\}$ consists of two $\alpha$-classes $A_O, B_O$. The left-hand-side of the figure depicts the digraph $\urgraph/\omega$, the digraph on the right-hand-side $\urgraph/\alpha$.

We remark that our construction shows a connection with the seemingly unrelated area of $2$-Prover-$1$-Round Games: If $\urgraph$ is itself weakly connected, then for any refinement $\alpha$ of $\omega$ satisfying items A2) and A3), $\urgraph / \alpha$ can be seen as an instance of a \emph{unique game} for the digraph $\urgraph / \omega$ (for more details on the definition and broader context, see e.g.~\cite{KhotUGC}). Indeed, to every vertex $O$ of $\urgraph / \Omega$, we associate the set $\labels(O)$ of labels consisting of all $\alpha$-classes contained in $O / \alpha$. Note that the cardinality of this set is finite by A3) and by the weak connectivity of $\urgraph$. 
Moreover, to any edge $O\edgeo P$ of $\urgraph / \omega$, we can associate the bijection between $\labels(O)$ and $\labels(P)$ given by item A2) applied to the path $O\edgeo P$. It is easy to see that this unique game has a satisfying labelling if and only if $\omega=\alpha$.
\begin{figure}[t]
\centering
\begin{tikzcd}[column sep={1cm,between origins}, row sep={1.732050808cm,between origins}]
                                 &              &   &                & A_1 \arrow[rr, no head]            &  & B_0 \arrow[rd] &                \\
                                 & 2 \arrow[rd] &   & A_2 \arrow[ru] &                                    &  &                & B_2 \arrow[ld] \\
0 \arrow[rr, no head] \arrow[ru] &              & 1 &                & A_0 \arrow[lu] \arrow[rr, no head] &  & B_1              &               
\end{tikzcd}
    \caption{$\omega$ vs. $\alpha$}\label{fig:windings}
\end{figure}
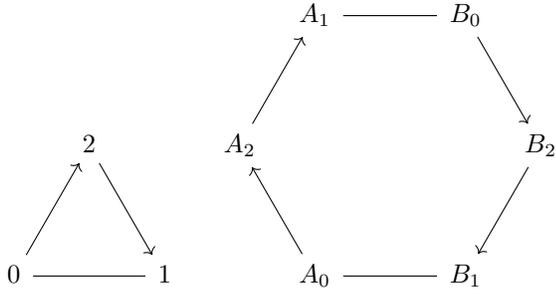
We will construct a suitable refinement $\alpha$ of $\omega$ satisfying, in particular, properties A1) -- A3) outlined above in~\Cref{sect:finitising}. However, further challenges will arise in the attempt to lift results from  the factor graph $\urgraph/\alpha$.
This is due to the fact that $\alpha$ will, in general, not be pp-definable on the whole domain of $\urgraph$. In particular, pp-constructibility from $\urgraph / \alpha$ does not translate to pp-constructibility from $\urgraph$. One can approach this obstacle by using only the restricted form of tree pp-definitions, which are  liftable to pp-definitions in $\urgraph$ in the following sense: if a tree pp-formula defines a relation $R$ in $\urgraph / \alpha$, then it defines a relation $S$ in $\urgraph$ such that its $\alpha$-factor coincides with $R$ (\Cref{lem:liftingTreeDefs}). 
However, there are many relations with the latter property, some of which might have less expressive power than $R$. 
This issue is what motivates the concepts of \emph{$\alpha$-stability} and \emph{$\alpha$-blow-ups}: 
If $S$ is an $\alpha$-stable relation in $\urgraph$, then it is the $\alpha$-blow-up of its $\alpha$-factor, and homomorphically equivalent to it.

Assuming that $\urgraph$ is weakly connected, whence in particular, $\urdomain$ contains only finitely many $\alpha$ classes, we will proceed
by induction on the number of $\alpha$-classes and
successively shrink the domain $\urdomain$ of $\urgraph$ to 
$\alpha$-stable subsets 
$\subdomain\subsetneq\urdomain$. The set $\subdomain$ will always be a class of an $\Omega$-invariant equivalence relation (a \emph{reductionistic set}, see~\Cref{sect:reductionistic}), which will guarantee the following:
\begin{itemize}
    \item[R1)] the orbits of the permutation group $\Omega|_H$ obtained by restricting $\Omega$ to $H$ are the restrictions of the $\Omega$-orbits; and consequently
    \item[R2)] the restriction of $\alpha$ to $H$ retains properties A1)-A3) with respect to $\Omega|_H$.
\end{itemize}

Looking at~\Cref{fig:windings}, it is easy to see that, for instance, the set $A_0 \cup A_1 \cup B_2$ is $\Omega$-reductionistic, whereas $A_0\cup A_1 \cup B_2 \cup B_0$ is not.

\subsection{Structure of the proof of \texorpdfstring{\Cref{thm:10}}{Theorem~\ref{thm:10}}}\label{sect:structureofproof}

We now present a  sketch of the core ideas involved in proving the main result, and of the organisation of the sections pertaining to this proof. While in~\Cref{sect:finitising}, we construct a refinement $\alpha$ of $\omega$ subject to the properties outlined above,  \Cref{sect:thm10} is devoted to proving~\Cref{thm:10}.  The proof is guided by the case distinction presented in~\Cref{fig:overview}, which may help maintaining clarity throughout. Assuming that $\urgraph / \Omega$ does not have a loop, we show that the second item of \Cref{thm:10} applies. Our source of hardness, i.e. a relation which pp-constructs EVERYTHING, will be the $\alpha$-blow-up of a relation $\OR(\sigma,\sigma)$ for a proper equivalence $\sigma$ on a finite subset of $\urdomain / \alpha$. We show that if such a relation $\OR(\sigma,\sigma)$ does not exist, we can inductively reduce to suitable  subgraphs $\subgraph$ of $\urgraph$, and that this reduction process can only be performed finitely many times.

In the course of the proof, we restrict ourselves to a digraph $\urgraph$ such that its $\alpha$-factor is $k$-linked for some $k\geq 1$, and we take  $k$ to be the smallest number with this property. The proof now splits into two cases depending on whether or not the $k$-ary composition $\edgek$ of the edge relation $\edge$ with itself 
gives  a full relation on $\urdomain / \alpha$.  If it does, 
we can replace $\urgraph$ by a proper subgraph still satisfying the prerequisites (\Cref{prop:claimBK12}).
In the other case (\Cref{thm:7}), as outlined before, we seek to construct  $\alpha$-stable relations (\Cref{prop:Marcinsmagic,prop:trickT2}) that allow us to invoke adjusted arguments of~\cite[Theorem 7]{symmetries}. Compared to these arguments, the main difference is that we additionally need to ensure that the relations we pp-define from $\edge$ (which is itself not $\alpha$-stable) together with the previously constructed $\alpha$-stable relations 
remain $\alpha$-stable. Finally, we may again reduce to a suitable proper subgraph, or end up with an $\alpha$-blow-up of a relation $\OR(\sigma, \sigma)$. 

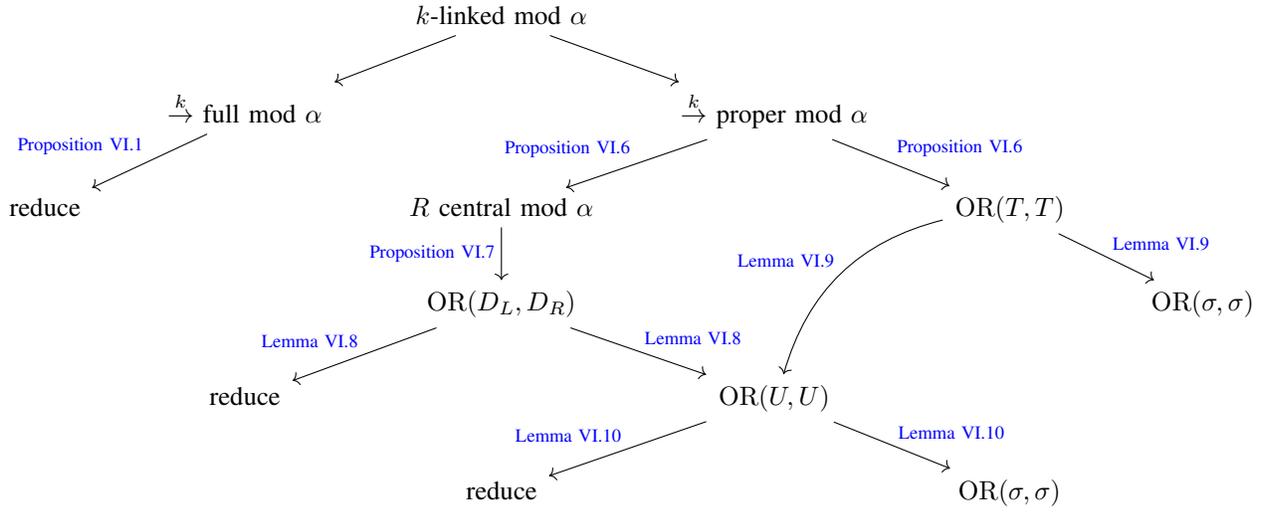
\begin{figure*}[t]
    \centering

\begin{tikzcd}
                    &                                                                                            & \textnormal{$k$-linked mod $\alpha$} \arrow[rd, "\textnormal{ \Cref{thm:7}}"] \arrow[ld]                                   &                                                                                                                                                          &                                                                                                                      &                       \\
                    & \textnormal{$\edgek$ full mod $\alpha$} \arrow[ld, "\textnormal{ \Cref{prop:claimBK12}}"'] &                                                                                              & \textnormal{$\edgek$ proper mod $\alpha$} \arrow[ld, "\textnormal{\Cref{prop:Marcinsmagic}}"'] \arrow[rd, "{\textnormal{\Cref{prop:Marcinsmagic}}}"] &                                                                                                                      &                       \\
\textnormal{reduce} &                                                                                            & \textnormal{$R$ central mod $\alpha$} \arrow[d, "\textnormal{\Cref{prop:trickT2}}"']              &                                                                                                                                                          & {\OR(T,T)} \arrow[rd, "\textnormal{\Cref{lemma:21+22}}"] \arrow[ldd, "\textnormal{\Cref{lemma:21+22}}"', bend right] &                       \\
                    &                                                                                            & {\OR(D_L, D_R)} \arrow[ld, "\textnormal{\Cref{lemma:20}}"'] \arrow[rd, "\textnormal{\Cref{lemma:20}}"] &                                                                                                                                                          &                                                                                                                      & {\OR(\sigma, \sigma)} \\
                    & \textnormal{reduce}                                                                        &                                                                                              & {\OR(U,U)} \arrow[ld, "\textnormal{\Cref{lemma:23}}"'] \arrow[rd, "\textnormal{\Cref{lemma:23}}"]                                                                  &                                                                                                                      &                       \\
                    &                                                                                            & \textnormal{reduce}                                                                          &                                                                                                                                                          & {\OR(\sigma, \sigma)}                                                                                                &                      
\end{tikzcd}
\caption{Overview of the proof}\label{fig:overview}
   
\end{figure*}

\section{Finitising Digraphs}

\subsection{Constructing a finitising equivalence}\label{sect:finitising} 

To construct a refinement $\alpha\subseteq \omega$ adheres to the properties outlined in the previous section, we proceed $\omega$-class by $\omega$-class in building the desired refinement $\alpha$. To this end, fix an 
$\omega$-class~$O$ and consider the relations $\gamma_\pi$ induced by realisable $\Omega$-orbit-labelled paths $\pi$ which are symmetric and start (and hence also end) at $O$.
Clearly, any such relation is a reflexive and symmetric binary relation on $O$. Moreover, it is pp-definable from $\urgraph$ and $\omega$-classes. Using oligomorphicity of~$\Omega$, we will now show that there exists $\pi$ such that $\gamma_\pi$ is maximal among all possible choices of $\pi$. For this $\pi$, the relation $\gamma_\pi$ turns out to be also transitive, and whence an equivalence relation on~$O$. 

\begin{restatable}{lemma}{alphamaximal}\label{lemma:alpha}
    Let $\urgraph =(\urdomain; \edge)$ be a digraph, let $\Omega\leq \Aut(\urgraph)$ be oligomorphic, and let $O$ be an $\omega$-class. Then there exists a symmetric  $\Omega$-orbit-labelled path $\pi$ starting at $O$ such that $\gamma_\pi$ is largest  (with respect to inclusion) amongst all possible choices of $\pi$.
\end{restatable}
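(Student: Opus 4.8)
The plan is to prove \Cref{lemma:alpha} by a compactness/finiteness argument exploiting oligomorphicity of $\Omega$, following the strategy sketched in the surrounding text: among all symmetric realisable $\Omega$-orbit-labelled paths $\pi$ starting at $O$, the induced relations $\gamma_\pi \subseteq O^2$ are reflexive and symmetric, and one wants to show the family $\{\gamma_\pi\}$ has a largest element under inclusion. The key observation is that although there are infinitely many such paths $\pi$, each relation $\gamma_\pi$ is invariant under the stabiliser of $O$ in $\Omega$ (more precisely, it is a union of orbits of the action of $\Omega_{\{O\}}$, the setwise stabiliser of $O$, on pairs from $O$), and by oligomorphicity there are only finitely many such unions. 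Hence $\{\gamma_\pi : \pi \text{ symmetric, realisable, from } O \text{ to } O\}$ is a \emph{finite} family of subsets of $O^2$.

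The main step is then to show this finite family is \emph{directed upwards}: given two symmetric realisable $\Omega$-orbit-labelled paths $\pi_1, \pi_2$ starting (and ending) at $O$, I would exhibit a single symmetric realisable $\Omega$-orbit-labelled path $\pi$ from $O$ to $O$ with $\gamma_\pi \supseteq \gamma_{\pi_1} \cup \gamma_{\pi_2}$. The natural candidate is a ``merge with padding'': since $\pi_1$ and $\pi_2$ need not have equal length, first extend both (by inserting $(E,E^{-1})$ or $(E^{-1},E)$ at appropriate places, which does not change the induced relation because the digraph is used via realisations and such insertions are always realisable once a realisation of the original path is given, keeping the same endpoints) to obtain extensions $\pi_1', \pi_2'$ of the same length and with the same sequence of $E$/$E^{-1}$ entries; then take the merge $\pi := \pi_1' \merge \pi_2'$. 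A realisation of $\pi_1$ between $a,b\in O$ extends to a realisation of $\pi_1'$ between $a,b$, which is in particular a realisation of $\pi$ between $a,b$ since the orbit labels of $\pi$ are supersets of those of $\pi_1'$; hence $\gamma_{\pi_1}\subseteq\gamma_{\pi_1'}\subseteq\gamma_\pi$, and symmetrically for $\pi_2$. Also $\pi$ is symmetric: if $\pi_1 = \rho_1 - \rho_1$ and $\pi_2 = \rho_2 - \rho_2$, one can choose the extensions so that $\pi_i'$ remains of the form $\rho_i' - \rho_i'$, and a componentwise union of two palindromic label-sequences is palindromic, so $\pi = \rho' - \rho'$ where $\rho'$ is the common underlying path with labels the componentwise unions. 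Finally $\pi$ is realisable because $\pi_1$ is. Iterating over the finitely many values $\gamma_\pi$ takes, directedness gives a path $\pi^*$ with $\gamma_{\pi^*}$ containing every $\gamma_\pi$, i.e.\ largest.

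I expect the main obstacle to be the bookkeeping in the ``padding'' step: one must verify that two symmetric paths can be extended to a common underlying $E$-path \emph{while preserving the symmetric form $q-q$} and without altering the induced relations, and that the merge of the resulting $\Omega$-orbit-labelled paths is well-defined (the labels at matched positions are simply unioned, and at the shared middle/endpoints the labels agree since they are all $O$) and still symmetric. A clean way to handle this uniformly is: given symmetric $\pi_1 = (p_1 - p_1, \dots)$ and $\pi_2 = (p_2 - p_2, \dots)$, note $p_1 - p_1$ and $p_2 - p_2$ both have algebraic length $0$; pad $p_1$ to $p_1 + (p_2 - p_2)$-style combinations — concretely, replace $p_i$ by $p_i$ followed by enough trivial backtracks so that the two ``half-paths'' become identical as abstract $E$-paths (this is always possible for abstract $E$-paths of algebraic length summing appropriately, since $E$ is not assumed smooth here but realisability is only required of $\pi_1,\pi_2$ themselves, and extensions of a realisable path are realisable). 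Once both half-paths are literally the same sequence $q$ of $E$'s and $E^{-1}$'s, set $\pi := (q - q, \text{componentwise union of the two label sequences})$; symmetry and the inclusion $\gamma_\pi \supseteq \gamma_{\pi_1}\cup\gamma_{\pi_2}$ are then immediate, and transitivity of $\gamma_{\pi^*}$ for the maximal $\pi^*$ — asserted in the text but not part of this lemma — will follow afterwards by one more application of directedness (concatenating $\pi^*$ with itself). For the present lemma, only the existence of the largest element is needed, so I would stop at the directedness-plus-finiteness argument.
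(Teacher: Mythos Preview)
Your overall strategy --- finiteness via oligomorphicity plus upward directedness --- is exactly right and matches the paper. However, your directedness argument has a genuine gap: the merge $\pi := \pi_1' \merge \pi_2'$ is \emph{not} an $\Omega$-orbit-labelled path. By definition (see \Cref{sec:relations}), the labels of a merge are unions $P_i \cup P_i'$ of two $\omega$-classes, not single $\omega$-classes; the relation $\gamma_{\pi_1'\merge\pi_2'}$ is pp-definable only from $\edge$ together with \emph{pairs} of $\omega$-classes. Hence $\gamma_\pi$ need not lie in the family $\{\gamma_\mu : \mu \text{ a symmetric }\Omega\text{-orbit-labelled path from }O\}$, and the inclusion $\gamma_\pi \supseteq \gamma_{\pi_1} \cup \gamma_{\pi_2}$ does not establish directedness of that family. (Decomposing $\gamma_{\pi_1'\merge\pi_2'}$ as a union of $\gamma_\rho$ over $\Omega$-orbit-labelled relabellings $\rho$ does not help either: such $\rho$ are in general neither symmetric nor from $O$ to $O$.) Your padding step is also more delicate than you acknowledge: two abstract $E$-paths do admit a common extension when their algebraic lengths agree, but this needs an argument, and an extension of a \emph{realisable} $\Omega$-orbit-labelled path is realisable only if the inserted labels are chosen compatibly with the existing realisation.

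The paper's route avoids all of this by using \emph{concatenation} instead of merging. For symmetric $\Omega$-orbit-labelled paths $\pi,\rho$ from $O$ to $O$, the key observation is that a symmetric $\rho=\sigma-\sigma$ can always be realised as a closed walk from any prescribed point $b$ in its first label $O$ (realise $\sigma$ from $b$ to some $c$, which is possible since $\Omega$ is transitive on $O$, then reverse to get $-\sigma$ from $c$ back to $b$). Hence any realisation of $\pi$ from $a$ to $b$ extends to one of $\pi+\rho$ from $a$ to $b$, so $\gamma_{\pi+\rho}\supseteq\gamma_\pi$; symmetrically $\gamma_{\pi+\rho}\supseteq\gamma_\rho$. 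If you want the upper bound to be literally of the form $q-q$, take $(\pi+\rho)-(\pi+\rho)$; the same argument gives $\gamma_{(\pi+\rho)-(\pi+\rho)}\supseteq\gamma_\pi\cup\gamma_\rho$, and this path \emph{is} a symmetric $\Omega$-orbit-labelled path from $O$ to $O$.
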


Picking $\pi$ as in~\Cref{lemma:alpha}, we indeed obtain an equivalence relation: if $(a,b), (b,c)\in\gamma_\pi$, then $(a,c)\in \gamma_{\pi+\pi}=\gamma_\pi$. This justifies the following definition.

\begin{definition}\label{defi:alpha}
    Let $\urgraph =(\urdomain; \edge)$ be a digraph, and let $\Omega\leq \Aut(\urgraph)$ be oligomorphic. For each 
    $\omega$-class $O$, we set $\alpha_O:=\gamma_\pi$ for $\pi$ a symmetric $\Omega$-orbit-labelled path starting at $O$ such that $\gamma_\pi$ is largest.  
    We set $$\alpha(\urgraph,\Omega):=\bigcup\limits_{O\text{ is an }\Omega\text{-orbit}} \alpha_O.$$
\end{definition}

Let us take note of some immediate consequences of the definition. First of all, $\alpha(\urgraph,\Omega) $ is a refinement of $\omega$ by construction, and every $\alpha$-class $A$ is pp-definable from $\urgraph$ together with $\omega$-classes and an arbitrary element $a\in A$.  Observe that  if $\pi$ is the symmetric $\Omega$-orbit-labelled path with $\alpha_O=\gamma_\pi$ for an $\omega$-class $O$, and $\pi'$ is an extension of $\pi$,  then also $\gamma_{\pi'}=\alpha_O$ by maximality of $\pi$. Moreover, every weakly connected component of $\urgraph$ is $\alpha(\urgraph,\Omega)$-stable as, by definition, any two $\alpha(\urgraph,\Omega)$-related elements are $\edge$-connected by a realisation of $\pi$. Finally, 
we remark that though every restriction of $\alpha(\urgraph,\Omega)$ to an $\omega$-class  is pp-definable from $\urgraph$ and $\omega$-classes, $\alpha(\urgraph,\Omega)$ will in general not be 
pp-definable on the entire  domain. The following definition formalises the properties A1)-A3) discussed in \Cref{sect:newideas}. Afterwards, in \Cref{lem:alphaFinitises}, we show that
$\alpha(\urgraph,\Omega)$ is a valid candidate for a refinement satisfying these properties.

\begin{definition}\label{defi:finitises}
    Let $\urgraph =(\urdomain; \edge)$ be a digraph and let $\Omega\leq \Aut(\urgraph)$ be oligomorphic. We say that a refinement $\alpha$ of $\omega$ \emph{finitises} $(\urgraph,\Omega)$ if it satisfies all of the following.
    \begin{enumerate}[label=A{{\arabic*}})]
        \item The equivalence  $\alpha$ is invariant under $\Omega$. \label{aitm:invariant}
        \item For all $\omega$-classes $O,P$ such that $O\edgeo P$, the restriction of $\edge$ to $O\times P$ induces a bijection between the factor sets  $O/\alpha$ and $P/\alpha$.\label{aitm:bijection}
        \item For every weakly connected component $D$ of $\urgraph$, the set  $D / \alpha$ of $\alpha$-classes is finite.\label{aitm:finite}
    \end{enumerate}
\end{definition}

\begin{restatable}{lemma}{alphafinitises}\label{lem:alphaFinitises}   
    Let $\urgraph =(\urdomain; \edge)$ be a digraph and let $\Omega\leq \Aut(\urgraph)$ be oligomorphic. Then $\alpha(\urgraph,\Omega)$ finitises $(\urgraph,\Omega)$.
\end{restatable}

In the course of the proof of~\Cref{thm:10}, we inductively replace $\urgraph$ with a proper induced subgraph $\subgraph$ whose domain $H$ is an $\alpha$-stable class of an $\Omega$-invariant equivalence. Thus, the orbits of $\Omega|_H$ correspond to the restrictions of $\Omega$-orbits.
However, on every $1$-orbit of $\Omega|_{\subdomain}$, the restriction of $\alpha(\urgraph,\Omega)$ to $\subdomain$ does not necessarily have to be pp-definable from $\subgraph$ any more (in particular, there is no reason for it to coincide with  $\alpha(\subgraph, \Omega|_{H})$). The approach, instead, is to recognise that the restriction of $\alpha(\urgraph,\Omega)$ to $\subdomain$ still provides an equivalence that inherits from $\alpha(\urgraph,\Omega)$ desirable properties, which we encapsulate  in the following definition:  

The strength of the following lemma lies, especially, in the fact that it does not apply only to $\alpha(\urgraph,\Omega)$, but to any equivalence finitising  $(\urgraph, \Omega)$.
\begin{restatable}{lemma}{alphasymmetry} \label{lem:alphasymmetry}
    Let $\urgraph =(\urdomain; \edge)$ be a digraph, let $\Omega\leq \Aut(\urgraph)$, and  let $\alpha$ 
    be an equivalence finitising $(\urgraph, \Omega)$. Assume that $S\subseteq \urdomain^2$ is a binary $\Omega$-invariant relation.
    Then all of the following hold:
    \begin{enumerate}
        \item For all $A,B\in G/\alpha$ contained in the same weakly connected component of $\urgraph / \alpha$, it holds that $\Omega_A=\Omega_B$.\label{aitm:permut}
        \item If  $X\subseteq G$ is $\alpha$-stable and such that all $\alpha$-classes from $X$ and $X+S$ are contained in the same weakly connected component of $\urgraph / \alpha$ as $X$,
        then $X+S$ is  $\alpha$-stable.\label{aitm:walk}
        \item For all $\omega$-classes $P,Q$ and  $\alpha$-classes $A,B\subseteq P$ such that $A,B,A+S$ and $B+S$ are contained in the same weakly connected component of $\urgraph / \alpha$, there exists a bijection between the sets $((A+S)\cap Q)/\alpha$ and $((B+S)\cap Q)/\alpha$.\label{aitm:neighbours}
    \end{enumerate}
\end{restatable}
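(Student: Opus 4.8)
\textbf{Proof plan for \Cref{lem:alphasymmetry}.}
The three items are progressively more general, so I would prove them in order, using each as a stepping stone. For item~(1), the key observation is that $A,B$ in the same weakly connected component of $\urgraph/\alpha$ are connected by a realisation of some $\Omega$-orbit-labelled path $\pi$ in $\urgraph/\alpha$. Since $\alpha$ finitises $(\urgraph,\Omega)$, property~\ref{aitm:bijection} gives that along each edge of this path the $\alpha$-classes are in bijection, and since $\pi$ passes through fixed $\omega$-classes, an automorphism $g\in\Omega_A$ fixing $A$ must — by uniqueness of the realisation starting at $A$, which follows from iterating~\ref{aitm:bijection} — also fix every $\alpha$-class along the realisation, in particular $B$; and symmetrically. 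Hence $\Omega_A\subseteq\Omega_B$ and $\Omega_B\subseteq\Omega_A$. The only subtlety is that the realisation of $\pi$ in $\urgraph/\alpha$ uses \emph{forward and backward} edges, so one applies~\ref{aitm:bijection} to $\edge$ restricted to $O\times P$ both as written and via its inverse; this is routine.

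For item~(2), let $X$ be $\alpha$-stable and suppose $X+S$ lies in the same weakly connected component of $\urgraph/\alpha$ as $X$. Since $S$ is pp-definable from $\Omega$-invariant relations, $S$ is $\Omega$-invariant, so $a\mapsto a+S$ commutes with the action of $\Omega$. Take $b\in X+S$ and $b'$ with $\alpha(b,b')$; I must show $b'\in X+S$. Pick $a\in X$ with $(a,b)\in S$. Because $\alpha(b,b')$ and $b,b'$ lie in one weakly connected component of $\urgraph/\alpha$, item~(1) gives $\Omega_{[b]_\alpha}=\Omega_{[b']_\alpha}$; but then, since $b,b'$ are in the same $1$-orbit of $\Omega$ (as $\alpha\subseteq\omega$), there is $g\in\Omega$ with $g(b)=b'$ and $g$ fixing the weakly connected component setwise. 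Now I want $g$ to also move $a$ inside $X$; this needs that $a$ and $g(a)$ are $\alpha$-related. This is exactly where the finitising hypothesis does the real work: the pair $(a,b)\in S$ together with $(g(a),b')\in S$ and $\alpha(b,b')$ forces, via item~(3) applied with $A=[a]_\alpha$, $B=[g(a)]_\alpha$ (which are in the same $\omega$-class since $g$ is an automorphism), a bijection between $\alpha$-classes of $(A+S)\cap Q$ and $(B+S)\cap Q$ for $Q=[b]_\omega$ — so in fact one should prove~(3) first and then bootstrap, or argue~(2) and~(3) by a simultaneous induction. The cleanest route: prove~(3) directly from~\ref{aitm:bijection} and~(1), then derive~(2) as a special case / corollary.

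So I would actually reorganise: after~(1), prove item~(3). Given $\omega$-classes $P,Q$ and $\alpha$-classes $A,B\subseteq P$ with $A,B,A+S,B+S$ in one weakly connected component of $\urgraph/\alpha$, item~(1) yields $\Omega_A=\Omega_B$, hence there is $g\in\Omega$ with $g(A)=B$ fixing the component; since $S$ is $\Omega$-invariant, $g$ maps $A+S$ bijectively onto $B+S$, hence maps $(A+S)\cap Q$ bijectively onto $(B+S)\cap g(Q)$. But $g$ fixes the $\omega$-class $Q$? Not necessarily — $g$ need only fix $A$'s class up to the stabiliser condition. The fix is to choose $g$ in $\Omega_P$ mapping $A$ to $B$: such $g$ exists because $A,B$ are $\alpha$-classes inside the single $\omega$-class $P$, hence $\omega$-equivalent, and because item~(1) tells us $\Omega_A$ and $\Omega_B$ are conjugate inside $\Omega_P$ (indeed equal after moving), so one can pick $g\in\Omega_P$ with $g(A)=B$; then $g$ fixes $Q$ setwise as an $\omega$-class, and $g$ induces the desired bijection on $((A+S)\cap Q)/\alpha\to((B+S)\cap Q)/\alpha$ because it is an automorphism respecting $\alpha$ (by~\ref{aitm:invariant}). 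Finally item~(2): with $X$ $\alpha$-stable and $X+S$ in the same component as $X$, write $X$ as a union of $\alpha$-classes $A_i$; each $A_i+S$ is contained in finitely many $\omega$-classes, and applying item~(3) across all $A_i$ lying in a common $\omega$-class $P$ shows that $(X+S)\cap Q$ is a union of whole $\alpha$-classes for each $\omega$-class $Q$ — hence $X+S$ is $\alpha$-stable. I expect the main obstacle to be the bookkeeping in item~(3): ensuring the automorphism witnessing $g(A)=B$ can be chosen to also fix the target $\omega$-class $Q$, which is why item~(1) (equality, not mere conjugacy, of stabilisers) is stated first and used crucially. The finiteness from~\ref{aitm:finite} is needed only to know these unions are finite and the argument terminates, but plays no deep role.
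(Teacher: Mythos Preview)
Your plan for item~(1) matches the paper's and is fine. The problems lie in items~(2) and~(3).

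For item~(2) you have missed the direct argument. If $\alpha(b,b')$ holds and you pick $g\in\Omega$ with $g(b)=b'$, then $g$ maps $b$ into its own $\alpha$-class, so $g\in\Omega_{[b]_\alpha}$ \emph{automatically}. Now item~(1) tells you $g\in\Omega_C$ for every $\alpha$-class $C$ in the same weakly connected component; in particular $g$ fixes $[a]_\alpha$, so $\alpha(a,g(a))$ holds and $g(a)\in X$ by $\alpha$-stability of $X$. Since $S$ is $\Omega$-invariant, $(g(a),g(b))=(g(a),b')\in S$, whence $b'\in X+S$. There is no need to invoke item~(3). Your proposed reorganisation---proving~(3) first and then deducing~(2)---does not work: item~(3) only yields a bijection between the \emph{quotient} sets $((A+S)\cap Q)/\alpha$ and $((B+S)\cap Q)/\alpha$; it says nothing about whether $(A+S)\cap Q$ is a union of full $\alpha$-classes, which is precisely the content of~(2).

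For item~(3) your worry that the chosen $g\in\Omega$ with $g(A)=B$ might fail to fix $Q$ setwise is unfounded: every $\omega$-class is by definition a $1$-orbit of $\Omega$, so any $g\in\Omega$ fixes every $\omega$-class setwise. Once that is clear, the argument is exactly as you sketch: pick any $g\in\Omega$ mapping some $a\in A$ to some $b\in B$; then $g(A)=B$ (since $g$ preserves $\alpha$ by~\ref{aitm:invariant}), $g(A+S)=B+S$ (since $S$ is $\Omega$-invariant), $g(Q)=Q$, and $g$ induces an injection $((A+S)\cap Q)/\alpha\hookrightarrow((B+S)\cap Q)/\alpha$; the reverse injection comes from $g^{-1}$. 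Item~(1) is not actually needed here, nor is~\ref{aitm:finite}.
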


In general, pp-definability in a factor digraph $\urgraph/\alpha$ of $\urgraph$ does not lift to pp-definability in $\urgraph$. However, if $\alpha $  finitises $(\urgraph, \Omega)$, then substantial consequences  can still be obtained   when utilising only a restricted form of pp-definitions in $\urgraph/\alpha$.  The following auxiliary statement will be used to blow-up (finite) subsets of $\urgraph/\alpha$ with desirable properties to subsets of $\urgraph$ retaining these properties. 

\begin{restatable}{lemma}{liftingtree}\label{lem:liftingTreeDefs}
 Let $\urgraph=(\urdomain;\edge)$ be a digraph, let $\Omega\leq \Aut(\urgraph)$ be oligomorphic, and let $\alpha$ be an equivalence finitising $(\urgraph, \Omega)$.
Let $S\subseteq G/\alpha$ be tree pp-definable in $\urgraph/\alpha$ with parameters. Then the $\alpha$-blow-up of $S$ is (tree) pp-definable from $\urgraph$ and $\alpha$-classes.
\end{restatable}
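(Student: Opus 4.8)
The plan is to unfold the tree pp-definition of $S$ conjunct by conjunct, translating each atomic formula in $\urgraph/\alpha$ into an atomic formula in $\urgraph$, and to use the tree shape together with \Cref{lem:alphaFinitises} to argue that the resulting relation in $\urgraph$ is $\alpha$-stable, hence equal to the $\alpha$-blow-up of $S$. Concretely, let $\phi$ be a tree pp-formula with parameters defining $S$ in $\urgraph/\alpha$; its conjuncts are either edge atoms $x \edgeo y$, equalities, or parameter atoms $x \approx [a]_\alpha$ for some fixed $\alpha$-class $[a]_\alpha$. I would replace each edge atom $x\edgeo y$ by the same atom $x\edgeo y$ read in $\urgraph$; each equality by an equality; and each parameter atom $x \approx [a]_\alpha$ by the unary atom expressing membership of $x$ in the $\alpha$-class $[a]_\alpha$ of $\urgraph$, which by the remark after \Cref{defi:alpha} is pp-definable from $\urgraph$ together with $\omega$-classes and the single parameter $a$. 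Since the underlying graph of $\phi$ is unchanged by these substitutions, the resulting formula $\psi$ in $\urgraph$ is again a tree pp-formula (with parameters), so it defines a (tree) pp-definable relation $S'$ from $\urgraph$ and $\alpha$-classes, as required for the conclusion.

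It remains to show $S'$ is precisely the $\alpha$-blow-up $S^{\alpha}$ of $S$. One inclusion is essentially bookkeeping: any satisfying assignment of $\psi$ in $\urgraph$ factors to a satisfying assignment of $\phi$ in $\urgraph/\alpha$, because edge atoms, equalities, and $\alpha$-class memberships all survive the quotient map; hence $S' \subseteq S^{\alpha}$, and in fact the $\alpha$-factor of $S'$ is contained in $S$. For the reverse inclusion I would first argue that $S'$ is $\alpha$-stable, and then that its $\alpha$-factor is all of $S$. For $\alpha$-stability: given a satisfying assignment of $\psi$ and an $\alpha$-related perturbation of one free variable, I want to propagate the perturbation along the tree. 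This is where item \ref{aitm:bijection} of \Cref{lem:alphaFinitises} enters — along each edge atom $O\edgeo P$ the restriction of $\edge$ induces a bijection $O/\alpha \to P/\alpha$, so moving a variable within its $\alpha$-class forces a unique compatible move of each neighbouring existentially quantified variable within its own $\alpha$-class; since $\phi$ is a tree, these forced moves do not conflict, and there is no constraint forcing us back into the original class. (Parameter atoms are automatically respected since the new value lies in the same $\alpha$-class, which is exactly the parameter.) Then for the $\alpha$-factor: start from a satisfying assignment of $\phi$ in $\urgraph/\alpha$, pick arbitrary representatives for the free variables, and use item \ref{aitm:bijection} again, now traversing the tree from the free variables inward, to choose representatives of the existentially quantified variables realising each edge atom; the tree structure guarantees this can be done consistently. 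Combining $\alpha$-stability with surjectivity onto $S$ of the $\alpha$-factor gives $S' = S^{\alpha}$.

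The main obstacle is the $\alpha$-stability argument and the consistency of propagating choices through the tree: one must be careful that \Cref{lem:alphaFinitises}\ref{aitm:bijection} only gives a bijection between $\alpha$-classes living inside a \emph{single} edge $O\edgeo P$ of $\urgraph/\omega$, so I need to track, for every conjunct, which $\omega$-classes its endpoints live in, and check that the parameters and equalities are consistent with these assignments; the acyclicity of $\phi$ is what makes the local bijections glue into a global unique propagation without running into a cycle that could obstruct stability. A minor subtlety is that $\phi$ may have free variables not appearing in any edge atom (isolated in the tree), but these are handled directly by the parameter/equality atoms constraining them, or are unconstrained, in which case $S^{\alpha}$ is a full coordinate there and stability is trivial. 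I would also note that oligomorphicity of $\Omega$ is used only to guarantee $\alpha$ finitises $(\urgraph,\Omega)$ in the first place (and to make $\alpha$-classes pp-definable with a single parameter), so once \Cref{lem:alphaFinitises} is invoked the argument is purely combinatorial on the tree.
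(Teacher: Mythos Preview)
Your proposal is correct and follows essentially the same approach as the paper: translate the tree pp-formula verbatim, replacing parameter atoms by $\alpha$-class predicates, and use the tree shape together with the class-level bijection along edges (\Cref{lem:alphaFinitises}\ref{aitm:bijection}) to lift witnesses. The only difference is organisational: the paper proves $S^{\alpha}\subseteq S'$ in a single pass by lifting a satisfying assignment in $\urgraph/\alpha$ starting from an \emph{arbitrary} representative $a$ of the free variable (so your separate $\alpha$-stability step becomes redundant), and it packages the key edge-lifting fact via \Cref{lem:alphasymmetry}\ref{aitm:walk} rather than citing \ref{aitm:bijection} directly.
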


\subsection{Finitising equivalence on adjacent orbits}\label{sect:defalpha}

Our  goal for this section is to show that the $2$-conservative expansion of $\urgraph$ with respect to $\Omega$ pp-defines $\alpha(\urgraph,\Omega)$ on all pairs of adjacent 
$\omega$-classes (\Cref{l:alphaonpairs}).
Recall that the concept of a merge as introduced in~\Cref{sec:relations} allows for labelling paths by pairs of $\omega$-classes. We now formalise a more restrictive way of the former.

\begin{definition}\label{defi:properSeparation}
Let $p$ be a path, and let $\pi=(p,(P_1,\ldots,P_n))$ and $\pi'=(p,(P_1',\ldots,P_n'))$ be two realisable $\Omega$-orbit labellings of $p$. We say that $(\pi,\pi')$ is \emph{properly separated} if for all $i$, the $\Omega$-orbit labelling $(p,(P_1,\ldots,P_{i},P_{i+1}',\ldots,P_n'))$ is not realisable. Note that this simply means that $P_i$ and $P_{i+1}'$ are not adjacent in the direction   prescribed by $p$. In a different presentation, for $p=(E_1, \dots, E_{n-1}) $  and for every $i \in \br{n-1}$, the merge  $$\pi\merge \pi'=\binom{P_1}{P'_1} E_1 \binom{P_2}{P'_2} E_2\; \cdots\; E_{n-1} \binom{P_{n}}{P'_{n}} $$ forbids the arrows $\searrow $ or $ \nwarrow$, respectively, in the direction set by $E_i$.
\end{definition} 
Observe that if $(\pi,\pi')$ is properly separated, then so is $(-\pi',-\pi)$.
We state a technical lemma that will serve as a crucial ingredient in the proof of~\Cref{l:alphaonpairs}.

\begin{restatable}{lemma}{labellings}\label{lemma:labellings}
    Let $\urgraph=(\urdomain;\edge)$ be a smooth digraph, let $\Omega\leq \Aut(\urgraph)$ be oligomorphic, and suppose that $\urgraph / \Omega$ has no loops. Assume that $O,P$ are $\edge$-adjacent $\omega$-classes, and $\pi$  is a realisable $\Omega$-orbit-labelled path   from $O$ to $O$. Then there exists a properly separated pair $(\pi',\rho')$ such that \begin{itemize}
            \item $\pi'$ is a realisable  $\Omega$-orbit-labelled path from $O$ to $O$ extending $\pi$,
            \item $\rho'$ is a realisable  relabelling  of $\pi'$ from $P$ to $P$;
        \end{itemize}     
\end{restatable}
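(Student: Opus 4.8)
The plan is to first make two harmless reductions. Restricting $\urgraph$ to the weakly connected component containing $O$ — which also contains $P$, since $O$ and $P$ are $\edge$-adjacent, and which contains a realisation of $\pi$ — I may assume $\urgraph$ is weakly connected, all hypotheses being inherited. I may also assume $O\edge P$: in the remaining case $P\edge O$ one works instead with the digraph $(\urdomain;\ledge)$, which is again smooth, whose $\Omega$-quotient has no loops either, for which $O$ is now $\edge$-adjacent to $P$ in the correct direction, and in which $\pi$ (reinterpreted as an $\edge$-path over $\ledge$) is a realisable path from $O$ to $O$; since realisability, the notion of extension, and proper separation are all insensitive to swapping the roles of $\edge$ and $\ledge$, a properly separated pair produced there translates back (here one also uses, as noted before the lemma, that proper separation is preserved by $(\sigma,\tau)\mapsto(-\tau,-\sigma)$). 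Fix once and for all a realisation $(a_1,\dots,a_{N+1})$ of $\pi=(p,(Q_1,\dots,Q_{N+1}))$ with $p=(E_1,\dots,E_N)$ and $a_1=a_{N+1}\in O$, and an element $b\in P$ with $a_1\edge b$.

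\textbf{The construction.}
I would build $\pi'$ from $\pi$ by inserting, after each vertex of $p$, a long \emph{balanced} $\edge$-path (a \emph{detour}), and realise $\pi'$ by letting $(a_i)$ idle along each detour (oscillating between $a_i$ and a fixed out-neighbour and returning to $a_i$); thus $\pi'$ automatically extends $\pi$ and is realisable. The relabelling $\rho'$ is then obtained from a realisable $\edge$-walk along the resulting path $p'$ starting and ending at an element of $P$, which I would construct so that the merge $\pi'\merge\rho'$ is properly separated, i.e.\ so that at every edge $E_j$ of $p'$ the $\omega$-class of the vertex visited by $\rho'$ is not in the $E_j$-image of the $\omega$-class of the vertex just visited by $\pi'$. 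Two observations make this plausible: because every detour is balanced, $\rho'$ traverses it without change of "height", and because $p$ (hence $p'$) is a closed realisable path, the algebraic-length obstruction to $\rho'$ returning to the height class of $P$ vanishes by itself.

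\textbf{The core inductive step and compactness.}
I would construct the trajectory of $\rho'$ inductively along $p'$: at the vertices inherited from $\pi$ it is forced to take an $E_i$-step (subject to proper separation), while along the inserted detours it has room to navigate. The engine is a sub-claim of roughly the following shape: from the element currently occupied by $\rho'$, a sufficiently long balanced detour can be realised so as to carry $\rho'$ to any prescribed element of the ambient weakly connected component lying at the same height and outside the forbidden $E$-neighbourhood of the element at which $\pi'$ idles. Here smoothness provides mobility, weak connectivity provides reach, and the absence of pseudoloops (notably within $O$ and within $P$) is precisely what guarantees that a \emph{separated} detour exists — this is the point at the two ends, where $\pi'$ sits in $O$ while $\rho'$ must sit in $P$, which is arranged using the edge $O\edge P$. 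Finally, oligomorphicity of $\Omega$ enters through a standard compactness argument (in the spirit of the maximality argument behind \Cref{lemma:alpha}) to extract a single \emph{finite} pair $(\pi',\rho')$ from the construction and to dispense with any uniform bound on the detour lengths.

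\textbf{Where the difficulty lies.}
The hard part will be the navigation sub-claim above: it is not a statement about a single detour of bounded length, and the genuine tension is between reachability of one prescribed element from another — which is governed by the algebraic length of $\urgraph$, and note that $\urgraph$ itself is \emph{not} assumed to have algebraic length~$1$ here — and the requirement that the track of $\rho'$ remain outside the $E$-reach of the track of $\pi'$, all while the trajectory of $\rho'$ must still close up at $P$. Reconciling these constraints simultaneously is, I expect, the longest and most delicate portion of the argument.
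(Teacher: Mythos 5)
Your proposal takes a genuinely different route from the paper's, but it has a gap that the paper's more elaborate construction exists precisely to close, so I cannot call it a correct proof.

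Your architecture -- insert long balanced detours after each vertex of $p$, let $\pi'$ idle on each detour, and let $\rho'$ navigate -- differs structurally from the paper's. The paper first reduces the whole lemma, by induction on the length of $\pi$, to a one-edge claim: given $U\edge U'$, $V\edge V'$ and $U$ adjacent to $V$, produce a properly separated pair $(\mu,\mu')$ from $U$ to $V$ and $U'$ to $V'$ with $\mu$ extending the single edge. Crucially, in the proof of that claim \emph{both} tracks move in lockstep along carefully chosen walks threaded through strongly connected components: the argument picks a nearest cycle $C$ reachable backward from $U$, a nearest cycle $D$ reachable forward from $V'$, possibly swaps $C$ for a longer cycle of length coprime to that of $D$ (the ``Euclid's hammer'' step), and then both tracks wind around these cycles a prescribed number of times so that they arrive at the right places with proper separation certified by the absence of long chords (``segments'') on $C$ and by the coprimality of the cycle lengths. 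At no point does either track idle; proper separation is not obtained by $\rho'$ staying away from a stationary $\pi'$, but by both moving with an offset that the cycle arithmetic keeps fixed.

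The gap in your proposal is your own navigation sub-claim, which you explicitly leave open. The claim that a sufficiently long balanced detour can carry $\rho'$ from any current position to any prescribed element at the same height while avoiding the forbidden neighbourhood of the idling $\pi'$ is not a small technicality -- it \emph{is} the content of the lemma, restated in a way that makes the obstruction more visible but no easier to overcome. In particular: (i) during a detour $\rho'$ is forced to follow the same $E/E^{-1}$ pattern as the oscillating $\pi'$, so it cannot choose its own step directions; (ii) the forbidden orbit sets $N^+(O_i)$ and $N^-(C_i)$ can locally trap $\rho'$ if some vertex of $\urgraph/\Omega$ has all out-neighbours in the forbidden set; and (iii) the ``same height'' requirement is only a necessary condition for reachability by a balanced walk, and is far from sufficient in a digraph that need not have algebraic length~$1$. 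Absence of pseudoloops does give a separated first step, but does not by itself give a separated \emph{walk} to a target, and the closing-up at $P$ compounds this. The paper resolves exactly these tensions by working inside a single strongly connected component when possible, and otherwise by constructing the two tracks on cycles of coprime lengths, so that the offset between the tracks can be adjusted to any desired residue, with proper separation certified by a maximal-segment argument rather than by keeping $\rho'$ out of a static neighbourhood. Finally, a small correction to the framing: the paper's proof is entirely constructive; no compactness is invoked, since the single-edge claim yields $\pi'$ and $\rho'$ of explicit finite length by a finite induction along~$\pi$. The oligomorphicity of $\Omega$ is used only to ensure that $\urgraph/\Omega$ is a finite digraph.
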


Finally, we apply~\Cref{lemma:labellings} to prove the main result of this section. A short sketch of the proof is presented below. 

\begin{restatable}{lemma}{alphaonpairs}\label{l:alphaonpairs} 
    Let $\urgraph=(\urdomain;\edge)$ be a smooth digraph, let $\Omega\leq \Aut(\urgraph)$  be oligomorphic,  and suppose that $\urgraph / \Omega$ has no loops. Let $O, P$ be two $\edge$-adjacent $\omega$-classes.
    Then $\urgraph$ together  with all unions of pairs of $\omega$-classes pp-defines the restriction of $\alpha(\urgraph,\Omega)$ to $O\cup P$.  
\end{restatable}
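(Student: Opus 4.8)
\textbf{Proof proposal for \Cref{l:alphaonpairs}.}

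The plan is to show that the restriction of $\alpha := \alpha(\urgraph,\Omega)$ to $O \cup P$ is pp-definable from $\urgraph$ together with unions of pairs of $\omega$-classes, by exhibiting an explicit pp-formula built from a single merge $\pi \merge \rho$ of two $\Omega$-orbit-labelled paths. Let $\pi_0$ be the symmetric realisable $\Omega$-orbit-labelled path from $O$ to $O$ with $\alpha_O = \gamma_{\pi_0}$, as in \Cref{defi:alpha}. First I would apply \Cref{lemma:labellings} to $\pi_0$ (and the adjacent pair $O, P$) to obtain a properly separated pair $(\pi', \rho')$, where $\pi'$ is a realisable extension of $\pi_0$ from $O$ to $O$ (so $\gamma_{\pi'} = \alpha_O$ by maximality, using the remark after \Cref{defi:alpha} that extensions of $\pi_0$ still induce $\alpha_O$), and $\rho'$ is a realisable relabelling of $\pi'$ from $P$ to $P$. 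Since $\rho'$ is a symmetric relabelling of a symmetric path starting at $P$, maximality of $\alpha_P$ gives $\gamma_{\rho'} \subseteq \alpha_P$; in fact I expect $\gamma_{\rho'} = \alpha_P$, but only the inclusion, together with \ref{aitm:bijection} of \Cref{lem:alphaFinitises} matching the $\alpha$-classes of $O$ with those of $P$ along the edge, will be needed.

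The key claim is then that the relation on $O \cup P$ defined by the merge $\pi' \merge \rho'$ — that is, $\gamma_{\pi' \merge \rho'}$ restricted to $O \cup P$, together with the edge relation $\edgeo$ restricted to $O \times P$ and its converse, and the unary predicates $O$ and $P$ — pp-defines precisely $\alpha|_{O \cup P}$. The pp-definition of $\gamma_{\pi' \merge \rho'}$ uses only $\edgeo$ and unions of pairs of $\omega$-classes, as noted at the end of \Cref{sec:relations}, so this is of the required form. For the correctness of the claim I would argue in two directions. Soundness: any pair realising $\pi' \merge \rho'$ with both endpoints in $O$ is an $\alpha_O$-pair by $\gamma_{\pi'} = \alpha_O$; any pair with both endpoints in $P$ is an $\alpha_P$-pair since $\gamma_{\rho'} \subseteq \alpha_P$; and a ``mixed'' realisation, with one endpoint in $O$ and the other in $P$, can be pushed onto a realisation along $\edgeo$ between the corresponding $\alpha$-classes because the proper separation of $(\pi', \rho')$ forces the realisation, at each step, to stay ``inside'' one label column and only switch columns across a single prescribed edge — so such a pair lies in $\edgeo$ composed with $\alpha$, hence is recoverable by adding the edge predicate to the formula. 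Completeness: given any $\alpha$-pair inside $O \cup P$, I must produce a realisation of the appropriate combination; within $O$ this is $\gamma_{\pi'} = \alpha_O$, within $P$ this uses $\alpha_P \subseteq \gamma_{\rho'}$ which I would establish by a separate maximality argument (a symmetric relabelling of $\rho'$ from $P$ realising all of $\alpha_P$ exists by definition of $\alpha_P$, and one checks it can be taken compatible with the common path $p$), and across $O, P$ one composes with the bijection from \ref{aitm:bijection}.

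The main obstacle I anticipate is precisely the bookkeeping in the ``mixed'' case: ensuring that proper separation of $(\pi', \rho')$ genuinely prevents spurious realisations of the merge from connecting $\alpha$-inequivalent elements — i.e.\ that a walk along $p$ which is allowed to sit in either the $\pi'$-label or the $\rho'$-label at each vertex, but is forbidden from ever taking a $\searrow$ or $\nwarrow$ step (in the sense of \Cref{defi:properSeparation}), must effectively be either a $\pi'$-realisation, a $\rho'$-realisation, or a single clean ``transition'' mediated by one edge between $O$ and $P$. This is where the definition of properly separated was engineered to be used, and the argument is a careful induction along the length of $p$ tracking, for each prefix, which of the two label columns the partial walk currently lies in, using at each transition point the non-adjacency guaranteed by proper separation together with smoothness of $\urgraph$ to realise the remainder. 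I would also need the straightforward observations that $O$ and $P$ (as unions of one $\omega$-class each, hence of at most two) are available as unary predicates, and that converses of pp-definable relations are pp-definable, to close the formula; these are routine.
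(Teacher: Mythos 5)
Your starting point matches the paper's — invoke \Cref{lemma:labellings} on the maximal symmetric path $\pi_0$ for $O$ and exploit proper separation of the resulting merge — but the single application of that lemma is not enough, and two of your key steps fail. First, proper separation of $(\pi',\rho')$ is a \emph{one-sided} condition: by \Cref{defi:properSeparation} it forbids a realisation of $\pi'\merge\rho'$ from switching from the $\pi'$-row at position $i$ to the $\rho'$-row at position $i+1$, but says nothing about switches in the opposite direction. Hence a realisation starting in $O$ indeed stays in the $\pi'$-row, but a realisation starting in $P$ may drift into the $\pi'$-row and end anywhere in $O$; your claim that a merge-realisation must be a $\pi'$-realisation, a $\rho'$-realisation, or a single clean transition is false. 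This is exactly why the paper builds \emph{two} properly separated pairs — one via \Cref{lemma:labellings} applied to $(\pi,O,P)$ and one applied to $(\mu,P,O)$, where $\mu$ is the maximal path for $P$ — and intersects the two induced relations $R$ and $S$, each of which controls realisations starting in one of the two orbits. Second, your completeness argument on the $P$-side has no support: $\rho'$ is merely a relabelling of an extension of the $O$-maximal path, and there is no reason a relabelling of that particular abstract path from $P$ realises all of $\alpha_P$ (the maximal symmetric path for $P$ may be a completely different path). The paper resolves this by concatenating, forming $\kappa:=\pi'-\nu'$ and $\lambda:=\rho'-\mu'$, so that the $P$-row genuinely contains $\gamma$ of (a relabelling of) the $P$-maximal path; it then must also raise $\kappa,\lambda$ to a power so that the bijections they induce on $O/\alpha$ and $P/\alpha$ (guaranteed by \ref{aitm:bijection} of \Cref{lem:alphaFinitises}) become the identity — a step absent from your outline.

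A smaller but real confusion: the restriction of $\alpha$ to $O\cup P$ contains no pairs with one coordinate in $O$ and the other in $P$, since $\alpha$ refines $\omega$ and $O\neq P$. Your plan to add the edge relation to the formula so as to ``recover'' mixed pairs would make the defined relation strictly larger than the target; the correct statement is that mixed pairs must be \emph{excluded}, which in the paper's proof happens automatically because $R$ forbids realisations from $O$ ending in $P$ and $S$ forbids the converse.
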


\begin{proof}[Sketch of the proof] 
  Let us set $\alpha:=\alpha(\urgraph,\Omega)$. We apply~\Cref{lemma:labellings} to the symmetric $\Omega$-orbit-labelled path $\pi$ defining $\alpha|_O$ and the orbits $O,P$ to obtain an extension $\pi'$ of $\pi$ from $O$ to $O$, and a relabelling $\rho'$ thereof from $P$ to $P$ such that the pair $(\pi', \rho')$ is  properly separated. Similarly, we proceed with $\mu$ defining $\alpha|_P$ and the orbits $P,O$, obtaining  an extension $\mu'$ and a relabelling $\nu'$. We set $\kappa:=\pi'-\nu'$ (an $\Omega$-orbit-labelled path from $O$ to $O$), and $\lambda:=\rho'-\mu'$ (an $\Omega$-orbit-labelled path from $P$ to $P$ which is a relabelling of $\kappa$), and note that $(\kappa,\lambda)$ is properly separated.

The idea now is to show that the induced relations $\gamma_{\kappa}$ and $\gamma_\lambda$ are both $\alpha$-stable, and induce a bijection on the set of $\alpha$-classes contained in $O$ and $P$, respectively (thereby using~\Cref{aitm:bijection} of~\Cref{defi:finitises}). Thus, by replacing $\kappa$ and $\lambda$ by a sufficiently large powers of themselves, we may assume that they both induce the identity bijection. Proper separateness  now yields that any realisation of $\kappa\merge \lambda$ starting at an element in $O$ must in fact be a realisation of $\kappa$ (and hence end in $O$). Switching the roles of  $O$ and $P$ in the preceding arguments, and intersecting the obtained relations gives the restriction of $\alpha$ to $O\cup P$ as desired.
\end{proof}

\section{Reductionistic sets}\label{sect:reductionistic}

In this section, we introduce the concept of reductionistic sets, which plays a key role in the proof of~\Cref{thm:10}: smaller domains to which we reduce inductively will always be reductionistic.

\begin{definition}\label{def:reductionistic}
Let $\Omega$ be a permutation group acting on a set $G$. An \emph{$\Omega$-reductionistic} set is any class $H\subsetneq G$ of an $\Omega$-invariant equivalence relation on $G$.
\end{definition}

We will use the following characterisation of reductionistic sets. 

\begin{restatable}{lemma}{redchar}\label{lem:reductionistic_characterization}
    $H\subsetneq G$ is $\Omega$-reductionistic if and only if $H$ is preserved by all $f\in \Omega$ with $f(\subdomain) \cap \subdomain \neq \emptyset$.
\end{restatable}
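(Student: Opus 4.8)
The plan is to prove both implications of the characterisation separately, using that an $\Omega$-invariant equivalence relation $\eta$ with $H$ as one of its classes exists precisely when $H$ behaves well under the group action.

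For the forward direction, suppose $H$ is $\Omega$-reductionistic, so $H$ is a class of some $\Omega$-invariant equivalence relation $\eta$ on $G$. Take $f\in\Omega$ with $f(H)\cap H\neq\emptyset$, and pick $a\in H$ with $f(a)\in H$. Since $\eta$ is $\Omega$-invariant, $f$ permutes the $\eta$-classes; in particular $f(H)$ is again an $\eta$-class. But $f(a)\in f(H)$ and $f(a)\in H$ show that the $\eta$-classes $f(H)$ and $H$ share an element, hence $f(H)=H$. So $H$ is preserved by every such $f$, which is the claim.

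For the converse, suppose $H\subsetneq G$ is preserved by all $f\in\Omega$ with $f(H)\cap H\neq\emptyset$. I would like to exhibit an $\Omega$-invariant equivalence relation having $H$ as a class. The natural candidate is the partition of $G$ whose blocks are the sets $f(H)$ for $f\in\Omega$ together with the singletons $\{b\}$ for $b\in G\setminus\bigcup_{f\in\Omega}f(H)$; equivalently, define $\eta$ by: $x\mathbin{\eta}y$ iff either $x=y$, or there is $f\in\Omega$ with $x,y\in f(H)$. The key point to check is that this is well-defined as a partition, i.e.\ that any two images $f(H)$ and $g(H)$ are either equal or disjoint. If $f(H)\cap g(H)\neq\emptyset$, then $(g^{-1}f)(H)\cap H\neq\emptyset$, and since $g^{-1}f\in\Omega$, the hypothesis gives $(g^{-1}f)(H)=H$, hence $f(H)=g(H)$. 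This also shows transitivity of $\eta$: if $x,y\in f(H)$ and $y,z\in g(H)$ then $f(H)$ and $g(H)$ meet in $y$, so they coincide and $x,z\in f(H)$. Reflexivity and symmetry are immediate from the definition. The relation $\eta$ is $\Omega$-invariant because applying $h\in\Omega$ sends the block $f(H)$ to the block $(hf)(H)$ and fixes the singleton blocks setwise up to the action, so $h$ permutes the blocks; finally $H=\id(H)$ is one of the blocks, and $H\subsetneq G$ ensures $\eta$ is not the full relation, so $H$ is genuinely a class of a proper $\Omega$-invariant equivalence. Hence $H$ is $\Omega$-reductionistic.

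The only mildly delicate point — and the one I would expect to need the most care — is verifying that the proposed partition is actually a partition, i.e.\ the "equal or disjoint" dichotomy for the translates $f(H)$; everything else is routine bookkeeping with the group action. This dichotomy is exactly where the hypothesis is used, via the substitution $g^{-1}f$, so the proof hinges on closing $\Omega$ under composition and inverses, which holds since $\Omega$ is a group.
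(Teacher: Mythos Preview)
Your proof is correct and follows essentially the same route as the paper: show that the translates $\{f(H):f\in\Omega\}$ form a partition of their union via the substitution $g^{-1}f$, then complete this to an $\Omega$-invariant equivalence on $G$ (the paper lumps the leftover into a single class $G\setminus\bigcup_f f(H)$ where you use singletons, but either choice works). One small point of care: in the paper's terminology ``preserved'' means $f(H)\subseteq H$, so your step ``the hypothesis gives $(g^{-1}f)(H)=H$'' strictly yields only $(g^{-1}f)(H)\subseteq H$, i.e.\ $f(H)\subseteq g(H)$; the reverse inclusion then comes from the symmetric argument with $f^{-1}g$, exactly as the paper spells out.
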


Clearly, in particular  any proper $\omega$-stable subset of $G$ is reductionistic.  Note that for an $\Omega$-reductionistic set  $H$, the $\Omega|_{\subdomain}$-orbits 
are precisely the restrictions of the $\Omega$-orbits to $H$. It follows that if $\Omega$ is oligomorphic, then so is $\Omega|_H$.

In the proof of~\Cref{thm:10}, we will be interested in reductionistic sets which are, additionally, $\alpha(\urgraph,\Omega)$-stable (so that we inductively reduce the number of $\alpha(\urgraph,\Omega)$-classes). The reason the domains to which we inductively reduce must be $\Omega$-reductionistic is that it is the only way to pass down the required properties of $\alpha(\urgraph, \Omega)$ to its restriction, as formalised by the following lemma:

\begin{restatable}{lemma}{redfinitises}\label{lemma:redfinitises}
    Let $\urgraph=(\urdomain;\edge)$ be a digraph, let $\Omega\leq \Aut(\urgraph)$ be oligomorphic, let $\alpha$ be an equivalence finitising $(\urgraph, \Omega)$,  and let $\subdomain\subseteq\urdomain$ be an $\alpha$-stable $\Omega$-reductionistic set. Then $\alpha|_{\subdomain\times\subdomain}$ finitises $(\subgraph , \Omega|_H)$.
\end{restatable}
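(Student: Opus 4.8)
The goal is to show that restricting a finitising equivalence $\alpha$ to an $\alpha$-stable $\Omega$-reductionistic set $H$ preserves the three defining properties \ref{aitm:invariant}, \ref{aitm:bijection}, \ref{aitm:finite} with respect to the restricted group $\Omega|_H$ acting on the induced subgraph $\subgraph := \urgraph|_H$.

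\begin{proof}[Proof sketch of \Cref{lemma:redfinitises}]
Write $\beta := \alpha|_{\subdomain\times\subdomain}$. First I would record the basic structural facts: since $H$ is $\Omega$-reductionistic, \Cref{lem:reductionistic_characterization} tells us that $H$ is preserved by every $f\in\Omega$ meeting $H$, hence the $\Omega|_H$-orbits are exactly the restrictions of $\Omega$-orbits to $H$; in particular $\Omega|_H$ is oligomorphic, and the $\omega$-classes of $\Omega|_H$ are precisely the nonempty sets $O\cap H$ for $O$ an $\omega$-class of $\Omega$. Since $H$ is $\alpha$-stable, $\beta$ is an equivalence on $H$ whose classes are exactly the $\alpha$-classes contained in $H$, and it refines the $\omega$-classes of $\Omega|_H$.

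For \ref{aitm:invariant}: $\Omega$-invariance of $\alpha$ together with the fact that every $f\in\Omega|_H$ extends to some $f\in\Omega$ preserving $H$ immediately gives that $\beta$ is $\Omega|_H$-invariant. For \ref{aitm:bijection}: let $O\cap H$ and $P\cap H$ be $\omega$-classes of $\Omega|_H$ with an edge between them in $\subgraph$; then $O\edgeo P$ in $\urgraph$, so by \ref{aitm:bijection} for $\alpha$ the relation $\edge$ restricted to $O\times P$ induces a bijection $O/\alpha \to P/\alpha$. The point is that this bijection restricts correctly: because $H$ is $\alpha$-stable, an $\alpha$-class $A\subseteq O$ lies in $H$ iff $A\subseteq H$, and I must check that the bijection matches up the $\alpha$-classes of $O$ inside $H$ with those of $P$ inside $H$. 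This is where $\alpha$-stability of $H$ is used crucially, together with the weak connectivity observation that every weakly connected component of $\urgraph$ is $\alpha$-stable (noted right after \Cref{defi:alpha}): if $A\subseteq O$ is an $\alpha$-class in $H$, pick $a\in A$ and an edge $a\edge b$ with $b\in P$; then $b\in H$ since $H$ is (at least) $\omega$-stable and actually $\alpha$-stable and $a,b$ are in the same weakly connected component, forcing the partner $\alpha$-class $B\ni b$ to be inside $H$. The reverse inclusion is symmetric, so the bijection restricts to a bijection $(O\cap H)/\beta \to (P\cap H)/\beta$. Finally \ref{aitm:finite}: every weakly connected component $D'$ of $\subgraph$ is contained in a weakly connected component $D$ of $\urgraph$, and $D'/\beta$ is a subset of $D/\alpha$, which is finite by \ref{aitm:finite} for $\alpha$; hence $D'/\beta$ is finite.

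The main obstacle is the verification in \ref{aitm:bijection} that the $\alpha$-to-$\alpha$ bijection along an edge restricts to a bijection between the $\beta$-classes contained in $H$ on each side — in particular, the injectivity/well-definedness part amounts to showing that if $a\edge b$ and $a'\edge b'$ with $a,a'\in O\cap H$, $b,b'\in P$ and $b\mathrel{\alpha}b'$, then already $b'\in H$; this uses that $b\mathrel{\alpha}b'$ puts $b,b'$ in one weakly connected component (by the definition of $\alpha$ via a realisation of the defining symmetric path) combined with $\alpha$-stability of $H$, and symmetrically for surjectivity. Once this connectivity bookkeeping is handled, the rest is routine transport of the properties across the restriction. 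I would also remark that $\Omega$-reductionicity of $H$ (rather than mere $\omega$-stability) is what guarantees R1)/R2) in the sense that $\Omega|_H$-orbits are exactly restrictions of $\Omega$-orbits, which is needed so that ``$\omega$-class of $\Omega|_H$'' means what we want and \ref{aitm:invariant} carries over cleanly.
\end{proof}
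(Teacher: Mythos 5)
Your arguments for \ref{aitm:invariant} and \ref{aitm:finite} are fine and essentially agree with the paper's. The gap is in \ref{aitm:bijection}, and it is substantive: you never actually use $\Omega$-reductionicity inside that argument (you only invoke it for the orbit-bookkeeping remark at the end), whereas it is exactly what makes \ref{aitm:bijection} go through.

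Concretely, you write that from $a\in A\subseteq O\cap H$ and $a\edge b$ with $b\in P$, one gets $b\in H$ ``since $H$ is (at least) $\omega$-stable and actually $\alpha$-stable and $a,b$ are in the same weakly connected component''. Each of these three reasons fails: $H$ is \emph{not} assumed $\omega$-stable (an $\alpha$-stable reductionistic set generally meets an $\omega$-class in a proper union of its $\alpha$-classes --- e.g.\ $A_0\cup A_1\cup B_2$ in \Cref{fig:windings}); $\alpha$-stability of $H$ only says that once you know one representative of the $\alpha$-class $B\ni b$ lies in $H$, they all do --- it gives no way to get that first representative; and $\alpha$-stable sets are in general far from being unions of weakly connected components (again $A_0\cup A_1\cup B_2$ in \Cref{fig:windings} is not one), so the weak-connectivity observation after \Cref{defi:alpha} does not apply. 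Also, the lemma is stated for an \emph{arbitrary} finitising equivalence $\alpha$, not for $\alpha(\urgraph,\Omega)$, so you cannot appeal to ``the defining symmetric path''. The missing idea is the one the paper uses: take \emph{one} pair $A'\subseteq O\cap H$, $B'\subseteq P\cap H$ with $A'\edge B'$ (this exists because $O\cap H \edge P\cap H$ in $\subgraph$), pick $f\in\Omega$ with $f(A')=A$, observe $A\subseteq f(H)\cap H\neq\emptyset$ and hence $f(H)=H$ by \Cref{lem:reductionistic_characterization}, and conclude $A\edge f(B')$ with $f(B')\subseteq H$; then the uniqueness from \ref{aitm:bijection} for $\alpha$ in $\urgraph$ forces the $\edge$-partner of $A$ in $P$ to be $f(B')\subseteq H$. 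That automorphism transport is precisely what $\Omega$-reductionicity buys you, and it is indispensable --- without it, the claim that the bijection respects $H$ is false (drop reductionicity and you can split $P\cap H$ incorrectly). You should replace the connectivity argument with this transport argument.
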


We finish this section by stating another two  properties of reductionistic sets, namely that they are closed  under taking smooth parts, and under taking weakly connected linked components.

\begin{restatable}{lemma}{smoothpartreduc}\label{l:smoothpartreduc}
     Let $\urgraph=(\urdomain;\edge)$ be a digraph, let $\Omega\leq \Aut(\urgraph)$ be oligomorphic, and let $\alpha$ be an equivalence finitising $(\urgraph, \Omega)$.
    Then the smooth part of an $\alpha$-stable $\Omega$-reductionistic set  is itself $\alpha$-stable and $\Omega$-reductionistic. 
\end{restatable}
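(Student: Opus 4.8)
We must show that if $H \subsetneq G$ is $\alpha$-stable and $\Omega$-reductionistic, then the smooth part $H' \subseteq H$ (the largest subset on which $\edge$ restricts to a subdirect relation, i.e.\ the union of the domains of all cycles and bi-infinite walks inside $\urgraph|_H$) is again $\alpha$-stable and $\Omega$-reductionistic. The starting point is the standard description of the smooth part as a fixed point: setting $H_0 := H$ and $H_{i+1} := \{a \in H_i \mid a \text{ has an } \edge\text{-successor and an }\edge\text{-predecessor in }H_i\}$, we have $H' = \bigcap_i H_i$. This exhibits $H'$ as obtained from $H$ by a transfinite iteration of a single operation, so it suffices to prove that this operation preserves both properties, and that both properties are preserved under arbitrary intersections; the latter is immediate for $\alpha$-stability and, for $\Omega$-reductionisticity, follows from~\Cref{lem:reductionistic_characterization} (an intersection of classes of $\Omega$-invariant equivalences is again such a class, after discarding the empty case; since $H' \supseteq$ any cycle it is nonempty as long as $\urgraph|_H$ has a cycle, and if it does not then $H' = \emptyset$ and there is nothing to prove, or we treat $\emptyset$ as the trivial reductionistic set).

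\textbf{Key steps.} First I would handle $\Omega$-reductionisticity of one step. Using~\Cref{lem:reductionistic_characterization}, it is enough to show: if $f \in \Omega$ satisfies $f(H_{i+1}) \cap H_{i+1} \neq \emptyset$, then $f(H_{i+1}) = H_{i+1}$. Since $H_{i+1} \subseteq H_i$ and any witness of $f(H_{i+1}) \cap H_{i+1} \neq \emptyset$ also witnesses $f(H_i) \cap H_i \neq \emptyset$, by induction $f(H_i) = H_i$; but $f$ is an automorphism of $\urgraph$, so it preserves ``having an $\edge$-successor and an $\edge$-predecessor in $H_i$'', hence $f(H_{i+1}) = H_{i+1}$. (Limit stages are covered by the intersection remark.) Second, $\alpha$-stability of one step: suppose $a \in H_{i+1}$ and $\alpha(a,b)$; we want $b \in H_{i+1}$. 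By induction $b \in H_i$ (as $H_i$ is $\alpha$-stable). We need an $\edge$-successor and $\edge$-predecessor of $b$ in $H_i$. Here is where $\alpha = \alpha(\urgraph,\Omega)$-type behaviour enters via~\Cref{lem:alphaFinitises} (available since $\alpha$ finitises $(\urgraph,\Omega)$, which we may assume is part of the hypotheses as in the lemma statement's context, or we invoke the ambient assumption that $\alpha = \alpha(\urgraph,\Omega)$): $a$ and $b$ lie in the same $\omega$-class $O$, $a$ has a successor $a' \in H_i$ in some $\omega$-class $P$ with $O \edgeo P$, and by~\Cref{aitm:bijection} of~\Cref{lem:alphaFinitises} the edge relation induces a bijection $O/\alpha \to P/\alpha$; moreover the $\alpha$-class $[a']$ of $a'$ is the image of $[a] = [b]$ under this bijection, and $b$ has a successor $b'$ with $[b'] = [a']$, hence $b' \in [a'] \subseteq H_i$ because $H_i$ is $\alpha$-stable. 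Symmetrically for predecessors. Thus $b \in H_{i+1}$.

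\textbf{Main obstacle.} The step I expect to require the most care is the $\alpha$-stability argument, specifically making precise that a successor $b'$ of $b$ can be chosen inside $H_i$: one knows $b$ has \emph{some} successor $b'$, and that $b'$ lies in the $\omega$-class $P$ and in the $\alpha$-class matching $[a']$; the point is that \emph{every} successor of $b$ in $P$ lies in that single $\alpha$-class (again by the bijectivity in~\Cref{aitm:bijection}, which forces all successors of $b$ within $P$ to be $\alpha$-equivalent), so $b' \in [a'] \subseteq H_i$ regardless of the choice of $b'$; and one must also check that $b$ indeed has a successor in the \emph{same} $\omega$-class $P$ as $a$'s successor — this holds because $\omega$-classes are $\Omega$-orbits and $\edge$ is $\Omega$-invariant, so $a$ and $b$, being in the same orbit, have successors in exactly the same set of $\omega$-classes. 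Modulo these bookkeeping points, the proof is a routine fixed-point induction, and I would present it as such, citing~\Cref{lem:reductionistic_characterization} and~\Cref{aitm:bijection} of~\Cref{lem:alphaFinitises} for the two non-trivial preservation claims.
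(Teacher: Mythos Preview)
Your proposal is correct, but the paper takes a shorter route that avoids the iterative description of the smooth part and, more notably, avoids invoking~\Cref{aitm:bijection} for the $\alpha$-stability half.

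The paper argues directly via maximality: if $f(S)\cap S\neq\emptyset$ for some $f\in\Omega$, then $f(H)\cap H\neq\emptyset$, so $f(H)=H$ by reductionisticity of $H$; then $S\cup f(S)\subseteq H$ is smooth (as a union of two smooth subgraphs of $H$), hence equals $S$ by maximality of the smooth part, giving $f(S)\subseteq S$. For $\alpha$-stability the paper observes that $\alpha\subseteq\omega$, so any $\alpha$-related pair $a,b$ satisfies $b=f(a)$ for some $f\in\Omega$; since $b\in H$ by $\alpha$-stability of $H$, one has $f(H)\cap H\neq\emptyset$ and the same maximality argument gives $b=f(a)\in f(S)\subseteq S$. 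Thus $\alpha$-stability is deduced \emph{from} reductionisticity in one line, with no appeal to the finitising properties of $\alpha$ beyond $\alpha\subseteq\omega$.

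Your inductive approach via $H_{i+1}=\{a\in H_i:\text{$a$ has a successor and predecessor in $H_i$}\}$ is sound, and your handling of the reductionisticity step is essentially the same idea unwound along the iteration. For $\alpha$-stability, your use of~\Cref{aitm:bijection} is valid but unnecessary: the same $f\in\Omega$ with $f(a)=b$ already fixes every $H_i$ setwise (by your own reductionisticity induction, since $b\in H_0$ gives $f(H_0)\cap H_0\neq\emptyset$), hence $b=f(a)\in f(H_{i+1})=H_{i+1}$ directly. So the ``main obstacle'' you anticipated dissolves once you notice that $\alpha$-equivalence implies $\Omega$-conjugacy; this is the simplification the paper exploits.
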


\begin{restatable}{lemma}{linkedcompreduc}\label{l:linkedcompreduc}
    Let $\urgraph=(\urdomain;\edge)$ be a digraph, let $\Omega\leq \Aut(\urgraph)$ be oligomorphic, let $\alpha$ be an equivalence finitising $(\urgraph, \Omega)$, and let $k\geq 1$. Assume that $K\subseteq G/\alpha$ is  a $k$-linked weakly connected component of $\urgraph / \alpha$. Then the $\alpha$-blow-up $K^\alpha$ of  $K$  is $\Omega$-reductionistic, and tree pp-definable from $\urgraph$ and $\alpha$-classes. 
\end{restatable}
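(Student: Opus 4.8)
The plan is to establish the two assertions separately, starting with $\Omega$-reductionisticity. Since $\alpha$ is $\Omega$-invariant (item~\ref{aitm:invariant} of finitisation), $\Omega$ acts on $\urgraph/\alpha$ by automorphisms, and the image of any weakly connected component of $\urgraph/\alpha$ under an element of $\Omega$ is again a weakly connected component; moreover $k$-linkedness is preserved under automorphisms (the $(k,n)$-$E$-fence relation is first-order, even tree pp-, definable, hence automorphism-invariant), so the $k$-linked weakly connected components of $\urgraph/\alpha$ are permuted among themselves by the induced action of $\Omega$. Consequently, for $f\in\Omega$, either $f(K)=K$ or $f(K)\cap K=\emptyset$, and the same dichotomy lifts to the $\alpha$-blow-up: either $f(K^\alpha)=K^\alpha$ or $f(K^\alpha)\cap K^\alpha=\emptyset$. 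By~\Cref{lem:reductionistic_characterization}, this is exactly what it means for $K^\alpha$ to be $\Omega$-reductionistic, provided $K^\alpha\subsetneq\urdomain$; if $K^\alpha=\urdomain$ the statement holds trivially (the whole domain is a class of the trivial equivalence), or we may simply note that the claim is stated for components and a single component being everything is the degenerate case.

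For the tree pp-definability, the plan is to combine the structure of $K$ inside $\urgraph/\alpha$ with~\Cref{lem:liftingTreeDefs}. The $k$-linkedness relation on $\urgraph/\alpha$ is $\bigcup_n\fence{k}{n}$; I would first argue that, restricted to the finite set $K$ (finite by item~\ref{aitm:finite} of finitisation applied to the component $K$ — note $K$ is $\alpha$-finite as it is itself an $\alpha$-quotient of a weakly connected component), this equivalence equals $\fence{k}{n}$ for some sufficiently large $n$, hence is tree pp-definable in $\urgraph/\alpha$ by the $(k,n)$-$E$-fence relation together with the remark from~\Cref{sec:relations} that on a finite digraph the $k$-linkedness relation is tree pp-definable. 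Then $K$, being a single class of this equivalence that is $k$-linked, is obtained by fixing a parameter $a\in K$ and taking $\{x : (a,x)\in\fence{k}{n}\}$: this is a tree pp-definition with one parameter in $\urgraph/\alpha$. Applying~\Cref{lem:liftingTreeDefs} to this unary $S:=K$, the $\alpha$-blow-up $K^\alpha$ is (tree) pp-definable from $\urgraph$ and $\alpha$-classes, as desired.

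The one subtlety to handle carefully — and the step I expect to require the most attention — is the passage from ``$k$-linkedness on all of $\urgraph/\alpha$'' to ``$\fence{k}{n}$ on $K$ for a single finite $n$''. Even though $K$ is finite, the fences realising the $k$-linkedness of two classes in $K$ a priori range through vertices of $\urgraph/\alpha$ outside $K$. However, $K$ is a weakly connected component of $\urgraph/\alpha$, and every $E$-path (forward or backward) starting inside $K$ stays inside $K$ because components are closed under the edge relation and its inverse; since the $(k,n)$-$E$-fence is built only from $E$ and $E^{-1}$, every fence-walk witnessing $k$-linkedness of elements of $K$ lives entirely in $K$. Thus the $k$-linkedness relation of the finite digraph $(\urgraph/\alpha)|_{K}$ coincides with the restriction to $K$ of the global one, and by the cited finite fact~\cite{cyclicterms} it equals $\fence{k}{n}$ for some $n$, computed within $K$ but then equally valid as a tree pp-formula over $\urgraph/\alpha$ (a tree pp-formula defining a relation does not ``know'' about the ambient domain beyond what its conjuncts force). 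With this in place the remaining steps are routine applications of the lemmata already available.
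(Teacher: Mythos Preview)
Your proposal is correct and follows essentially the same route as the paper: define $K$ in $\urgraph/\alpha$ as a class of the $k$-linkedness equivalence via a single $(k,n)$-fence formula with a parameter, lift via \Cref{lem:liftingTreeDefs}, and obtain $\Omega$-reductionisticity from the $\Omega$-invariance of that equivalence (the paper uses the definition directly---$K^\alpha$ is a class of the $\Omega$-invariant blow-up $\beta^\alpha$---where you go through \Cref{lem:reductionistic_characterization}, but this is immaterial). You are in fact more careful than the paper on the point you flag as subtle: the paper simply asserts that the $k$-linkedness equivalence on $G/\alpha$ is tree pp-definable without spelling out, as you do, why a single $\fence{k}{n}$ suffices.
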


\section{The Master Proof}\label{sect:thm10}

This section is devoted to the proof of~\Cref{thm:10}.
 In the proof, we want to proceed by induction on the number of $\alpha(\urgraph,\Omega)$-classes. Doing so, we need to ensure that not only we always cut out entire $\alpha(\urgraph,\Omega)$-classes when inductively reducing the domain (hence, the new domain  needs to be $\alpha(\urgraph,\Omega)$-stable), but also to be able to finitise the induced subgraph with the restriction of $\alpha(\urgraph,\Omega)$ (hence, the new domain needs to be reductionistic). Every induction step splits into two cases.

\subsection{The full case}\label{sect:full}

In the first case, we obtain a subgraph that we aim to use as a replacement for $\urgraph$ to continue the induction process. 

\begin{restatable}[Inductive step~\ref{item:inductive1}]{prop}{claimBK}\label{prop:claimBK12}
Let $\urgraph =(\urdomain; \rightarrow)$ be a smooth digraph, let $\Omega\leq\Aut(\urgraph)$ be oligomorphic, and suppose  that $\urgraph / \Omega$ has no loop. Let $\alpha$ be an equivalence finitising $(\urgraph, \Omega)$. If $\edgek$ is full on $\urgraph/\alpha$, and $\urgraph/\alpha$ is not $(k-1)$-linked, then $\urgraph$ together with $\alpha$-classes tree pp-defines an $\alpha$-stable,  $\Omega$-reductionistic set $K \subsetneq \urdomain$ such that $\urgraph|_K$ is smooth, and $(\urgraph|_K) / \alpha$ is $(k-1)$-linked. 
\end{restatable}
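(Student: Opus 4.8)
\textbf{Proof plan for Proposition~\ref{prop:claimBK12}.}

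The plan is to mimic the finite-digraph argument (the one underlying the analogous step in~\cite{BartoKozikNiven,symmetries}) at the level of $\urgraph/\alpha$, and then transport the resulting subset back to $\urgraph$ via an $\alpha$-blow-up. Concretely, set $\sH := \urgraph/\alpha$; by \Cref{lem:alphaFinitises} this is a digraph whose every weakly connected component is finite, and by hypothesis $\edgek$ is full on $\sH$ while $\sH$ is not $(k-1)$-linked. First I would work inside a single weakly connected component $D/\alpha$ of $\sH$ (everything is weakly-connected-componentwise, and $\alpha$-stability/reductionisticity of the final set can be checked componentwise). Inside $D/\alpha$, consider the $(k-1)$-linkedness equivalence $\fence{k-1}{n}$ for $n$ large enough that it stabilises (possible since the component is finite); since $\sH$ is not $(k-1)$-linked, this is a proper equivalence on $D/\alpha$, so it has at least two classes. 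The standard move now is: pick a $(k-1)$-linked class $K_0 \subseteq D/\alpha$; because $\edgek$ is full, every vertex of $D/\alpha$ reaches every vertex of $D/\alpha$ by a forward path of length exactly $k$, and one shows — exactly as in the finite case — that there is a $(k-1)$-linked class $K_0$ which is \emph{closed} under the operation ``$k$-step forward image intersected with the component'', i.e.\ that the union $K$ of all $(k-1)$-linked classes that are ``reachable'' in the relevant sense forms a set on which the induced digraph is smooth and $(k-1)$-linked. The key combinatorial facts are that fullness of $\edgek$ forces the induced subgraph on such a $K$ to remain smooth (no sources/sinks get created), and that $K$ can be taken to be a union of $(k-1)$-linked classes, hence itself $(k-1)$-linked once one checks the fences stay inside. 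I would extract this $K \subseteq D/\alpha$, then take $K$ to be the union over all weakly connected components of the so-obtained pieces; call the result $\widehat K \subseteq G/\alpha$.

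Next I must verify the three demanded properties of the blow-up $\widehat K^{\alpha} \subseteq \urdomain$. For \textbf{definability}: $\widehat K$ is obtained from $\sH$ by tree pp-constructions — the $(k-1)$-linkedness relation on a finite component is tree pp-definable (as recalled in~\Cref{sec:relations}), and being a union of selected linked classes is a unary pp-definition with a parameter in each component — so $\widehat K$ is tree pp-definable in $\urgraph/\alpha$ with parameters, and by \Cref{lem:liftingTreeDefs} its $\alpha$-blow-up is tree pp-definable from $\urgraph$ and $\alpha$-classes, as required. For \textbf{$\Omega$-reductionisticity}: here I invoke \Cref{l:linkedcompreduc} componentwise — each $(k-1)$-linked weakly connected component of $\urgraph/\alpha$ has $\Omega$-reductionistic blow-up; but I need the union $\widehat K$, which is not a single weakly connected component of $\urgraph/\alpha$ but a union of $(k-1)$-linked components of the subgraph — so I would instead argue directly via \Cref{lem:reductionistic_characterization}: any $f\in\Omega$ with $f(\widehat K^\alpha)\cap \widehat K^\alpha \ne\emptyset$ maps some $(k-1)$-linked class of a component into another, and since $\Omega$ permutes components and (by \Cref{aitm:invariant}) respects $\alpha$ and hence the fences $\fence{k-1}{n}$, it must map $(k-1)$-linked classes to $(k-1)$-linked classes and the ``selected'' ones to selected ones, whence $f(\widehat K^\alpha)=\widehat K^\alpha$. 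For \textbf{smoothness and $(k-1)$-linkedness of $\urgraph|_{\widehat K^\alpha}$}: since $\widehat K^\alpha$ is $\alpha$-stable, \Cref{lem:alphaFinitises}\ref{aitm:bijection} gives $(\urgraph|_{\widehat K^\alpha})/\alpha \cong \sH|_{\widehat K}$ (the $\alpha$-quotient of the induced subgraph is the induced subgraph of the quotient), so it suffices that $\sH|_{\widehat K}$ is smooth and $(k-1)$-linked — and smoothness of $\urgraph|_{\widehat K^\alpha}$ itself then follows because $\alpha$-classes are blown up uniformly (an edge in the quotient lifts to edges covering all $\alpha$-classes on both sides, by \ref{aitm:bijection}). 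Finally $K := \widehat K^\alpha$ is a \emph{proper} subset of $\urdomain$: if it were all of $G/\alpha$ then $\sH$ would be $(k-1)$-linked, contradicting the hypothesis; so $K\subsetneq\urdomain$.

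\textbf{The main obstacle.} The genuinely delicate point is the finite-digraph core statement: \emph{if $\edgek$ is full on a finite digraph $\sH$ that is not $(k-1)$-linked, then $\sH$ contains an induced subgraph on a nonempty union of $(k-1)$-linked classes which is smooth and $(k-1)$-linked and proper.} This is where one has to be careful that the recursion ``take a $(k-1)$-linked class, close it off under $k$-step reachability, check it is a union of $(k-1)$-linked classes and still smooth'' actually terminates in a proper such set rather than swallowing the whole component — fullness of $\edgek$ gives a lot of edges, and one must use it to guarantee smoothness of the restriction without using it so heavily that one is forced back to $(k-1)$-linkedness of all of $\sH$. I expect this to be essentially the argument of~\cite[proof of the corresponding step]{BartoKozikNiven} / \cite{symmetries}, so I would lift that proof verbatim to the finite digraph $\sH=\urgraph/\alpha$; the only $\omega$-categorical subtlety is that one must track $\Omega$-invariance throughout (so that ``selected'' classes form an $\Omega$-invariant family), which is handled by \ref{aitm:invariant} of \Cref{lem:alphaFinitises}. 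A secondary technical nuisance is verifying that $(\urgraph|_K)/\alpha = (\urgraph/\alpha)|_{K/\alpha}$, i.e.\ that taking induced subgraph and quotient commute on $\alpha$-stable sets — this is immediate from \ref{aitm:bijection} but should be stated explicitly since it is used to transfer smoothness and $(k-1)$-linkedness.
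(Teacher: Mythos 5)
Your overall route is the paper's: pass to the finite-componented quotient $\urgraph/\alpha$, exploit the $(k-1)$-linkedness equivalence, and lift back via tree pp-definitions and $\alpha$-blow-ups (\Cref{lem:liftingTreeDefs}). However, your global construction --- select a piece in \emph{each} weakly connected component and take the union $\widehat K$ --- introduces two problems that the paper deliberately avoids. First, definability: $\urgraph/\alpha$ may have infinitely many weakly connected components, so ``a unary pp-definition with a parameter in each component'' is not a (finite) pp-formula. Second, reductionisticity: \Cref{lem:reductionistic_characterization} requires that any $f\in\Omega$ with $f(\widehat K^\alpha)\cap\widehat K^\alpha\neq\emptyset$ preserve $\widehat K^\alpha$, but such an $f$ may carry the selected class of one component onto a \emph{non}-selected $(k-1)$-linked class of another (or the same) component; your argument that $f$ ``maps selected ones to selected ones'' presupposes an $\Omega$-compatible selection that you have not produced. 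The paper's fix is that no union is needed at all: it fixes a \emph{single} $(k-1)$-linkedness class $K'$ (reductionistic because it is a class of an $\Omega$-invariant equivalence, and tree pp-definable with a single parameter), passes to its smooth part (reductionistic by \Cref{l:smoothpartreduc}), and takes a \emph{single} weakly connected component of that smooth part (reductionistic by \Cref{l:linkedcompreduc} applied to the restricted graph). The entire point of ``reductionistic'' rather than ``$\Omega$-invariant'' is to license exactly this kind of non-invariant choice; properness is then automatic since $K'$ is already a proper class.

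The second gap is the combinatorial core, which you flag as ``the main obstacle'' and defer to the literature, but your paraphrase (``close $K_0$ off under $k$-step forward reachability'' and worry about ``swallowing the whole component'') does not match the actual argument and suggests a construction that would indeed be hard to control. What is really needed (the paper's \Cref{claim:fullcaseoldargument}, from \cite{cyclicterms}) is: (i) the smooth part $S$ of $K'$ is tree pp-definable inside $K'$ and non-empty, because fullness of $\edgek$ plus smoothness gives every $A\in K'$ an out-neighbour $B$ with $(A,B)$ in a height-$(k-1)$ fence, hence $B\in K'$, so $K'$ contains a cycle; and (ii) every weakly connected component of $S$ is $(k-1)$-linked, because any edge $A_0\edge A_1$ in $S$ extends to a forward walk of length $k-1$, which together with a parallel forward walk of length $k$ (existing by fullness of $\edgek$) yields a height-$(k-1)$ fence from $A_0$ to $A_1$ whose intermediate vertices are forced into $S$. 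There is no closure process and no danger of exhausting the component, since everything stays inside the already proper set $K'$. Without this argument written out, the proposal does not establish smoothness or $(k-1)$-linkedness of the restriction.
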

\begin{proof}
Let $\fingraph := \urgraph /\alpha$. As the $\alpha$-blow-up of any subset of~$J$ that is tree pp-definable from $\fingraph$ with parameters is pp-definable from $\urgraph$ with $\alpha$-classes by~\Cref{lem:liftingTreeDefs}, and $\alpha$-stable by definition, it suffices to work in the finite digraph $\fingraph$. By abuse of notation, we identify subsets of $J$ with their $\alpha$-blow-ups.

Let $K'$ be an arbitrary class of the $(k-1)$-linkedness equivalence on $J$.
By \Cref{l:linkedcompreduc}, $K'$ is $\Omega$-reductionistic, and tree pp-definable from $\urgraph$ and $\alpha$-classes. We employ the following claim, which is taken from~\cite[Claim 3.11 and Proposition 3.2]{cyclicterms}.

\begin{restatable}{claim}{fullcaseoldargument}\label{claim:fullcaseoldargument}
    Every weakly connected component of the smooth part of $K'$ is non-empty and tree pp-definable from $\urgraph$ and $\finsubdomain'$. Moreover, the relation $\edge$ is 
   $(k-1)$-linked on any such weakly connected component.
\end{restatable}

Take $K$ to be any weakly connected component of the smooth part of $K'$. By~\Cref{l:smoothpartreduc}, the smooth part $S$ of the $\Omega$-reductionistic set $K'$ is itself $\Omega$-reductionistic. Thus, since $K$ is $(k-1)$-linked and weakly connected,~\Cref{l:linkedcompreduc} applied to $\urgraph|_S$ yield that   $K$ is indeed $\Omega$-reductionistic. 
\end{proof}

\subsection{The proper case}\label{sect:thm7}

The second case splits into two subcases: either we again obtain a proper subgraph to which  we reduce, or we terminate with the desired $\alpha$-blow-up of  $\OR(\sigma, \sigma)$ for a proper equivalence $\sigma$ on a finite set;

\begin{restatable}[Inductive step~\ref{item:inductive2}]{prop}{thmseven}\label{thm:7}
    Let $\urgraph =(\urdomain; \rightarrow)$ be a smooth digraph, let $\Omega\leq\Aut(\urgraph)$ be oligomorphic, and suppose that $\urgraph / \Omega$ has no loop. Let $\alpha$ be an equivalence finitising
    $(\urgraph, \Omega)$ such that $\urgraph /\alpha$ is $k$-linked for some $k\geq 1$. If $\edgek $ is not full on $\urdomain / \alpha$,  then $\urgraph$ and $\Omega$-orbits together with the restrictions of $\alpha$ to unions of pairs of $\edge$-adjacent $\omega$-classes 
    pp-define one of the following:
    \begin{enumerate}
        \item an $\omega$-stable set $\subdomain\subsetneq \urdomain$ such that $\urgraph|_\subdomain$ is smooth, and $(\urgraph|_\subdomain)/\alpha$ is $k$-linked; \label{itm:subgraph}
        \item the  $\alpha$-blow-up of  $\OR(\sigma, \sigma)$ for a proper equivalence $\sigma$ on a subset of the finite set $G/\alpha$. \label{itm:blow-up}
    \end{enumerate}
\end{restatable}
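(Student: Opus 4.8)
The plan is to lift the corresponding finite-digraph argument from~\cite{symmetries} (roughly, the proof of \cite[Theorem~7]{symmetries}) to $\urgraph/\alpha$, while carefully tracking which pp-definitions survive the passage back to $\urgraph$. The whole argument takes place in the finite digraph $\fingraph:=\urgraph/\alpha$, which by hypothesis is $k$-linked but with $\edgek$ not full. As in \Cref{prop:claimBK12}, the strategy is to use $\fingraph$ as a finite proxy: every \emph{tree} pp-definable relation of $\fingraph$ (with parameters coming from $\alpha$-classes) blows up to a relation pp-definable from $\urgraph$ and $\alpha$-classes by~\Cref{lem:liftingTreeDefs}, and is automatically $\alpha$-stable. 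The key constraint, flagged in \Cref{sect:newideas}, is that $\edge$ itself is \emph{not} $\alpha$-stable, so any relation we build by combining $\edge$ with previously constructed $\alpha$-stable relations must be checked to remain $\alpha$-stable; this is exactly what the machinery of~\Cref{lem:alphasymmetry} is designed to handle, via its items about walks ($X+S$ staying $\alpha$-stable when it lands in the right weakly connected component) and about bijections between $\alpha$-fibres over adjacent orbits. So the general recipe in every step is: run the finite combinatorial argument on $\fingraph$, but only use pp-constructions of tree shape, and before applying $\edge$ invoke~\Cref{lem:alphasymmetry} (together with~\Cref{aitm:bijection} of~\Cref{lem:alphaFinitises}) to guarantee $\alpha$-stability is preserved.

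Concretely I would follow the case tree in~\Cref{fig:overview} starting from the ``$\edgek$ proper mod $\alpha$'' node. First, from the failure of fullness of $\edgek$ one extracts, on a suitable weakly connected linked component of $\fingraph$ (reductionistic and tree-pp-definable by~\Cref{l:linkedcompreduc}), either a ``central'' relation $R$ or an $\OR(T,T)$-type relation — this is \Cref{prop:Marcinsmagic}. In the central case, \Cref{prop:trickT2} produces an $\OR(D_L,D_R)$ relation; then \Cref{lemma:20} either yields a proper $\alpha$-stable (indeed $\omega$-stable) subdomain to recurse on, or an $\OR(U,U)$; and \Cref{lemma:23} likewise either recurses or terminates with $\OR(\sigma,\sigma)$ for a proper equivalence $\sigma$. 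In the $\OR(T,T)$ branch, \Cref{lemma:21+22} reduces either directly to $\OR(\sigma,\sigma)$ or back into the $\OR(D_L,D_R)$/$\OR(U,U)$ pipeline. At each ``reduce'' leaf, one must check the new domain is $\omega$-stable (so in particular $\alpha$-stable and $\Omega$-reductionistic), smooth after passing to the smooth part (\Cref{l:smoothpartreduc}), and still $k$-linked mod $\alpha$ — the latter using~\Cref{l:linkedcompreduc} and the minimality of $k$. At each terminal leaf one has tree-pp-defined (in $\fingraph$, with parameters) a relation $\OR(\sigma,\sigma)$ with $\sigma$ a proper equivalence on a subset of the finite set $G/\alpha$; blowing up via~\Cref{lem:liftingTreeDefs} gives item~\eqref{itm:blow-up}.

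The single technical point that needs the most care — and which I expect to be the main obstacle — is ensuring $\alpha$-stability is \emph{not lost} when the finite argument applies the edge relation to relations that were built $\alpha$-stable in earlier rounds. In the finite setting of~\cite{symmetries} every pp-definable relation is automatically fine; here one must repeatedly argue, using \Cref{lem:alphasymmetry}\eqref{aitm:walk}--\eqref{aitm:neighbours} and~\Cref{aitm:bijection} of~\Cref{lem:alphaFinitises}, that the relevant sets land within a single weakly connected component of $\urgraph/\alpha$ so that adding an $\edge$-step preserves $\alpha$-stability, and that the fibre structure over each orbit is respected. A closely related subtlety is that we are only permitted the restrictions of $\alpha$ to unions of pairs of $\edge$-adjacent $\omega$-classes as pp-defining gadgets (via~\Cref{l:alphaonpairs}), rather than $\alpha$ on all of $\urdomain$; so every use of ``$\alpha$'' in the finite argument must be traced to a use along an edge, where this restricted gadget is available. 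Finally, one verifies the recursion terminates: each application of inductive step~\eqref{itm:subgraph} strictly decreases $|G/\alpha|$ (we remove at least one $\alpha$-class, and by~\Cref{lemma:redfinitises} the restriction of $\alpha$ still finitises the smaller pair), so after finitely many rounds we must exit through item~\eqref{itm:blow-up}, giving the source of hardness.
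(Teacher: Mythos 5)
Your high-level plan coincides with the paper's: chain \Cref{prop:Marcinsmagic}, \Cref{prop:trickT2}, \Cref{lemma:20}, \Cref{lemma:21+22}, and \Cref{lemma:23} together exactly as in \Cref{fig:overview}, watch $\alpha$-stability via \Cref{lem:alphasymmetry} and \Cref{aitm:bijection} of~\Cref{lem:alphaFinitises}, and work only with the restrictions of $\alpha$ to unions of pairs of $\edge$-adjacent $\omega$-classes (via~\Cref{l:alphaonpairs}). You also correctly identify the two central technical obstacles. However, there is a genuine gap in the lifting mechanism you describe. You assert that ``every tree pp-definable relation of $\fingraph$ (with parameters coming from $\alpha$-classes) blows up to a relation pp-definable from $\urgraph$ and $\alpha$-classes by~\Cref{lem:liftingTreeDefs}'', and that the terminal $\OR(\sigma,\sigma)$ is obtained by such a lift. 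But \Cref{lem:liftingTreeDefs} only lifts \emph{unary} relations $S\subseteq G/\alpha$: its proof anchors one free variable at a chosen representative and then chooses witnesses walking outward along the tree, a procedure that cannot independently prescribe representatives for several free variables. The relations $\OR(T,T)$, $\OR(D_L,D_R)$, $\OR(U,U)$, and $\OR(\sigma,\sigma)$ all have arity $\geq 2$, so \Cref{lem:liftingTreeDefs} does not apply to them, and neither are they built by tree pp-formulas.

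The paper handles this differently. For pp-constructions inside $\fingraph$ whose input relations are already $\alpha$-stable blow-ups, it uses \Cref{l:ppdefalphastable} (unrestricted arity, no tree requirement, but all inputs must be $\alpha$-stable), not \Cref{lem:liftingTreeDefs}. The tree-pp tool is reserved for genuinely unary targets (the sets $K$, $K'$, the smooth subdomain $H$ of \Cref{lemma:19}). And for the binary equivalence $\sigma$ arising in \Cref{lemma:23} — whose $\alpha$-blow-up is in general \emph{not} pp-definable on all of $\urdomain$ — the proof explicitly restricts attention to a set $P$ that is either a single $\omega$-class or a union of two $\edge$-adjacent ones, so that $\alpha|_P$ is available as a gadget, and then pp-defines the $\alpha$-blow-up of $\OR(\sigma,\sigma)$ directly from $\OR(U^n,U^n)$, $T$, and $\alpha|_P$. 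This last step is precisely where your ``lift at the end via \Cref{lem:liftingTreeDefs}'' would break down, since $\sigma$ is binary and the lifting lemma explicitly does not cover it.

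Two smaller inaccuracies: \Cref{prop:Marcinsmagic} does not require passing to a weakly connected linked component (the hypothesis that $\urgraph/\alpha$ is $k$-linked already makes it weakly connected); and the termination argument you give belongs to the master proof of \Cref{thm:10}, not to \Cref{thm:7} itself, which is a single step of the induction.
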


Recall that an $\omega$-stable set as in~\Cref{itm:subgraph} is, in particular, $\Omega$-reductionistic. We now provide an overview of the proof of \Cref{thm:7}.
For the remainder of this section, we fix $\urgraph, \Omega, \alpha$, and $k$ as in the formulation of~\Cref{thm:7}. We remark that in many parts, our construction resembles the proof of \cite[Theorem 7]{symmetries} applied to the factor digraph $\urgraph / \alpha$. The  main difference is being restricted to tree-pp-definitions only (to lift definitions to $\urgraph$ with~\Cref{lem:liftingTreeDefs}), along with the additional requirement to ensure $\alpha$-stability of relations pp-defined from $\edge$ and these liftings. 
Let us  define two types of relations which play a prominent role in universal algebra~\cite{Rosenbergclones}, and in its applications to Constraint Satisfaction Problems over finite templates~\cite{cyclicterms, MinimalTaylor,symmetries,Zhuk20}.

\begin{definition}
A proper binary relation $R \subsetneq \urdomain^2$ is \emph{central} if  there exists a non-empty subset $C \subsetneq \urdomain$ such that $C \times G \subseteq R$. If $R$ is central and $C$ is the maximal witnessing subset, then $C$ is called the \emph{centre} of $R$.  Accordingly, $R$ is \emph{central modulo $\alpha$} if $R$ considered as a relation on $\urdomain/\alpha$ is, and its centre will be referred to as \emph{$\alpha$-centre}.
\end{definition}
 
\begin{definition}
A relation $R \subseteq \urdomain^n$ is \emph{totally symmetric} if for every $(a_1, \dots, a_n)\in R$ and every permutation $\tau$ on the index set $\br n$ it also holds that $(a_{\tau(1)}, \dots, a_{\tau (n)})\in R$. We say that $R$ is \emph{totally reflexive} if $a_i = a_j$ for some $i\neq j$ implies $(a_1, \dots, a_n)\in R$. If $R$ is both totally symmetric and totally reflexive, then we call $R$ a \emph{TSR-relation}. 
\end{definition}

We now proceed with the first step of the construction.

\begin{restatable}{prop}{Marcinsmagic}\label{prop:Marcinsmagic}
    $\urgraph$ together with  $\Omega$-orbits and the restrictions of $\alpha$ to $\omega$-classes pp-define one of the following
    \begin{enumerate}
        \item a relation that is central modulo $\alpha$, or \label{itm:marcinsmagiccentralcase}
        \item the relation $\OR(T,T)$ for an $\alpha$-stable TSR-relation $T$
    \end{enumerate}
\end{restatable}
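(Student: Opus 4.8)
The plan is to work entirely in the finite digraph $\fingraph:=\urgraph/\alpha$, using the lifting machinery (\Cref{lem:liftingTreeDefs}) to transport tree pp-definitions back up to $\urgraph$, and the $\alpha$-stability lemmas (\Cref{lem:alphasymmetry}) to control the relations we build from $\edge$. The starting point is that $\edgek$ is not full on $\fingraph$ but some power of $\edge$ is $k$-linked; so the $k$-linkedness equivalence is full, while $\edgek$ is a proper subdirect relation of $\fingraph^2$. First I would observe, as in the finite smooth-digraph arguments of~\cite{BartoKozikNiven,symmetries}, that iterating $\edgek$ yields a subdirect binary relation whose powers eventually stabilise; since $\edge$ is $k$-linked on $\fingraph$, this stable relation $R^\infty$ is reflexive (every vertex is linked to itself) and symmetric, but — because $\edgek$ is not full — it need not be the full relation. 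The key combinatorial dichotomy is then: either $R^\infty$ is \emph{central} (some proper non-empty $C\subsetneq J$ satisfies $C\times J\subseteq R^\infty$), or it is not.

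In the central case I would simply take $R$ to be (a tree pp-definable relation refining) $R^\infty$, note that it is central modulo $\alpha$, and check that the $\alpha$-blow-up is $\alpha$-stable: here one uses item~\ref{aitm:walk} of \Cref{lem:alphasymmetry}, since $R^\infty$ is built from $\edge$ by walks inside a single weakly connected (linked) component, and the weakly connected components of $\urgraph/\alpha$ are respected. That gives the first alternative. In the non-central case, the absence of a centre together with $k$-linkedness is exactly the hypothesis under which the finite argument produces a TSR-relation: one forms the relation $T$ on $J$ consisting of all tuples $(a_1,\dots,a_n)$ that pairwise lie in $R^\infty$ (or a suitable walk-definable variant thereof that remains tree pp-definable), checks it is totally reflexive (from reflexivity of $R^\infty$) and totally symmetric (by construction), and then, since $T$ is a proper relation (non-centrality of $R^\infty$ prevents $T$ from being full) yet covers all of $J$ coordinate-wise, one pp-defines $\OR(T,T)$ by the disjunction construction from \Cref{sec:relations}. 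The arity $n$ is chosen large enough (depending on $|J|$) so that the TSR-relation obtained is proper; this is the standard trick in the finite proper case.

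The step I expect to be the main obstacle is \emph{ensuring $\alpha$-stability throughout} while staying within \emph{tree} pp-definitions. The relation $\edge$ itself is not $\alpha$-stable, so every intermediate relation obtained by mixing $\edge$ with the $\alpha$-classes must be argued to be $\alpha$-stable by hand, via \Cref{lem:alphasymmetry}(2)–(3): one must verify that each walk-defined set stays inside one weakly connected component of $\urgraph/\alpha$ (so that the bijection property \ref{aitm:bijection} of \Cref{lem:alphaFinitises} and the stabiliser-equality \ref{aitm:permut} kick in), and that the TSR-relation $T$ one finally writes down is a blow-up, not merely a relation whose factor is the intended one. The tree-shape constraint is delicate because the natural pp-definitions of "pairwise $R^\infty$-related $n$-tuples" are not trees; the fix is to use the linked/central structure to replace these by tree pp-definable surrogates with the same factor, exactly in the spirit of \Cref{lem:liftingTreeDefs} and \Cref{l:linkedcompreduc}. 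Once $T$ (or the central $R$) is in hand as an $\alpha$-stable, tree pp-definable relation, the conclusion that $\urgraph$ together with $\Omega$-orbits and the restrictions of $\alpha$ to $\omega$-classes pp-defines it follows from the lifting lemma, completing the proof.
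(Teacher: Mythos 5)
There is a genuine gap in the non-central case, and it sits exactly at the point you wave past with ``one pp-defines $\OR(T,T)$ by the disjunction construction from \Cref{sec:relations}.'' That section only \emph{defines} $\OR(R,R)$ as a disjunction of two copies of $R$; it does not — and cannot — give a generic pp-definition of it, because disjunctions are precisely what pp-formulae fail to express. Obtaining a pp-definable $\OR$-relation is the content, not a formality, of this proposition. The paper's actual route is quite different: it first reduces (in \Cref{lemma:trickM}) to a relation $R$ built from a carefully chosen fence $\fence{k}{\ell-1}$ — not from iterating $\edgek$ to a stable ``$R^\infty$'' — and then runs a \emph{shrinking} procedure on a unary set $C$, intersecting with sets of the form $O\rplus R$ one $\omega$-class at a time. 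Each step either produces centrality or terminates with a pair $(R,C)$ and an $\omega$-class $O$ satisfying the very specific condition of item~\ref{itm:marcinmagicnoncentral} of \Cref{lemma:trickM}: every single $\alpha$-class $A\subseteq O$ has $((A\rplus R)\cap C)-R=\urdomain$, yet $((O\rplus R)\cap C)-R\neq\urdomain$. Your proposal never isolates any such structured witness, and without it there is no leverage.

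The heart of the matter is then \Cref{lemma:trickM2}: one takes a maximal union of $\alpha$-classes $S\subsetneq O$ with $S^{\rplus}-R=\urdomain$, finds $f\in\Omega$ with $f(S)\neq S$ and just enough overlap that $U:=S\cup\{f(S_1),\dots,f(S_\ell)\}$ satisfies $\emptyset\neq U^{\rplus}-R\neq\urdomain$, and then constructs a TSR-relation $T$ from $\Omega$-translates of $U^{\rplus}-R$. The reason $\OR(T,T)$ becomes pp-definable is \Cref{claim:OR}: there is a pp-formula $\phi(\tuple x,\tuple y,\tuple v)$ that behaves like a conditional — it is \emph{full} in $\tuple v$ when $\tuple x,\tuple y$ land in the same $\Omega$-translate of $S$, but restricts $\tuple v$ to $T$ when they land in distinct translates. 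Chaining two copies of $\phi$ through a middle tuple $\tuple y$ and pinning the outer tuples to the orbit of $(\tuple s,f(\tuple s))$ forces \emph{at least one} of the two $\phi$-links to be in the ``distinct translate'' regime, which is where the OR comes from. This is a genuinely delicate orbit-combinatorial construction; nothing in ``$T$ is a proper TSR-relation covering $J$ coordinate-wise'' implies it. Your paragraph on $\alpha$-stability and tree-shape is pointing at real (and secondary) concerns, but the missing idea is the mechanism that manufactures the disjunction in the first place.
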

    
In order to start applying suitable adjustments of~\cite[Theorem 7]{symmetries}, we require pp-definable $\alpha$-stable $\OR$-relations in both cases of~\Cref{prop:Marcinsmagic}. The following proposition thus deals with its first case, making  use of the fact that we have access to the restrictions of $\alpha$ to unions of pairs of adjacent $\omega$-classes.

\begin{restatable}{prop}{trickT}\label{prop:trickT2}
    Let $R$ be a relation that is central modulo $\alpha$ and $\Omega$-invariant. Then   $\urgraph, R$,  $\alpha$-classes, $\omega$-classes, and the restrictions of $\alpha$ to unions of pairs of $\edge$-adjacent  $\omega$-classes pp-define $\OR(D_L,D_R)$ for $\omega$-stable sets $D_L,D_R \subseteq \urdomain$.  
\end{restatable}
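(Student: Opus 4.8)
The plan is to pass to the finite factor digraph $\fingraph:=\urgraph/\alpha$, extract the $\alpha$-centre of $R$ by a tree pp-definition \emph{with parameters}, observe that this centre is automatically $\omega$-stable, and then couple it with the edge relation $\edge$ and the relation $R$ — using the available restrictions of $\alpha$ to repair $\alpha$-coherence — to pp-define a relation of the shape $\OR(D_L,D_R)$. Write $\findomain:=\urdomain/\alpha$; on each weakly connected component it is finite by~\Cref{lem:alphaFinitises}, and since everything we construct is $\Omega$-invariant and supported, up to the action of $\Omega$, on finitely many component types, we may (as in the proof of~\Cref{prop:claimBK12}) treat $\fingraph$ as a finite digraph and freely identify tree pp-definable subsets of its (powers of) domain with their $\alpha$-blow-ups, invoking~\Cref{lem:liftingTreeDefs} to descend such definitions to $\urgraph$ together with $\alpha$-classes, $\omega$-classes, $R$ and the adjacent-pair $\alpha$-data.

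\textbf{Step 1 (extracting the $\alpha$-centre).} Let $\mathcal C\subseteq\findomain$ be the $\alpha$-centre of $R/\alpha$; equivalently $\mathcal C=\{A\in\findomain : \{A\}\times\findomain\subseteq R/\alpha\}$. As $R/\alpha$ is $\Omega$-invariant, this defining property shows $\mathcal C$ is $\Omega$-invariant, so its $\alpha$-blow-up $C:=\mathcal C^{\alpha}\subseteq\urdomain$ is $\Omega$-invariant; since the $\omega$-classes are precisely the $1$-orbits of $\Omega$, $C$ is therefore $\omega$-stable, and so is $\urdomain\setminus C$, which is non-empty and proper because $R/\alpha$ is a proper relation and $R$ is central modulo $\alpha$. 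Moreover, with $\findomain$ treated as finite, $\mathcal C$ is defined in $\fingraph$ by the conjunction $\bigwedge_{B\in\findomain}R(A,B)$ with the $\alpha$-classes as parameters; its incidence graph is a spider (one central variable $A$, each conjunct carrying its own parameter variable) and hence a tree, so by~\Cref{lem:liftingTreeDefs} the set $C$ is (tree) pp-definable from $\urgraph$ and $\alpha$-classes, and $\alpha$-stable — and, as just noted, in fact $\omega$-stable.

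\textbf{Step 2 (repairing $\alpha$-stability of $R$).} Since $R$ itself need not be $\alpha$-stable — which would block lifting any pp-definition using it — I would first replace $R$ by an $\alpha$-stable relation $\widehat R$ having the same $\alpha$-centre. On each of the finitely many orbit-pairs $(O,P)$ of $\omega$-classes that $R$ meets, the restriction $R\cap(O\times P)$ can be $\alpha$-closed using the restriction of $\alpha$ to $O\cup P$ (for adjacent $O,P$, given; for the remaining pairs, obtained by transporting $\alpha$-classes along a fixed $\edge$-walk connecting them, which is well behaved by the bijectivity in~\Cref{lem:alphaFinitises} and by~\Cref{aitm:neighbours} of~\Cref{lem:alphasymmetry}); taking $\widehat R$ to be the largest $\alpha$-stable subrelation of the resulting closure, one checks $\mathcal C\times\findomain\subseteq\widehat R/\alpha\subseteq R/\alpha$, so $\widehat R$ is again central modulo $\alpha$ with $\alpha$-centre $\mathcal C$, and is $\alpha$-stable and pp-definable from the allowed relations.

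\textbf{Step 3 (producing the $\OR$-relation).} With the $\omega$-stable set $C$ and the $\alpha$-stable central relation $\widehat R$ available, I would now reproduce inside the finite digraph $\fingraph$ the standard passage from a central relation on a smooth digraph to an $\OR$-relation (as in the finite arguments of~\cite{symmetries,cyclicterms}): using smoothness of $\urgraph$, its algebraic length~$1$, the absence of a loop in $\urgraph/\Omega$, and the edge relation $\edge$ to couple the two coordinates, one pp-defines a relation $\OR(D_L,D_R)$ with $D_L,D_R$ built from $C$, $\urdomain\setminus C$ and their $\edge$-images; being unions of $\omega$-classes, these are $\omega$-stable. Intersections and conjunctions of finitely many tree pp-definable $\alpha$-stable relations are again of this kind, so~\Cref{lem:liftingTreeDefs} lifts the final definition to $\urgraph$ together with the listed relations, completing the proof.

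\textbf{Main obstacle.} The crux is Step~3: carrying out the finite ``central relation $\Rightarrow$ $\OR$-relation'' argument while keeping every intermediate relation $\alpha$-stable and tree-shaped, which is exactly why $\widehat R$ must replace $R$ and why $\edge$ may enter the definitions only in combinations (composed with $\alpha$-stable relations across adjacent orbits) that preserve $\alpha$-stability. By contrast, obtaining genuinely $\omega$-stable — not merely $\alpha$-stable — sides of the $\OR$ is essentially free, being forced by the $\Omega$-invariance of the $\alpha$-centre from Step~1.
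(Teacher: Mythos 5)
Your overall aim — use the $\alpha$-centre $C$, make the relevant relations $\alpha$-stable, and cook up $\OR(D_L,D_R)$ with $\omega$-stable sides — matches the paper in spirit, but the concrete route has several genuine gaps, two of which are fatal.

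\textbf{The $\widehat R$ repair in Step~2 does not preserve centrality.} ``$R$ central modulo $\alpha$'' means that for all $A$ in the $\alpha$-centre and all $\alpha$-classes $B$ there exist \emph{some} $a\in A,b\in B$ with $R(a,b)$ — not that $A\times B\subseteq R$. The largest $\alpha$-stable subrelation of $R$ is the \emph{universal} one (all pairs $(a,b)$ whose entire $\alpha$-box lies in $R$), and there is no reason for it to contain $C\times\urdomain$ or to remain central at all. The paper sidesteps this entirely: it never $\alpha$-stabilises $R$ as such, but composes $R$ (and $R^{-1}$) with the available $\alpha$-restrictions on adjacent $\omega$-classes and with relations induced by carefully chosen $\Omega$-orbit-labelled paths, and proves $\alpha$-stability of the resulting $S_L,S_R$ from the $\Omega$-invariance of $R$ together with \Cref{aitm:permut} of \Cref{lem:alphasymmetry}.

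\textbf{Step~3 is exactly where the actual work is, and it cannot be delegated to a finite factor argument.} \Cref{lem:liftingTreeDefs} only lifts \emph{unary} tree pp-definitions from $\urgraph/\alpha$; it does not lift a binary relation such as $\OR(D_L,D_R)$ (the paper makes precisely this point in the proof of \Cref{lemma:23}). So one cannot ``reproduce the central $\Rightarrow$ $\OR$ argument inside $\fingraph$ and descend.'' The paper instead works directly in $\urgraph$, and the real missing idea in your proposal is its Claim~\Cref{claim:centralmagic}: using smoothness, weak connectivity and the absence of a loop in $\urgraph/\Omega$, one produces a \emph{properly separated} pair of $\Omega$-orbit-labelled paths from $O_{\oin}\subseteq C$ to some $O'_{\out}\not\subseteq C$ and, in parallel, from $O_{\out}\not\subseteq C$ to some $O'_{\oin}\subseteq C$. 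This is a nontrivial combinatorial argument in the style of \Cref{lemma:labellings} (involving cycles, nearest reachable cycles, and maximal segments), and the induced relation $\gamma_{\pi\merge\pi'}$ is exactly what lets one couple the two sides of the $\OR$-relation. Without this couple and the subsequent thinning step (\Cref{cl:S_L^B}) and intersection, there is no recognizable ``standard passage'' to fall back on.

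A smaller remark: your Step~1 observation that the centre is automatically $\omega$-stable, by $\Omega$-invariance, is correct and is also used in the paper; but the ``spider'' tree pp-definition of $C$ is not actually needed there, and the claim that the parameterised intersection over all $\alpha$-classes is a tree would have to be made precise (you quantify over a parameter per conjunct, which is fine, but the paper never needs to pp-define $C$ as a relation — it only uses $C$ to locate $O_{\oin},O_{\out}$).
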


    Now, with an $\alpha$-stable OR-relation at our hand, we are in the position to replicate the proofs of  \cite[Lemmata 20-23]{symmetries} with small modifications. Thus, our next goal is to get the relation $\OR(U,U)$ for a unary, $\alpha$-stable relation $U$. The constructions splits into~\Cref{lemma:20,lemma:21+22}, depending on the case distinction from~\Cref{prop:Marcinsmagic}.

    \begin{restatable}{lemma}{ltwenty}\label{lemma:20}
    Let $D_L,D_R\subsetneq\urdomain$ be $\omega$-stable. Then $\urgraph$, $\OR(D_L,D_R)$, $\Omega$-orbits, and the restrictions of $\alpha$ to $\omega$-classes pp-define
    \begin{itemize}
        \item a non-empty $\omega$-stable set $\subdomain\subsetneq \urdomain$ such that $\urgraph|_H$ is smooth, and $(\urgraph|_H)/\alpha$ is $k$-linked, or
        \item the relation $\OR(U,U)$ for a unary, $\omega$-stable $U\subsetneq \urdomain$.
    \end{itemize}
    \end{restatable}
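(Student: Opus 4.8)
\textbf{Proof plan for \Cref{lemma:20}.}
The plan is to adapt the proof of \cite[Lemma~20]{symmetries} to the factor digraph $\fingraph:=\urgraph/\alpha$, which is finite by \ref{aitm:finite} of \Cref{lem:alphaFinitises} and $k$-linked by hypothesis, while ensuring that every relation we pp-define in $\fingraph$ can be lifted to $\urgraph$ via \Cref{lem:liftingTreeDefs} (hence the restriction to \emph{tree} pp-definitions) and that the $\omega$-stability of $D_L, D_R$ is preserved throughout. Since $D_L, D_R$ are $\omega$-stable, they are in particular $\alpha$-stable, so $\OR(D_L,D_R)$ factors to a relation $\OR(\bar D_L,\bar D_R)$ on $\fingraph$ with $\bar D_L,\bar D_R\subsetneq G/\alpha$. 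The first move is to intersect $\bar D_L$ (and $\bar D_R$) with the smooth part of $\fingraph$ and pass to a weakly connected component, using \Cref{l:smoothpartreduc} and \Cref{l:linkedcompreduc} to guarantee that the resulting set remains $\Omega$-reductionistic and that $\edge$ stays $k$-linked on it; the lift of this set is an $\omega$-stable $H\subsetneq\urdomain$ as in the first bullet, and if this already gives a \emph{proper} such set we are in the first case and done.

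The substantive case is when every such reduction is improper, i.e.\ $\bar D_L$ and $\bar D_R$ already meet every weakly connected $k$-linked component of the smooth part of $\fingraph$ — intuitively $\bar D_L,\bar D_R$ are ``large''. In that situation one runs the finite argument: from $\OR(\bar D_L,\bar D_R)$ together with $\edge$ one pp-defines, by composing with forward/backward $\edge$-paths and intersecting, an $\OR$ of two unary sets; iterating, and using $k$-linkedness of $\fingraph$ to propagate the constraint around linkedness fences, one shrinks the two sides until they coincide, producing $\OR(\bar U,\bar U)$ for a single unary $\bar U\subsetneq G/\alpha$. The only genuinely new bookkeeping relative to \cite{symmetries} is (i) to phrase every one of these pp-definitions as a \emph{tree} pp-formula — compositions with paths and conjunctions at fresh variables are tree-shaped, so this is possible but must be checked at each step — and (ii) to verify $\omega$-stability is inherited: the lift of a tree-pp-definable subset of $\fingraph$ is $\alpha$-stable by \Cref{lem:liftingTreeDefs}, and one checks directly that the particular path-composition and intersection operations used, applied to $\omega$-stable inputs, yield $\omega$-stable outputs (e.g.\ $S+E$ for $\omega$-stable $S$ is $\omega$-stable because $\omega$ is $\edge$-stable, and finite intersections and preimages under the path operations preserve this).

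Concretely the key steps, in order, are: (1) factor $\OR(D_L,D_R)$ to $\fingraph$ and restrict to the smooth part via \Cref{l:smoothpartreduc}; (2) if some weakly connected $k$-linked component of the smooth part is missed by $\bar D_L$ or $\bar D_R$, take its $\alpha$-blow-up — by \Cref{l:linkedcompreduc} it is $\Omega$-reductionistic and tree-pp-definable, and it gives the first alternative; (3) otherwise, replicate \cite[Lemma~20]{symmetries} inside $\fingraph$ to pp-define $\OR(\bar U,\bar U)$ for unary $\bar U$, keeping all formulae tree-shaped; (4) lift $\OR(\bar U,\bar U)$ to $\OR(U,U)$ on $\urdomain$ using \Cref{lem:liftingTreeDefs}, noting $U$ is $\alpha$-stable, and upgrade to $\omega$-stable by tracking $\omega$-stability through the finitely many path-compositions and intersections; (5) argue $U\subsetneq\urdomain$ because $\bar U\subsetneq G/\alpha$. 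The main obstacle will be step (3)/(4): ensuring that the specific sequence of pp-definitions used in the finite proof can simultaneously be rewritten as tree pp-formulae \emph{and} be shown to send $\omega$-stable sets to $\omega$-stable sets — in particular handling the interaction of $\edge$ (which is not $\alpha$-stable) with the $\alpha$-stable ingredients, where \Cref{aitm:walk} of \Cref{lem:alphasymmetry} does the heavy lifting by guaranteeing $X+S$ stays $\alpha$-stable as long as it lies in the same weakly connected component of $\fingraph$.
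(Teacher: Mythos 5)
Your overall plan — work in the finite factor $\urgraph/\alpha$, keep definitions tree-shaped so that \Cref{lem:liftingTreeDefs} applies, and exploit \Cref{aitm:walk} of \Cref{lem:alphasymmetry} to preserve $\alpha$-stability — identifies the right constraints, and the first alternative is reached via \Cref{lemma:19} much as you describe. But the pivotal step (4) has a genuine gap: \Cref{lem:liftingTreeDefs} lifts only \emph{unary} relations $S\subseteq G/\alpha$, whereas $\OR(\bar U,\bar U)$ is at least binary. You cannot obtain $\OR(U,U)$ on $\urdomain$ by factoring, pp-defining the OR-relation in $\fingraph$, and lifting. The paper itself makes this limitation explicit in the proof of \Cref{lemma:23} (``the pp-definition of $\sigma$ does not necessarily lift\,\ldots\ since $\sigma$ is a binary relation'').

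The actual proof sidesteps this by never defining the OR-relation in the factor at all. It applies \Cref{lemma:19} to $D_L$ to get $H$ and the $k$-linkedness equivalence $\beta$ on $H/\alpha$, then peels conjuncts $D_L(x)$ off a tree pp-formula for $\beta$ until a single variable $x$ becomes critical, producing a ternary relation $S$ (tree pp-definable, with one free variable whose presence in $D_L$ is exactly what forces $\beta$). The formula for $\OR(U,U)$ is then built \emph{directly on} $\urdomain$ by conjoining $S$, $I_G$, and the \emph{original} relation $R=\OR(D_L,D_R)$; one shows that the result equals $\OR(U,U)$ for $U:=D_R$, which is $\omega$-stable by hypothesis — no lifting of an OR-relation and no ``upgrading'' of $\alpha$-stable to $\omega$-stable is needed. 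Your step (2) (passing to weakly connected $k$-linked components and checking whether $\bar D_L,\bar D_R$ miss one) also diverges from the paper: that kind of reduction lives in \Cref{prop:claimBK12}, not here; \Cref{lemma:19} already hands you a single set $H$, and the only case split is whether $(\urgraph|_H)/\alpha$ is $k$-linked.
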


\begin{restatable}{lemma}{ltwentyonetwo}\label{lemma:21+22}
    Let $T$ be a proper $\alpha$-stable TSR-relation on~$\urdomain$. Then  $\OR(T,T)$  pp-defines either $\OR(U,U)$, where $U$ is  a unary, $\alpha$-stable relation, or the $\alpha$-blow-up of $\OR(\sigma, \sigma)$ for a proper equivalence $\sigma$ on a subset of  $G/\alpha$.
\end{restatable}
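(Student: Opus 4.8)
The statement to prove is \Cref{lemma:21+22}: given a proper $\alpha$-stable TSR-relation $T\subseteq \urdomain^n$, the relation $\OR(T,T)$ pp-defines either $\OR(U,U)$ for a unary $\alpha$-stable $U$, or the $\alpha$-blow-up of $\OR(\sigma,\sigma)$ for a proper equivalence $\sigma$ on a subset of the finite set $G/\alpha$. The plan is to mirror the finite-template argument of~\cite[Lemmata~21 and~22]{symmetries} applied to the factor digraph $\urgraph/\alpha$, where $T$ descends to a genuine (finite-arity) TSR-relation $T/\alpha$ on the finite set $G/\alpha$, while keeping track of $\alpha$-stability at each pp-step so that the constructions lift back to $\urgraph$ via~\Cref{lem:liftingTreeDefs}. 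Since $T$ is $\alpha$-stable, it is the $\alpha$-blow-up of $T/\alpha$ and is homomorphically equivalent to it; the same then holds for $\OR(T,T)$ and $\OR(T/\alpha, T/\alpha)$, so it suffices to perform the analysis on $G/\alpha$ and ensure every intermediate relation we pp-define is an $\alpha$-blow-up.

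\textbf{Key steps.} First I would reduce the arity: from the TSR-relation $T/\alpha$ on $G/\alpha$, using total symmetry and total reflexivity, one pp-defines by identification of variables a binary reflexive symmetric relation $T_2$ on $G/\alpha$ (the ``binary shadow''), which is still proper provided $T/\alpha$ is proper; if at some point one of these shadows becomes full, one can recover a lower-arity proper TSR-relation and iterate, so after finitely many steps we may assume we are working with a proper binary reflexive symmetric $T_2$. Next, consider the transitive closure $\sigma$ of $T_2$: since $G/\alpha$ is finite, $\sigma$ is an equivalence relation pp-definable (as a bounded union of powers) from $T_2$, hence from $\OR(T,T)$ after passing through the $\OR$-construction. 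If $\sigma$ is proper, then $\OR(\sigma,\sigma)$ is pp-definable, and its $\alpha$-blow-up is pp-definable from $\urgraph$ and $\alpha$-classes by~\Cref{lem:liftingTreeDefs} (the defining formula can be taken to be a tree, as $\OR$ of an equivalence admits a tree pp-definition), giving the second alternative. If $\sigma$ is full, i.e.~$T_2$ generates the full equivalence, then $T_2$ is ``connected'' on $G/\alpha$, and here the standard argument produces a proper unary relation: one takes a maximal proper subset $U$ of $G/\alpha$ closed under the ``$T_2$-neighbourhood'' operation in a suitable sense (equivalently, one uses the connectivity to propagate a one-element list and shows the resulting reachable set $U$ is proper because $T_2$ itself is proper), and then $\OR(U,U)$ is pp-definable from $\OR(T_2,T_2)$; its $\alpha$-blow-up is $\OR(U^\alpha,U^\alpha)$ with $U^\alpha$ an $\alpha$-stable proper subset of $\urdomain$, giving the first alternative. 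Throughout, the $\OR$-operation commutes with $\alpha$-blow-ups (the $\alpha$-blow-up of $\OR(R,R)$ is $\OR(R^\alpha,R^\alpha)$), so $\alpha$-stability is preserved at every stage, and the only non-tree pp-definitions that might appear can be replaced by tree ones since all relations involved are unary or equivalence relations.

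\textbf{Main obstacle.} The delicate point, as flagged in the paragraph preceding the statement, is ensuring that every relation we pp-define along the way is genuinely an $\alpha$-blow-up (equivalently $\alpha$-stable) \emph{and} that it is pp-defined by a \emph{tree} formula, so that~\Cref{lem:liftingTreeDefs} applies and the whole construction descends to a pp-construction from $\urgraph$ and pairs of $\omega$-classes rather than merely from $\urgraph/\alpha$. Unlike in~\cite{symmetries}, where one works directly in a finite structure and arbitrary pp-definitions are admissible, here the intermediate relations must be carefully arranged: arity reduction by variable identification is fine (it preserves tree-ness and $\alpha$-stability trivially since $T$ is already $\alpha$-stable), but taking transitive closures requires a bounded union of path-relations, which is tree-definable only because each power of a binary relation is a path; and the extraction of the unary set $U$ must be done via a construction whose defining formula is a tree. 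I expect the bookkeeping to show that all of these can be arranged to be tree pp-definitions of $\alpha$-blow-ups is the part requiring the most care, but it is essentially combinatorial and follows the template of the cited lemmata once the $\alpha$-stability invariant is correctly maintained.
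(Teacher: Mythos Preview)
Your overall shape---work modulo $\alpha$ on the finite set $G/\alpha$, then lift---is right, and the binary non-linked case (take the transitive closure $\sigma$, output $\OR(\sigma,\sigma)$) matches the paper. But there are two genuine gaps.

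\textbf{Arity reduction fails as stated.} For a TSR-relation $T$ of arity $n\geq 3$, you cannot get a \emph{proper} binary shadow by identifying variables: total reflexivity means any tuple with a repeated entry lies in $T$, so every such identification yields the full binary relation. Projection to two coordinates is likewise full in general (witness by repeating one of the two remaining values). Your escape clause ``if the shadow becomes full, recover a lower-arity proper TSR'' is exactly the hard step and is not explained. The paper handles this via \Cref{l:lemma14}: one iterates the construction
\[
R^{[\ell]}(x_1,\ldots,x_\ell)\;\equiv\;\exists y\;\bigwedge_{i_1<\cdots<i_{n-1}} R(y,x_{i_1},\ldots,x_{i_{n-1}})
\]
(which may \emph{increase} the arity) until one reaches a TSR-relation that is either P-central or PQ-central. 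The unary $U$ (the P-centre) or the equivalence $\sigma$ (the central equivalence) is then extracted by \Cref{lemma:15+16}, and \Cref{lemma:42} turns an equality-free pp-definition $T\mapsto R'$ into $\OR(T,T)\mapsto\OR(R',R')$. Your linked-binary case also needs this machinery: a connected reflexive symmetric $T_2$ has no proper neighbourhood-closed subset, so ``take a maximal proper $U$ closed under $T_2$-neighbourhoods'' does not produce anything; the paper instead feeds the linked binary relation back into \Cref{l:lemma14} (case $n=2$, linked).

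\textbf{The tree/lifting concern is misplaced.} You worry about arranging tree pp-formulae so that \Cref{lem:liftingTreeDefs} applies, but that lemma is for lifting definitions \emph{from the edge relation of $\urgraph/\alpha$}. Here the input $\OR(T,T)$ is already an $\alpha$-stable relation on $\urdomain$, and the relevant tool is \Cref{l:ppdefalphastable}: any (equality-free) pp-definition over the factor $G/\alpha$ from $\alpha$-stable relations lifts to a pp-definition of the corresponding $\alpha$-blow-up. What you need to track is that each step is an \emph{equality-free} pp-definition (so that \Cref{lemma:42} applies to push it inside the $\OR$), not a tree one.
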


Finally, we conclude the proof by a slight adjustment of \cite[Lemma 23]{symmetries}, which again uses the restrictions of $\alpha$ to unions of adjacent pairs of $\omega$-classes.

\begin{restatable}{lemma}{ltwentythree}\label{lemma:23}
    Let $U$ be a non-empty unary, $\omega$-stable relation. Then $\urgraph$,  $\OR(U,U)$, and the restrictions of $\alpha$ to unions of pairs of $\edge$-adjacent $\omega$-classes pp-define
    \begin{itemize}
         \item a non-empty $\omega$-stable set $\subdomain\subseteq \urdomain$ such that $\urgraph|_H$ is smooth and $(\urgraph|_\subdomain)/\alpha$ is $k$-linked, or
         \item the $\alpha$-blow-up of $\OR(\sigma, \sigma)$ for a proper equivalence $\sigma$ on  a subset of $G/\alpha$. 
    \end{itemize}
\end{restatable}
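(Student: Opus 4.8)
My plan is to replay the finite-domain argument of~\cite[Lemma 23]{symmetries} inside the finite digraph $\fingraph := \urgraph/\alpha$ and then lift the outcome back to $\urgraph$. First note that $U$ is recovered from $\OR(U,U)$ through the diagonal, $U(x)\equiv\OR(U,U)(x,x)$, so we may treat $U$ as a unary predicate. Being $\omega$-stable, $U$ is $\alpha$-stable, hence the $\alpha$-blow-up of $\bar U := U/\alpha$, and $\bar U$ is a non-empty $\Omega$-invariant unary predicate on $\fingraph$. In $\fingraph$ all the hypotheses of the finite lemma are available: $\fingraph$ is smooth (by the bijection property in~\Cref{lem:alphaFinitises}), it is $k$-linked, $\edgek$ is not full on it, and it has no loop modulo the induced oligomorphic action of $\Omega$ (which it inherits from $\urgraph/\Omega$ being loopless). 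The construction of~\cite[Lemma 23]{symmetries}, carried out in $\fingraph$, then produces --- by a pp-formula in $\edge$, $\bar U$ and finitely many parameters --- either a non-empty proper induced subgraph of $\fingraph$ that is smooth and $k$-linked, or the relation $\OR(\sigma,\sigma)$ for a proper equivalence $\sigma$ on a subset of $J := \urdomain/\alpha$.

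Two adjustments are needed to transport this to $\urgraph$. Firstly, pp-definability in $\fingraph$ does not lift to $\urgraph$ in general, so we confine the construction to \emph{tree} pp-definitions, which do lift by~\Cref{lem:liftingTreeDefs}; at the places where the finite argument naturally yields a non-tree definition, we instead invoke the restrictions of $\alpha$ to $\edge$-adjacent pairs of $\omega$-classes and argue exactly as in the proof of~\Cref{l:alphaonpairs}. Secondly, since $\edge$ is not $\alpha$-stable, every composition with $\edge$ must be checked to produce an $\alpha$-stable relation on $\urdomain$; this is ensured by only plugging $\edge$ into walk-shaped subformulas with their endpoint orbits pinned down, so that~\Cref{lem:alphaFinitises} and~\Cref{lem:alphasymmetry} apply --- the same bookkeeping already performed in~\Cref{prop:Marcinsmagic,prop:trickT2} and in~\Cref{lemma:20,lemma:21+22}. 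With these in place, every relation obtained is the $\alpha$-blow-up of its image in $\fingraph$ and is pp-definable from $\urgraph$ together with the relations permitted in the statement. Finally, since $\bar U$ is $\Omega$-invariant and the whole construction can be run $\Omega$-equivariantly, its output is again $\Omega$-invariant, hence $\omega$-stable.

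It remains to lift the two cases. If the finite argument returns $\OR(\sigma,\sigma)$, its $\alpha$-blow-up is exactly the relation demanded by the second conclusion of the lemma. If it returns a subgraph on the vertex set $\bar H\subsetneq J$, then $\bar H$ is $\omega$-stable and $\bar H\subseteq\bar U$, and its $\alpha$-blow-up $\subdomain := \bar H^{\alpha}$ is a non-empty $\omega$-stable proper subset of $\urdomain$ satisfying $(\urgraph|_{\subdomain})/\alpha=\fingraph|_{\bar H}$, which is $k$-linked; moreover $\urgraph|_{\subdomain}$ is smooth, again by the bijection property in~\Cref{lem:alphaFinitises} together with the $\Omega$-invariance of $\bar H$. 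This gives the first conclusion.

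The main obstacle is the tension between the two adjustments: the finite construction must be restricted to tree pp-definitions so that~\Cref{lem:liftingTreeDefs} applies, while at the same time every relation obtained from $\edge$ must be kept $\alpha$-stable so that it descends to $\fingraph$ at all. Verifying that the argument of~\cite[Lemma 23]{symmetries} can be re-engineered to respect both constraints --- and that the few definitions escaping them are supplied by the restrictions of $\alpha$ to adjacent pairs of $\omega$-classes, as in~\Cref{l:alphaonpairs} --- is where essentially all of the effort goes; the combinatorics inside $\fingraph$ itself is the same as in the finite case.
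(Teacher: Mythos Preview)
Your high-level strategy --- run the finite argument of \cite[Lemma~23]{symmetries} in $\fingraph=\urgraph/\alpha$ and lift --- is the right instinct, but the lifting step as you describe it does not work, and the actual mechanism the paper uses is different from what you sketch.

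The first issue is that \Cref{lem:liftingTreeDefs} is stated and proved only for \emph{unary} relations on $\urdomain/\alpha$, and it introduces $\alpha$-classes as parameters. The relation you need to lift in the second conclusion, $\OR(\sigma,\sigma)$, is $4$-ary; the tree-lifting argument breaks down for higher arities because a tree pp-formula with two free variables has a unique path between them, and choosing representatives from both ends forces inconsistent choices for the internal $\alpha$-classes along that path. Moreover, $\alpha$-classes are not among the relations the lemma permits you to use (only $\urgraph$, $\OR(U,U)$, and the restrictions $\alpha|_{O\cup P}$ for $\edge$-adjacent $O,P$), so even the unary liftings you invoke would import parameters you do not have. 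Your appeal to ``argue exactly as in the proof of \Cref{l:alphaonpairs}'' does not help here: that lemma produces $\alpha|_{O\cup P}$ from pairs of orbits, which you already have by hypothesis; it says nothing about lifting higher-arity relations.

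What the paper actually does is avoid lifting altogether: it works directly in $\urgraph$, building the intermediate relations $S$ and $T$ by tree pp-formulas there (so their $\alpha$-factors behave as in the finite proof), and then handles the final, non-tree formula for $\OR(\sigma,\sigma)$ by a case distinction you do not mention. Namely, it finds a set $P$ that is either a single $\omega$-class or the union of two $\edge$-adjacent $\omega$-classes such that $\sigma^\alpha$ restricted to $P$ is still non-full (the second case uses weak connectivity of $\urgraph$ and $\omega$-stability of the $\sigma^\alpha$-classes). On such a $P$, the relation $\alpha|_P$ is available by hypothesis, and the paper writes down an explicit pp-formula $\chi$ using $\alpha|_P$ and $\OR(U^n,U^n)$ (the latter pp-defined from $\OR(U,U)$) that defines the $\alpha$-blow-up of $\OR(\sigma,\sigma)|_{P/\alpha}$. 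This localisation to $P$ is the missing idea in your plan; without it, you have no way to produce an $\alpha$-stable $4$-ary relation from the allowed primitives. (Incidentally, your claim that the subgraph $\bar H$ satisfies $\bar H\subseteq\bar U$ is also unsubstantiated; the paper obtains $H$ from \Cref{lemma:19} applied to $U$, which guarantees $\omega$-stability but not containment in $U$.)
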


\subsection{Summary of the proof}\label{sect:summaryofproof}
We are now able to derive~\Cref{thm:10}  from~\Cref{prop:claimBK12,thm:7}. In the course of the proof, 
we will show that the reductionistic sets we obtain from~\Cref{prop:claimBK12} and~\Cref{thm:7} can indeed be used to inductively reduce the domain, i.e. that they consistently satisfy the inductive hypothesis (hence, their name \emph{reductionistic}).
\thmten*
\begin{proof}
Assuming that $\urgraph / \Omega$ does not have a loop, we will prove that the second item holds. In fact, we will prove the slightly weaker statement that the $2$-conservative expansion of $\urgraph$ with respect to $\Omega$ pp-constructs EVERYTHING \emph{with $\Omega$-orbits}, i.e.~we additionally assume access to these orbits. We start by arguing that this is sufficient.
To this end, let $\hat \urgraph$ denote the $2$-conservative expansion of $\urgraph$ with respect to $\Omega$, let $\hat \subgraph$ be its model-complete core, and let $\hat H$ denote the domain of $\hat \subgraph$. We can assume that $\hat \subgraph$ is a substructure of $\hat\urgraph$ by \cite[Remark 4.7.5]{Book}. 
We set $\Upsilon :=\Aut(\hat \subgraph)$. It follows from the proof of \cite[Theorem 5.7]{BKOPP-equations} that  $\Upsilon$-orbits are given by unions of $\Aut(\hat \urgraph)$-orbits restricted to~$\hat H$. In fact, since the $1$-orbits of $\Aut(\hat \urgraph)$ coincide with the $1$-orbits of $\Omega$ by construction of $\hat \urgraph$ and are contained in the signature of $\hat \subgraph$, the $1$-orbits of $\Upsilon$ are precisely the restrictions of  $1$-orbits of $\Omega$ to $\hat{\subdomain}$. Hence, the $2$-conservative expansion of $\urgraph|_{\hat\subdomain}$ with respect to $\Upsilon$ coincides with $\hat\subgraph$, and the digraph $(\urgraph|_{\hat\subdomain}) / \Upsilon$ does not have a loop. Moreover, we now argue that $\urgraph|_{\hat\subdomain}$ is smooth and has algebraic length~1. To this end, pick any homomorphism $h$ from $\hat \urgraph$ into $\hat \subgraph$. Then the image of any  walk in $\urgraph$ witnessing its algebraic length~1 witnesses the same for $\subgraph$. Moreover, if $a\in \hat H$ is arbitrary, then it has incoming and outgoing edges in $\urgraph$, hence $h(a)$ has incoming and outgoing edges in $\subgraph$. We know that  $h$ preserves $\Upsilon$-orbits since they are pp-definable in $\hat \subgraph$; thus, $a$ belongs to the same $\Upsilon$-orbit as $h(a)$ and also has incoming and outgoing edges in $\subgraph$. Whence,   $\subgraph$ is smooth. Now assume that the above-mentioned weakening of the second item of the theorem applies to $\hat\subgraph$ and $\Upsilon$; that is, $\hat\subgraph$ together with  $\Upsilon$-orbits pp-constructs EVERYTHING. We claim that then also $\hat{\urgraph}$ pp-constructs everything. Indeed,  since~$\hat\subgraph$ is a model-complete core, all $\Upsilon$-orbits are pp-definable from~$\hat\subgraph$. Lastly,~$\hat\subgraph$ is homomorphically equivalent to~$\hat{\urgraph}$, whence it is  pp-constructible from~$\hat{\urgraph}$.  
Since pp-constructions are closed under compositions by~\cite[Corollary 3.10]{wonderland}, we conclude that $\hat{\urgraph}$ pp-constructs EVERYTHING, as had to be shown.

Let us now set $\alpha:=\alpha(\urgraph,\Omega)$ -- as defined in~\Cref{defi:alpha}~-- and recall that $\alpha$ finitises $(\urgraph, \Omega)$ by~\Cref{lem:alphaFinitises}. 
Assume that $\urgraph / \Omega$ has no loop; we claim that the $2$-conservative expansion of $\urgraph$ with respect to $\Omega$ pp-defines with $\Omega$-orbits and $\alpha$-classes the $\alpha$-blow-up of $\OR(\sigma,\sigma)$ for a proper equivalence $\sigma$ on a finite set $K=\{K_1,\dots,K_n\}\subseteq \urdomain / \alpha$. Let us illustrate why this indeed finishes the proof: recall that $\alpha$-classes are pp-definable with parameters from $\urgraph$ and $\omega$-classes. Thus, by the proof of~\cite[Lemma~3.9]{wonderland}, the expansion of $\urgraph$ by the $\alpha$-blow-ups of $K_1,\dots,K_n$ is  pp-constructible from $\urgraph$ and $\omega$-classes, and the expansion of $\urgraph$ by the $\alpha$-blow-up of $\OR(\sigma,\sigma)$ is pp-constructible from the $2$-conservative expansion of $\urgraph$ with respect to $\Omega$ together with $\Omega$-orbits (we remark  that the stronger assumption in~\cite[Lemma~3.9]{wonderland} of the structure being a model-complete core is not needed in the proof, except for access to 1-orbits). By a folklore observation, the structure $\finsubgraph:=(\finsubdomain; \OR(\sigma,\sigma),\{K_1\},\dots,\{K_n\})$ pp-constructs  EVERYTHING (see for example the presentation in~\cite[Proposition 5]{symmetries}, where it is formulated for pp-interpretations instead of pp-constructions; 
since the former is just a special form of the latter, we skip the definition). As $\finsubgraph$ is homomorphically equivalent to its $\alpha$-blow up, it then follows that it is pp-constructible from the 2-conservative expansion  of $\urgraph$ with $\Omega$-orbits, and the theorem follows.
    
It is left to show that the $2$-conservative expansion of $\urgraph$ pp-defines with $\alpha$-classes and $\Omega$-orbits the $\alpha$-blow-up of $\OR(\sigma,\sigma)$. Since $\urgraph$ has algebraic length $1$, there exists a closed $\edge$-walk $(a_0,\ldots,a_0)$  of algebraic length $1$.  Let $\weakcomp$ be the weak connected component of $a_0$. In particular, as observed below \Cref{defi:alpha}, $\weakcomp$ is $\alpha$-stable. Moreover, $\weakcomp$ is $\Omega$-reductionistic and pp-definable from $\urgraph$ together with $\alpha$-classes. Indeed, let us set $\subgraph:= \urgraph|_{\weakcomp}$. Evidently, $\subgraph/\alpha $ is a weakly connected smooth digraph of  algebraic length $1$. By the properties of $\alpha$ as listed in~\Cref{defi:finitises}, $\subgraph/\alpha $ is finite.  Since a weakly connected smooth finite digraph has algebraic length $1$ if and only if it is $k$-linked for some $k$ (this follows immediately  for example from~\cite[Claim~3.8]{cyclicterms}), \Cref{l:linkedcompreduc}  yields that the $\alpha$-blow-up $(\weakcomp/\alpha)^{\alpha}$ -- which coincides with $H$ by $\alpha$-stability --  is $\Omega$-reductionistic and pp-definable from $\urgraph$ and $\alpha$-classes.

Let $\alpha_H:=\alpha|_{\subdomain\times\subdomain}$, and let $\omega_H$ denote the $1$-orbit equivalence of $\Omega_H$ (which coincides with $\omega|_{H \times H})$. We claim that~$\subgraph$ pp-defines together with $\alpha_H$-classes, $\Omega|_H$-orbits, and the restrictions of $\alpha_{\subdomain}$ to all pairs of adjacent $\omega_H$-classes the $\alpha$-blow-up of  $\OR(\sigma, \sigma)$ for a proper equivalence $\sigma$ on a subset of the finite set $H/\alpha$. This suffices to prove the statement of the theorem; indeed, as $\subdomain$ is  pp-definable from $\urgraph$   with $\alpha$-classes as shown before, $\alpha_H$-classes and $\Omega|_H$-orbits are pp-definable from $\urgraph$ with $\alpha$-classes and $\Omega$-orbits. Furthermore, recall that by~\Cref{l:alphaonpairs}, the $2$-conservative expansion of $ \urgraph$ with respect to $\Omega$ pp-defines the restrictions of $\alpha$ to all pairs of adjacent $\omega$-classes. Thus the $2$-conservative expansion of $ \urgraph$ with respect to $\Omega$ pp-defines with $\alpha$-classes the restrictions of $\alpha_{\subdomain}$ to all pairs of adjacent $\omega_H$-classes. 
    
Our strategy  is to successively shrink the domain of $\subgraph$ until we obtain the desired statement. When shrinking, we always remove entire $\alpha$-classes, meaning that this process must terminate after a finite number of steps. At every step, we replace $(\subgraph, \alpha_H,\Omega|_H) $  by their restrictions $(\subgraph', \alpha_{H'},\Omega|_{H'}) $ to a proper $\alpha$-stable $\Omega|_H$-reductionistic subset $\subdomain'\subsetneq \subdomain$ while   keeping the properties that $\subgraph|_{\subdomain'}$ is smooth, and that $(\subgraph|_{\subdomain'}) / \alpha_{H'}$ is $\ell$-linked for some $\ell\leq k$. We update $k$ to $\ell$. Observe that after each step, $\subgraph$, $\Omega|_{\subdomain}$, and $\alpha_\subdomain$ satisfy the assumptions of both~\Cref{thm:7} and~\Cref{prop:claimBK12}.  Indeed, $H$ being $\Omega$-reductionistic implies that $\Omega|_H$ is an oligomorphic subgroup of $\Aut(\subgraph)$. Moreover, as $H$ is $\alpha$-stable, $\alpha_{H} $ is a refinement of the $1$-orbit equivalence $\omega_H$ of $\Omega|_H$ (which coincides with $\omega|_{H\times H}$), and $\alpha_H$ finitises $(\subgraph ,\Omega|_H) $ by~\Cref{lemma:redfinitises}. Finally, $\subgraph/\alpha_H $ has no loops as $\subgraph/\omega_H$ does not. 

In every step, we now distinguish the following two cases:
\begin{outline}[enumerate]
        \1\label{item:inductive1}   If  $\edgek$ is full on $\subdomain/\alpha_H$, then \Cref{prop:claimBK12} shows that $\subgraph$ pp-defines together with $\alpha_H$-classes a proper $\alpha_H$-stable, $\Omega|_H$-reductionistic subset $\subdomain'\subsetneq\subdomain$ of~$\urdomain$ such that $\subgraph|_{\subdomain'}$ is smooth, and that $(\subgraph|_{\subdomain'})/\alpha_{H'}$  is  $(k-1)$-linked.
        
        \1 \label{item:inductive2} If  $\edgek$ is not full on $\subdomain/\alpha_H$, then we apply~\Cref{thm:7}.
       Hence, $\subgraph$ together with $\alpha_H$-classes, $\Omega|_H$-orbits, and the restrictions of $\alpha_{\subdomain}$ to all pairs of adjacent $\omega_H$-classes pp-define either
       \2 a proper $\omega_H$-stable subset $\subdomain'$ of $\subdomain$ such that $\subgraph|_{\subdomain'}$ is smooth, and $(\subgraph|_{\subdomain'})/\alpha_{H'}$ is $k$-linked, or
       \2 the $\alpha_H$-blow-up of $\OR(\sigma, \sigma)$ for a proper equivalence $\sigma$ on a subset of $H/\alpha_H$.
    \end{outline}

Observe that there is no loopless digraph on a one-element vertex-set that is $k$-linked for any $k\geq 1$.  In particular, the induction needs to terminate with b) at some step. Now, since  the $\alpha_H$-blow-up of $\OR(\sigma, \sigma)$ coincides with the $\alpha$-blow-up of $\OR(\sigma, \sigma)$, we indeed pp-construct with the\; \ldots   

\begin{quote} 
    \emph{blow-up\; \ldots}\\ 
    \emph{EVERYTHING.}
\end{quote}
\end{proof}

\bibliographystyle{alpha}
\bibliography{references}

\Addresses

\newpage
\section*{Appendix}
\appendix

\setcounter{section}{0}
\refstepcounter{section} 

\setcounter{subsection}{0}

\subsection{Missing proofs from \texorpdfstring{\Cref{sect:finitising}}{Section~\ref{sect:finitising}}}\label{sect:missingfinitising}

\alphamaximal*

\begin{proof}
    Note that for any two realisable $\Omega$-orbit-labelled paths $\pi,\rho$ such that $\rho$ is symmetric and such that $\pi+\rho$ is well-defined, and for all $a,b\in G$, we have that if $b$ is reachable from $a$ via a realisation of $\pi$, then the same can be achieved via a realisation of $\pi+\rho$; a similar statement holds for reachability via $\rho$ if $\pi$ is symmetric.
    Hence, for any symmetric $\Omega$-orbit-labelled paths $\pi,\rho$  starting (and ending) at  $O$  we have $\gamma_{\pi+\rho}\supseteq \gamma_\pi\cup \gamma_\rho$. It follows that the set of relations of this form is upwards directed, and since each such relation is invariant under $\Omega$, the  oligomorphicity of $\Omega$ implies that the number of such relations is finite. Hence, the set has a largest  element.
\end{proof}

\alphafinitises*

\begin{proof}
    For the third item, note  that  $\Omega$ acts on $O$ for every  $\omega$-class $O$. Moreover, since $\alpha_O$ is pp-definable from the $\omega$-classes and $\urgraph$, it is invariant under $\Omega$. Hence $\Omega$ acts on $\alpha_O$-classes, and also on all $\alpha$-classes. 
    
    To see that $\alpha$ satisfies the second item, let $O,P$ be $\omega$-classes with $O\edgeo P$. For any $A\in O/\alpha$ there exists $B\in P/\alpha$ with $A\edge B$ since $\Omega$ acts transitively on $O$. Suppose that $C\in P/\alpha$ satisfies $A\edge C$; we claim that  $B=C$. There exist $a,a'\in A$, $b\in B$, and $c\in C$ such that $a\edge b$ and $a'\edge c$. The fact that $\alpha(a,a')$ holds is witnessed by a symmetric $\Omega$-orbit-labelled $\edge$-path $\pi$ with a realisation from $a$ to $a'$, which we may now extend on both sides by $\edge$ and $\ledge$ and the label $P$ to see that $\alpha(b,c)$ holds as well. Hence, $B=C$.  Switching the roles of $O$ and $P$, we arrive at the desired conclusion.
    
    To prove the third item, pick an arbitrary $a\in D$. Then any two elements $b,c\in D$ that  belong to the same $\Omega_a$-orbit  are $\alpha$-equivalent: since $a,b\in D$, there exists an $\Omega$-labelled $\edge$-path $\pi$  with a realisation from $a$ to $b$;  $\pi$ then  also has a realisation from $a$ to $c$, obtained by applying any $f\in\Omega_a$ sending $b$ to $c$ to the original realisation. Since $\Omega_a$ is oligomorphic, the statement follows.
\end{proof}

\alphasymmetry*

\begin{proof}
    To prove the first item, let $f\in \Omega_A$. Since $A,B$ are contained in the same weakly connected component of $\urgraph / \alpha$, there exists an $\Omega$-orbit-labelled path which has a realisation from $A$ to $B$. By \Cref{aitm:bijection} in \Cref{lem:alphaFinitises}, this path yields a bijection between the $\omega$-classes of $A$ and $B$, respectively which is preserved by $\Omega$ by \Cref{aitm:bijection} in \Cref{lem:alphaFinitises}. As $f\in\Omega_A$, it follows that $f\in\Omega_B$, and since $f\in\Omega_A$ was arbitrary, it holds that $\Omega_A\subseteq \Omega_B$. The other inclusion follows by a dual argument.

    For the second item, let $a\in X+S$, and let $b$ satisfy $\alpha(a,b)$.  Then $a,b$ belong to the same $\omega$-class $O$, and thus there exists $f\in\Omega$ with $f(a)=b$. Observe  that $f\in \Omega_A$ for the $\alpha$-class $A$ of $a,b$, and hence $f$ fixes all $\alpha$-classes in the weakly connected component of $A$ by \Cref{aitm:permut} in \Cref{lem:alphaFinitises}. Since $a\in X+S$, there exists $a'\in X$ with $S(a',a)$.
    Since $S$ is pp-definable from $\Omega$-invariant relations, it is $\Omega$-invariant itself, and we have $S(f(a'),b)$  and also  $\alpha(a',f(a'))$. Since $X$ is $\alpha$-stable, $f(a')\in X$, whence $b\in X+S$.

    To see the third item, pick any $f\in\Omega$ with $f(A)=B$ in the (transitive) action of $\Omega$ on $\alpha_O$-classes. Since $f$ preserves $S$, it sends $A+S$ into $B+S$; since it preserves $\alpha_Q$, it maps the $\alpha_Q$-classes in  $(A+S)\cap Q$ injectively into those in $(B+S)\cap Q$. The same argument with $f^{-1}$ then implies the statement.
\end{proof}

\liftingtree*

\begin{proof} 
Say that $\phi(x,A_1,\ldots,A_n)$, where $A_1,\ldots,A_n\in G/\alpha$, is the tree pp-formula defining $S$. Denote the existentially quantified variables of $\phi$ by $z_1,\ldots,z_m$. We claim that the formula $\phi'(x)\equiv \exists y_1,\ldots,y_n\; \phi(x,y_1,\ldots,y_n)\wedge A_1(y_1)\wedge\cdots\wedge A_n(y_n)$ defines the $\alpha$-blow-up of $S$ in $\urgraph$ (together with the predicates $A_i$). 
Indeed, it is clear that if $\phi'(a)$ holds for some $a\in G$, then the classes of the witnesses of this fact (i.e. the values for $z_1,\ldots,z_m$ as well as $y_1,\ldots,y_m$) witness that $\phi(a/\alpha)$ holds as well (here we do not make use of the fact that $\phi$ is a tree pp-formula). For the converse, suppose that $\phi(a/\alpha)$ holds, as witnessed by values $B_i$ for $z_i$. Then, starting at $a$ as a representative of its $\alpha$-class, we can successively pick  representatives $b_i$ in each class $B_i$ and $a_i$ in $A_i$ following the tree given by the shape of the formula $\phi'$, and so that all conjuncts for the values chosen so far  are satisfied in $\urgraph$. Here we use in every step the fact that if $A,B$ are $\alpha$-classes with $A\edge B$, then for any $a\in A$ there exists $b\in B$ with $a\edge b$; this follows from ~\Cref{lem:alphasymmetry} (setting $X:=B$ and $S:=\ledge$).
\end{proof}

\subsection{Missing proofs from \texorpdfstring{\Cref{sect:defalpha}}{Section~\ref{sect:defalpha}}}\label{sect:missingdefalpha}

The following folklore result will be needed in the proof of~\Cref{lemma:labellings}. 
\begin{restatable}{lemma}{Euklid}[Euclid's hammer]\label{lem:euklid} Let $k, \ell$ be coprime positive integers. Then every $n > k\ell $ coprime with $k$ and $\ell$  can be written in the form $n=sk+t\ell$ with $s,t>0$.
\end{restatable}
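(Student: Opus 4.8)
The plan is to reduce this to a one-line modular-arithmetic computation: since $\gcd(k,\ell)=1$, multiplication by $k$ is a bijection on $\Z/\ell\Z$, so we may prescribe the residue of $sk$ modulo $\ell$ and then simply read off $t$. The only thing that needs genuine care is obtaining \emph{strict} positivity of both coefficients, and this is where the hypothesis $n>k\ell$ enters.

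Concretely, I would first fix the unique $s\in\{1,\dots,\ell\}$ with $sk\equiv n\pmod{\ell}$; such an $s$ exists and is unique precisely because $k$ is invertible modulo $\ell$ (equivalently, by B\'ezout). I would then define $t:=(n-sk)/\ell$, which is an integer exactly because of the congruence just imposed. It remains to verify $s>0$ and $t>0$. The inequality $s\ge 1$ is built into the choice of representative. For $t$, the bound $s\le\ell$ gives $sk\le\ell k=k\ell$, and the assumption $n>k\ell$ then forces $n-sk>0$; dividing by the positive integer $\ell$ yields $t\ge 1$. Hence $n=sk+t\ell$ with $s,t>0$, as claimed.

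I do not anticipate a serious obstacle: the statement is elementary, and the entire content is the representative-choice trick, which simultaneously pins $s$ into $\{1,\dots,\ell\}$ (guaranteeing $s\ge 1$) and, via $n>k\ell$, guarantees $t\ge 1$. I would additionally remark that the hypothesis that $n$ be coprime with $k$ and with $\ell$, although included in the statement, is never used in this argument, so the lemma as stated follows a fortiori.
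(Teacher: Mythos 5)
Your proof is correct. Both you and the paper start from a B\'ezout representation and then push the coefficients into a canonical range where positivity can be read off; the only real difference is which coefficient you normalize. The paper puts $t$ in $\{0,\dots,k-1\}$ and then needs the hypothesis that $k\nmid n$ to rule out $t=0$ (and $n>k\ell$ to force $s>0$). You instead put $s$ in $\{1,\dots,\ell\}$, which makes $s\geq 1$ automatic, and then $n>k\ell$ together with $\ell\mid(n-sk)$ forces $t\geq 1$. Your version is a touch cleaner, and your side remark is correct and worth making: the coprimality of $n$ with $k$ and $\ell$ is never used, so the lemma holds for all $n>k\ell$. (It happens to be applied in the paper only to such $n$, so nothing downstream is affected, but the hypothesis in the statement is genuinely superfluous.)
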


\begin{proof} 
Every $n\in \mathbb N$ can be written uniquely in the form $n=sk+t\ell$, with $s \in \mathbb Z$ and $0 \leq t <k$, or $0 \leq s < \ell$ and $t \in \mathbb Z$. Indeed, take $a,b \in \mathbb Z$ with $ak+b\ell=n$. Without loss of generality, assume $b>0$. Then $b$ can be written uniquely as $b=t+mk$ for some $m \in \mathbb N, 0 \leq t <k $. Now,  $t$ and $s:=a+m\ell$  have the required properties. Observe that the unique composition for any $n>k\ell$ not divisible by $k$ satisfies  $s,t >0$.
\end{proof}

\labellings*

\begin{proof}
    Assume without loss of generality that $O\edgeo P$ holds. We first show how to reduce the lemma to the claim below.
\begin{restatable}{claim}{cyclecombinatorics}\label{cyclecombinatorics}
        Whenever $U,V,U',V'$ are arbitrary $\omega$-classes with $U\edgeo U'$ and $V\edgeo V'$, and such that one of the following holds.
        \begin{enumerate}
            \item $U\edgeo V$, or\label{itm:edge}
            \item $U\ledgeo V$\label{itm:ledge}
        \end{enumerate}
        Then there exists a realisable $\Omega$-orbit-labelled path $\mu$ from $U$ to $V$ and a realisable relabelling $\mu'$ thereof from $U'$ to $V'$ such that $(\mu,\mu')$ is  properly separated, and such that $\mu$ extends $U\edgeo V$ or $U\ledgeo V$, respectively.
\end{restatable}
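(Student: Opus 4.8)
\textbf{Plan of proof for \Cref{cyclecombinatorics}.}

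The plan is to build the path $\mu$ by walking around suitable closed walks in the two $\omega$-classes $U$ and $V$ (and along the prescribed edge between them), choosing at each moment the orbit label of the relabelling $\mu'$ as ``aggressively'' as possible so that proper separation is forced. First I would make the following standard observation: since $\urgraph$ is smooth, every abstract $\edge$-path realisable with prescribed endpoint orbits $O\to P$ or $O\to O$ is in fact realisable, and furthermore, since $\urgraph/\Omega$ has no loop, for adjacent $\omega$-classes $O\edgeo P$ one has $O\neq P$; more importantly, if $(p,(P_1,\dots,P_n))$ is a realisable $\Omega$-orbit-labelled path, then a relabelling $(p,(P_1',\dots,P_n'))$ realisable ``in parallel'' can fail to be concatenable with the original one at position $i$ precisely when $P_i$ and $P_{i+1}'$ are non-adjacent in the direction of $E_i$. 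This is exactly the condition defining proper separation, so the task reduces to: exhibit realisable $\mu,\mu'$ such that at \emph{every} joint, the label of $\mu$ before and the label of $\mu'$ after are non-adjacent in the relevant direction.

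The key construction I would use exploits the hypothesis $U\edgeo V$ (case~\ref{itm:edge}; case~\ref{itm:ledge} is symmetric by reversing, using the observation that proper separation of $(\mu,\mu')$ gives proper separation of $(-\mu',-\mu)$). Consider the length-$1$ path $U\edgeo V$. The labels $U$ and $V$ are adjacent, so the pair $((U),(V'))$ is \emph{not} yet properly separated in general; we need to ``dilute'' the walk with enough back-and-forth zig-zags inside $U$ and inside $V$, together with a repeated traversal of the $U\edgeo V$ edge, until we can choose the upper labels to sit on ``one side'' and the lower labels on ``the other side'' of every edge. Concretely, I would take $p$ to be (an extension of $U\edgeo V$ consisting of) a long alternating fence inside $U$ followed by the edge $U\edgeo V$ followed by a long alternating fence inside $V$; the lengths $k$ and $\ell$ of these two fences I would choose using Euclid's hammer (\Cref{lem:euklid}) so that the relevant ``phase'' along the two classes can be matched up — this is why that lemma is stated right before the claim. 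The point of the coprimality/length bookkeeping is that when one walks $n$ steps forward along a closed walk and looks at the induced sequence of $\omega$-classes, the class one lands in depends on $n$ modulo the algebraic length structure of the component of $\urgraph/\Omega$; picking $n$ of a suitable residue lets us land on a class that is \emph{non-adjacent} (in the prescribed direction) to a target class, which is precisely what proper separation demands at each joint.

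The main obstacle, and the step I expect to require the most care, is the global consistency of the label choices: it is easy to make $\mu$ and $\mu'$ separated at one joint, but one must do it \emph{simultaneously} at all joints while keeping both $\mu$ (from $U$ to $V$) and $\mu'$ (from $U'$ to $V'$) realisable and keeping $\mu'$ a genuine relabelling of $\mu$ along the \emph{same} underlying path $p$. I would handle this by first fixing the underlying path $p$ (the two fences plus the connecting edge, with multiplicities chosen via \Cref{lem:euklid} and oligomorphicity of $\Omega$ to bound how long a detour is needed), then choosing the $\mu$-labels by following a realisation forward from $U$ and the $\mu'$-labels by following a realisation forward from $U'$, and finally verifying joint-by-joint that non-adjacency in the direction of $E_i$ holds — this last check is where the length-mod-algebraic-length argument enters, ensuring that the forward orbit after $n_i$ steps along one realisation is never adjacent (in the right direction) to the forward orbit after the corresponding number of steps along the other. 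Once the claim is in hand, the reduction of \Cref{lemma:labellings} to it is routine: apply the claim with $U=V=O$ (using the loop-free hypothesis and adjacency $O\edgeo P$ to get the required configuration) and $U'=V'$ a copy tied to $P$, concatenate the resulting properly separated pieces with the given path $\pi$ and its relabelling to obtain the desired $\pi'$ from $O$ to $O$ extending $\pi$ together with a realisable relabelling $\rho'$ from $P$ to $P$, noting that concatenation of properly separated pairs with arbitrary realisable labelled paths preserves proper separation at the new joints as long as the glueing labels are chosen non-adjacent, which the claim guarantees.
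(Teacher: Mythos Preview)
Your plan has a genuine gap at its core. You propose to build $\mu$ as ``a long alternating fence inside $U$ followed by the edge $U\edgeo V$ followed by a long alternating fence inside $V$''. But $U$ and $V$ are single $\omega$-classes, and the standing hypothesis is that $\urgraph/\Omega$ has no loop; hence there is no edge from $U$ to $U$ (nor from $V$ to $V$), and no nontrivial fence with all labels in a single orbit exists. The actual construction takes place entirely in the finite orbit digraph $\urgraph/\Omega$ and visits many \emph{different} $\omega$-classes: one walks along forward/backward walks to cycles of $\urgraph/\Omega$, around these cycles a carefully chosen number of times, and back. Proper separation at each joint is forced not by a phase argument on a single class but by a ``maximal segment'' trick: in a cycle $C_0\edgeo\cdots\edgeo C_{k-1}\edgeo C_0$ one picks a chord $C_j\edgeo C_i$ of maximal length (mod $k$), so that the non-existence of the next-longer chord $C_j\edgeo C_{i+1}$ is exactly the non-adjacency needed at the critical joint; similarly, minimality of a cycle or of a distance to a cycle rules out the would-be shortcut edges at the remaining joints. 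Euclid's hammer enters not to synchronise two fences but to manufacture, from two cycles of lengths $k$ and $k-1$ in the same strong component, a single cycle whose length is coprime with the length of a second cycle elsewhere, so that the two traversal counts can be matched up to land at prescribed vertices.

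A second issue is your reduction of case~(2) to case~(1) ``by reversing''. Reversing a properly separated pair $(\mu,\mu')$ yields $(-\mu',-\mu)$, which swaps the roles of the two labellings; but the conclusion requires the \emph{top} labelling to go from $U$ to $V$ and the \emph{bottom} from $U'$ to $V'$, so a naive reversal does not deliver this. The paper instead inserts a short two-step backward detour through an auxiliary orbit $Q$ with $Q\edgeo V$, producing a properly separated stub ending at $(Q,V)$, and then invokes case~(1) for the edge $Q\edgeo V$ (with $V\edgeo V'$ in the role of $U\edgeo U'$). Finally, the global argument is a case split on the strongly connected components $\comp(U),\comp(V),\comp(V')$ in $\urgraph/\Omega$, each case handled by a concrete walk of the type above; your sketch does not identify this structure, and without it the ``joint-by-joint non-adjacency'' check you allude to cannot be carried out.
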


 The  lemma then follows by induction. Indeed, let  $\pi=(p,(O_1,\ldots,O_n))$, where $O_1=O_n=O$. In the induction, we prove that  the restriction $\pi_j$ of $\pi$ to the first $j$ orbits (and hence $j-1$ edges), has a realisable  extension $\mu_j$ from $O$ to $O_j$ with a realisable properly separated relabelling $\mu_j'$ thereof from $P$ to an arbitrary $P_j$ such that $O_j\edgeo P_j$ holds (such $P_j$ exists by smoothness). This is trivial for $j=1$, and thus follows by induction using~\Cref{cyclecombinatorics} in the induction step. Finally, choosing $P_n:=P$, and setting $\pi':=\mu_n$ and $\rho':=\mu_n'$ then finishes the proof.

    \begin{proof}[Proof of~\Cref{cyclecombinatorics}] 
    Let us first observe that walks in $\urgraph / \Omega$ always give rise to realisable $\Omega$-orbit-labelled paths. We will therefore sometimes abuse the notation and will not distinguish between these two objects.
    We first show how to reduce~\Cref{itm:ledge} where $U\ledge V$ to~\Cref{itm:edge} where $U\edge V$. Assume the former holds, and pick an arbitrary $\omega$-class $Q$ with $Q\edge V$; it exists by smoothness of $\urgraph$. Consider the path $q:=(\ledge,\ledge)$ and the $\Omega$-orbit labellings  $\nu:=(q,(U,V,Q))$ and $\nu':=(q,(U',U,V))$ thereof. Then $$\nu \merge \nu' = \binom{U}{U'} \ledgeo \binom{V}{U} \ledgeo \binom{Q}{V}, $$ i.e. $(\nu, \nu')$ is properly separated. Recall that $Q\edgeo V$ and $V\edge V'$. Assuming that the conclusion of the claim holds true in~\Cref{itm:edge}, we may apply it after substituting $U$ by $Q$ and $U'$ by $V$. Hence,  we obtain  a properly separated pair $(\mu, \mu')$ where $\mu $ is from $Q $ to $V $ extending $Q \edgeo V$, and $\mu'$ is from $V $ to $V' $. Then, $\nu +\mu$ is 
    extends $U \ledgeo V$, and  the properly separated pair $(\nu +\mu, \nu' + \mu')$  shows the claim in \Cref{itm:ledge}. 

    We thus henceforth assume \Cref{itm:edge}, i.e. $U\edge V$. For $W\in G / \Omega$, we will denote by $\comp(W)$ the \emph{strongly connected component} of $\urgraph / \omega$ containing $W$, i.e. the maximal (with respect to inclusion) subset $\comp(W) \subseteq G / \Omega$ which contains $W$, and has the property that for all $W'\neq W''\in \comp(W)$, there exist both a $\edgeo$-forward walk from $W'$ to $W''$, and a $\edgeo$-forward walk from $W''$ to $W'$. Note that  every strongly connected component containing at least two distinct $\omega$-classes contains a cycle. In particular, by smoothness of $\urgraph/\omega$, every strongly connected component of size at least two  is itself smooth.  We will distinguish cases based on the strongly connected components 
    $\comp(U), \comp(V), \comp(V')$ of $U,V,V'$ in $\urgraph / \Omega$. \bigskip

    \textbf{Case 1.} $\comp(U)= \comp(V)=\comp(V')$: Let $C=(C_0,\ldots,C_{k-1})$ be a cycle (i.e. $C_i\edge C_{i+1}$ for all $i$, computing modulo $k$) of smallest length $k$ in $\comp(U)$. Such a cycle exists since $\comp(U)$ is a strongly connected component containing at least two distinct $\omega$-classes $U$ and $V$ and $\urgraph / \Omega$ is finite by the oligomorphicity of $\Omega$. Let us fix a $\edge$-backward walk from $U$ to $C$, i.e.~a sequence $(U=U_{n-1},\ldots,U_0)$ such that $U_{i+1}\ledge U_{i}$ for all $i$, and such that $U_0$ appears in $C$; by relabelling $C$, we may assume  $U_0=C_0$. We set $U_{n}:=V$ and $U_{n+1}:=V'$, and let $(U_{n+1},\ldots, U_{q-1}=C_0)$ be a $\edge$-forward walk from $U_{n+1}$ to $C_0$. Note that we have $U_i\edge U_{i+1}$ for all $i$, and that $V'$ is a successor of $V$ in the sequence $Q:=(U_0,\ldots,U_{q-1})$.

    We will successively construct $\mu$ and $\mu'$ by appending realisable $\Omega$-orbit-labelled paths (i.e. walks in $\urgraph / \Omega$). Let us first consider $\mu \merge \mu'$ defined by
    \begin{equation*}
      \binom{U_{n-1}}{U'}\ledgeo \binom{U_{n-2}}{U_{n-1}}\ledgeo \dots\ledgeo\binom{U_0=C_0}{U_1}\ledgeo \binom{C_{k-1}}{C_0}.
    \end{equation*}
    Let us observe that the pair $(\mu,\mu')$ is properly separated as $\urgraph / \Omega$ does not contain any loop.

    Let us now append to $\mu$ the $\edge$-forward walk  which first follows the walk $(C_{k-1},C_0=U_0,\dots,U_{q-1}=C_0)$ (taking $q$ steps), and then walks $k$ steps along $C$, thus finishing at $C_0$. Next, append to $\mu'$ the $\edge$-forward walk which starts at $C_0$, takes $k$ steps on $C$ until $C_0$, then walks $q-1$ steps  along $Q$ to $U_{q-1}=C_0$, and then takes a final step from $C_0$ to $C_1$. Note that the two walks have the same number of steps, namely $q+k$. Moreover, after appending the walks to $\mu$ and $\mu'$, respectively, $(\mu, \mu')$ remains properly separated. Indeed, suppose otherwise, i.e. there exists an edge from the $i$-th label $K$ of $\mu$ to the  $(i+1)$-st label $L$ of $\mu'$ for some $i$. But then $K,L$ would have to lie in a cycle of length $k-1$ on $P\cup C\subseteq \comp(U)$, contradicting the minimality of $k$. 
    
    Finally, we append to $\mu \merge \mu'$ the walks \begin{equation*}
        \binom{C_0}{C_1} \ledgeo \binom{C_{k-1} }{C_0} \ledgeo \dots \ledgeo \binom{C_0=U_{q-1}}{C_1} \ledgeo \binom{U_{q-2} }{U_{q-1}}\ledgeo\dots \ledgeo\binom{U_{n}=V}{U_{n+1}=V'}.
    \end{equation*}  Clearly, $(\mu, \mu')$ remains properly separated, and by construction, $\mu$ extends $U \edgeo V$.
    
\bigskip

\textbf{Case 2.} $\comp(U) \neq \comp(V):$ Let $C=(C_0, \dots, C_{k-1}) $ be the nearest cycle reachable by a $\edgeo$-backward walk from $U$. The condition on reachability of $C$ yields a $\edgeo$-backward walk  $(U=U_{n-1},\ldots,U_0)$ such that $U_i\ledge U_{i+1}$ for all $i$, $U_0$ appears in $C$, and $U_1,\dots,U_{n-1}$ do not appear in any cycle; by relabelling $C$, we may assume  $U_0=C_0$. 
Let $D=(D_0,\ldots, D_{\ell-1})$ be the nearest cycle reachable by a $\edgeo$-forward walk from $V$ via $V'$. Note that such cycles exist since $\urgraph / \Omega$ is smooth and finite. We will from now on consider all indices in $C$ modulo $k$, and in $D$ modulo $\ell$.

As in the previous case, we set $U_{n}:=V, U_{n+1}:=V'$, and let $(U_{n+1},\ldots, U_{q-1}=D_0)$ be a $\edge$-forward walk from $U_{n+1}$ to $D_0$ such that $U_{n+1},\dots,U_{q-2}$ is contained in no cycle; if $V'$ is contained in $D$, we can relabel $D$ so as  $V'=D_0$. Note that we have $U_i\ledge U_{i+1}$ for all $i$, and that $V'$ is a successor of $V$ in the sequence $Q:=(U_0,\ldots,U_{q-1})$. As above, we will successively construct $\mu$ and $\mu'$ by appending walks in $\urgraph / \Omega$.

To this end, consider the induced subgraph $\mathbb C$ of $\urgraph/\omega$ on the vertices  $C_0, \dots, C_{k-1}$. For technical reasons, we distinguish the following two subcases based on whether or not $\mathbb C$ includes a particular edge. In either case, we shall assign values to $p,j,i,m, m'$ that will thereafter allow for similar constructions in both cases. \bigskip

\textbf{Case 2.1.} Let us first assume that $\mathbb C$ does not include any of the edges $C_{k-1}\edgeo C_1$ and $C_{k-2}\edgeo C_0$. In this case, we set $p:=1$, $j:=0$, $i:=1$, and $m,m'$ to be any numbers such that $m'>q-1$ (which is the length of $Q$), and $m'k=m\ell$ (where $k$ is the length of $C$, and $\ell$ is the length of $D$). \bigskip

\textbf{Case 2.2.} If $\mathbb C$ includes the edge $C_{k-1}\edgeo C_1$, then this means that $\mathbb C$ includes both the cycle $C$ of length $k$ and the cycle $C_1,\dots,C_{k-1}$ of length $k-1$. By~\Cref{lem:euklid}, we can find $s,t> 0$ such that $k+sk+t(k-1)$ is coprime with $\ell$. Indeed, apply the Lemma to $k-1$ and $k$, and choose $n$ coprime with $\ell$ such that $n>(k-1)k+k$ is coprime with $k(k-1)$.   It follows that $\mathbb C$ contains a cycle containing $C_0$ of length $k+sk+t(k-1)$. We will replace $C$ by this cycle and update $k$, so that $C=C_0,\dots,C_{k-1}$ is a cycle whose length is coprime with $\ell$.

To ease notation, let us introduce the following terminology: A \textit{segment} of the cycle $C$ is any edge of the form $C_{j} \edgeo C_{i}$. The \textit{length} of a segment $C_{j} \edgeo C_{i}$ is $i-j \ \textnormal{mod}\ k$. In particular, every edge $C_{j} \edgeo C_{j+1}$ is a segment of length $1$, and there are no segments of length $0\ \textnormal{mod}\ k$ as the existence of such a segment would imply a loop in $\urgraph / \Omega$. Now, fix a segment $C_{j} \edgeo C_{i}$ of maximal length $p$; it follows that $C_i,C_j$ are contained in the cycle $C'=(C_i,C_{i+1},\dots,C_j)$ of length $k-p+1$. Set $m$ to be any number such that $m\ell = m'k-p+1$ for some $m'$ satisfying $m'-p>q-1$. Note that we can easily find such numbers by first taking $m'' > q-1+p$ such that $m''$ is a multiplicative inverse of $k$ modulo $\ell$, and setting $m':=(r\ell+p-1)m''$ for some $r$ such that $r\ell+p-1>0$,  $m:=(m'k-p+1) / \ell$. 
\bigskip 

The rest of the proof works similarly for both cases.
Let us first consider $\mu \merge \mu'$ defined by
    \begin{equation*}
        \binom{U=U_{n-1}}{U'}\ledgeo \binom{U_{n-2}}{U=U_{n-1}}\ledgeo \dots \ledgeo \binom{U_0=C_0}{U_1} \ledgeo \binom{C_{k-1}}{U_0=C_0}\ledgeo \dots\ledgeo\binom{C_{j-1}}{C_j},
    \end{equation*}
and  observe that the pair $(\mu,\mu')$ is properly separated. Append to $\mu$  the $\edge$-forward walk  which first follows the walk $(C_{j-1},\dots,C_{k-1},C_0=U_0,\dots,U_{q-1}=D_0)$ (taking $k-j+q$ steps), and then walks $m\ell$ steps along $D$, thus finishing at $D_0$; To $\mu'$ we append  the $\edge$-forward walk which starts at $C_j$, takes a step to $C_i$, takes $m'k-p$ steps on $C$ until $C_j$, then walks $k-j$ steps on the walk $(C_j,C_{j-1},\dots,C_{k-1},C_0=U_0)$, then takes $q-1$ steps along $Q$ to $U_{q-1}=D_0$, and then takes a final step from $D_0$ to $D_1$;  Note that the two walks have the same number of steps, namely $m'k-p+1+k-j+q=m\ell+k-j+q$. 

We now show that after appending the respective walk, $(\mu, \mu')$ remains properly separated. Supposing otherwise, there exists an edge from the $i$-th label $K$ of $\mu$ to the  $(i+1)$-st label $L$ of $\mu'$ for some $i$. If $i\leq k-j+1$, then this means that we get a contradiction to the choice of the segment $C_j\edgeo C_i$ by obtaining a segment of greater length $p+1$. If $i>k-j+1$, it means that $K,L$ lie in a cycle of length $m'k-p$ on $P\cup C\cup D$.
By the choice of $m'$, $m'k-p>q-1$, whence $K$ is an element of $D$, or $L$ is an element of $C$. Striving for contradiction, assume first that $K$ is an element of $D$. It follows that $L\notin C\cup \{U_1,\dots, U_{n-1}=U\}$ since these elements do not lie in the strongly connected component $\comp(V)$. Moreover, $L\notin Q\backslash (\{U_1,\dots, U_{n-1}\}\cup D)$ since $D$ was the nearest cycle reachable from $V$ via $V'$. Finally, $L\notin D$ by the choice of $m'$, whence the case $K\in D$ is not possible. On the other hand, we can also exclude the case when $K\in D$ by a similar argument.  Whence, $(\mu, \mu')$ is indeed properly separated.

 Finally, we append to $\mu$  the $\edgeo$-backward walk starting in $D_0$, walking $m\ell$ steps on $D$ back to $D_0$, and then following $Q$ backwards to $V=U_n$. Simultaneously, we add to $\mu'$  the $\edgeo$-backward walk starting in $D_1$, walking $m\ell$ steps on $D$ back to $D_1$, taking a step to $D_0$, and then following $Q$ backwards to $V'=U_{n+1}$. Clearly, $(\mu, \mu')$ remains  properly separated. By construction,  $\mu$ extends $U\edgeo V$, thus finishing the proof of Case 2.
\bigskip

\textbf{Case 3.} $\comp(U)=\comp(V)$ and $\comp(U) \neq \comp(V')$: This case can be reduced to the previous cases as follows. Take $W\in \comp(U)$ with $V \edgeo W$, and $W'$ arbitrary with $V' \edgeo W'$ (again, such $W'$ exists by  smoothness of $\urgraph / \omega$, whereas $W$ can be chosen even within $\comp(U)$  as $\comp(U)$ is smooth and of size greater than one). Since $U,V,W$ belong to the same strongly connected component, Case 1. yields a pair $(\mu_1, \mu'_1)$ for $U\edgeo V$ and $V\edgeo W$ as in the claim.  By Case 2., there exists a pair $(\mu_2, \mu'_2)$ for $V \edgeo W$ and $V' \edgeo W'$. Finally, labelling $\ledgeo$ by the labels $(V', V)$ and $(W', W)$, we get a pair $(\mu_3, \mu'_3)$ for $V' \ledgeo V$ and $W' \ledgeo W$. Now, $(\mu_1 + \mu_2 + \mu_3,\mu_1' + \mu_2' + \mu_3') $ has the required properties.
\end{proof}
    By the remark after the claim, this concludes the proof of the lemma.
\end{proof}

\alphaonpairs*

\begin{proof}
 Recall the notion $\gamma_{\xi} $ for the relation induced by an $\Omega$-orbit-labelled path $\xi$. Let us set $\alpha:=\alpha(\urgraph,\Omega)$. By  definition of $\alpha$, there exists a symmetric $\Omega$-orbit-labelled path  $\pi$ 
 such that for all $a,b\in O$ we have  $\alpha(a,b)$ if and only if there is a realisation of $\pi$ from $a$ to $b$, i.e. $(a,b) \in \gamma_{\pi} $ . Similarly, there exists such a path $\mu$ defining $\alpha|_P$  in the same way.
 \Cref{lemma:labellings} applied to $\pi$ and the orbits $O,P$ now provides us with an extension $\pi'$ of $\pi$ from $O$ to $O$,  and a relabelling $\rho'$ thereof from $P$ to $P$ such that the pair $(\pi', \rho')$ is  properly separated; similarly, applying the lemma to $\mu$ and the orbits $P,O$ we obtain an extension $\mu'$ of $\mu$ and a relabelling $\nu'$ thereof such that $(\mu',\nu')$ is properly separated, whence so is $(-\nu',-\mu')$ by the remark below \Cref{defi:properSeparation}. Now consider $\kappa:=\pi'-\nu'$ (an $\Omega$-orbit-labelled path from $O$ to $O$) and $\lambda:=\rho'-\mu'$ (an $\Omega$-orbit-labelled path from $P$ to $P$ which is a relabelling of $\kappa$). 
 
We show that the relations $\gamma_{\kappa}$ and $\gamma_\lambda$ are both $\alpha$-stable and induce a bijection on the set of $\alpha$-classes contained in $O$ and $P$, respectively.
To this end, note that for every $\alpha$-class $A\subseteq O$ and for every $a\in A$, we have $a + \gamma_{\pi'}=A$ (since $\pi'$ extends $\pi$ and by the remark below~\Cref{defi:alpha}). Furthermore, $A-\gamma_{\nu'} $ is non-empty (since $\nu'$ is realisable), and $\alpha$-stable (by~\Cref{aitm:walk} in~\Cref{lem:alphasymmetry}). In fact,  as implied by~\Cref{aitm:bijection} of~\Cref{lem:alphaFinitises}, $A-\gamma_{\nu'} $ only consists of one $\alpha$-class $B$. Hence, we have $a+ \gamma_{\kappa} =B$ for every $a\in A$. Likewise, one can show $b- \gamma_{\kappa} =A $ for every $b\in B$. Since $A\subseteq O$ was arbitrary, and $B$ was uniquely determined by $A$, it now follows that $\gamma_{\kappa}$ is $\alpha$-stable inducing a bijection on the set of $\alpha$-classes contained in $O$.  Similarly, for every $\alpha$-class $A\subseteq P$, there exists a uniquely determined $\alpha$-class $B\subseteq P$ such that $a+\gamma_{\lambda}= B $ for every $a\in A$ and $b- \gamma_{\lambda} = A$ for every $b\in B$, and the statement follows. Subsequently, we may choose  a number $k\geq 1$ such that for $\kappa^k:=\kappa + \dots + \kappa$ and $\lambda^k:=\lambda + \dots + \lambda$ (where the summands appear exactly $k$ times), both $\gamma_{\kappa^k} $ and $\gamma_{\lambda^k} $ are $\alpha$-stable relations inducing the identity bijection on the set of $\alpha$-classes contained in $O$ and $P$, respectively.

Set $R:=\gamma_{\kappa^k \merge \lambda^k}$, and observe that $\alpha\subseteq R$.  We claim that $R$ satisfies $A+R=A$ for every $\alpha$-class $A$ contained in $O$. To see this, observe first that $(\kappa,\lambda)$ is properly separated as  both $(\pi', \rho')$ and $(-\nu',-\mu')$ are. Thus, any realisation of $\kappa\merge \lambda$ starting at an element in $O$ must be a realisation of $\kappa$ (and hence end in $O$), i.e.  $(a, b)\notin \gamma_{\kappa \merge \lambda}$ for all $(a,b)\in O \times P$. By $\alpha$-stability of $\gamma_{\kappa^k}$, we now get $A+R=A$.

Switching the roles of $O$ and $P$ in the preceding arguments,  we obtain a relation $S$ with $\alpha\subseteq S$ and so that $B+S=B$ for all $\alpha$-classes $ B\subseteq P$. Finally, $R\cap S$ is the restriction of $\alpha$ to $O\cup P$ as desired.

It remains to show that $R\cap S$ is pp-definable from $\edge$ and unions of pairs of $\omega$-classes. Being extensions of the symmetric $\pi$ and $\mu$, respectively, $\pi'$ and $\mu'$ have algebraic length $0$, and hence so do $\kappa$ and $\lambda$. It follows that the induced relations $\gamma_{\kappa^k} $ and $\gamma_{\lambda^k} $ are pp-definable from $\edge$ and $\omega$-classes. Consequently,  $R = \gamma_{\kappa^k\merge\lambda^k}$ induced by  $\kappa^k\merge\lambda^k$ is pp-definable from $\edge$ and 
unions of pairs of $\omega$-classes. Arguing the same for $S$, we conclude the proof. 
\end{proof}

\subsection{Missing proofs from \texorpdfstring{\Cref{sect:reductionistic}}{Section~\ref{sect:reductionistic}}}\label{sect:missingreductionistic}

\redchar*
\begin{proof}
Clearly, any $\Omega$-reductionistic set $H$ has the stated property. For the converse, let $\Omega$ act naturally on subsets of $G$, and consider the orbit $M:=\{f(H) \mid f\in\Omega\}$ of $H$ in this action. We claim that $M$ is a partition of the underlying set $\bigcup M$. To see this, suppose $f(H)\cap g(H)\neq\emptyset$ for some $f,g\in\Omega$. Pick $x\in f(H)\cap g(H)$, and set $y:=f^{-1}(x)\in H$. Then we have $(g^{-1}\circ f)(y)\in H$, hence $(g^{-1}\circ f)(H)\subseteq H$. It follows that $g(H)\supseteq f(H)$, and similarly $f(H)\supseteq g(H)$, whence $f(H)=g(H)$. Now, the equivalence relation corresponding to the partition $M$ is $\Omega$-invariant, and $H\in M$. Adding one class $G\setminus \bigcup M$, we obtain an $\Omega$-invariant equivalence relation on $G$ which has $H$ as a class.
\end{proof}

\redfinitises*

\begin{proof}
    We will prove that $\alpha|_{\subdomain\times\subdomain}$ satisfies all items of~\Cref{lem:alphaFinitises}. To prove \Cref{aitm:finite}, observe that any weakly connected component $D$ of $\urgraph|_{\subdomain}$ is contained in a weakly connected component $D'$ of $\urgraph$ and the set $D' / \alpha$ is finite by assumption.

    To see \Cref{aitm:bijection}, observe that if $O,P\subseteq\subdomain$ are $\Omega|_\subdomain$-orbits such that $O\edgeo P$ in $\urgraph|_{\subdomain}$, then for every $\alpha|_{\subdomain\times\subdomain}$-class $A\subseteq O$, there exists an $\alpha|_{\subdomain\times\subdomain}$-class $B\subseteq P$ with $A\edgea B$ which immediately yields the desired statement. Indeed, since $O\edgeo P$, there exist $\alpha|_{\subdomain\times\subdomain}$-classes $A'\subseteq O,B'\subseteq P$ with $A'\edgea B'$. Take any $f\in\Omega$ with $f(A')=A$. Since $A\in f(\subdomain)\cap \subdomain$, this intersection is non-empty, whence $f|_\subdomain\in \Omega|_\subdomain$. As $\alpha$ finitises $(\urgraph, \Omega)$, it follows that $A=f(A')\edgea f(B')\in \subdomain$ as desired.

    Finally, \Cref{aitm:invariant} follows since $\alpha|_{\subdomain\times\subdomain}$ is invariant under $\Omega|_\subdomain$.
\end{proof}

\smoothpartreduc*

\begin{proof}
Let $H\subseteq G$ be $\alpha$-stable and $\Omega$-reductionistic, let $S\subseteq H$ be its smooth part, and let $f\in\Omega$ be so that $f(S)\cap S\neq \emptyset$. Then in particular $f(H)\cap H\neq \emptyset$, hence $f(S)\subseteq f(H)\subseteq  H$. Since $S\cup f(S)\subseteq H$ is smooth and contains $S$, it equals $S$, implying  $f(S)\subseteq S$. To see that $S$ is also $\alpha$-stable, let $a\in S$ and let $b\in G$ satisfy $\alpha(a,b)$. Then $b\in H$ by $\alpha$-stability. Picking any $f\in\Omega$ with $f(a)=b$,  the above argument yields again  $f(S)\subseteq S$, hence $b\in S$. Whence, $S$ is $\Omega$-reductionistic by~\Cref{lem:reductionistic_characterization}.
\end{proof}

\linkedcompreduc*

\begin{proof}
    Let $\fingraph := \urgraph /\alpha$. By~\Cref{lem:liftingTreeDefs}, it suffices to show that $K$ is tree pp-definable from $\fingraph$ with parameters. To this end, let $\beta$ be the $k$-linkedness-equivalence  on $J$, i.e.~relating two elements  if and only if they lie in the relation induced by any  fence of height $k$. Since $\beta$ is tree pp-definable from $\edge$, the $\beta$-class $K$ is tree pp-definable from $\edge$ together with any of its elements. Clearly, $\beta$ is  $\Omega$-invariant, and hence $K^\alpha$ is $\Omega$-reductionistic.
\end{proof}

\subsection{Missing proofs from \texorpdfstring{\Cref{sect:full}}{Section~\ref{sect:full}}}\label{sect:proofoffull}
For completeness, we now provide the full proof of~\Cref{prop:claimBK12}.
\claimBK*
\begin{proof}
Let $\fingraph := \urgraph /\alpha$. As the $\alpha$-blow-up of any subset of $J$ that is tree pp-definable from $\fingraph$ with parameters is pp-definable from $\urgraph$ with $\alpha$-classes by~\Cref{lem:liftingTreeDefs}, and $\alpha$-stable by definition, it suffices to work in the finite digraph $\fingraph$. Note that the $\alpha$-blow-up of a unary relation on $J$ is given by the union of its elements. By abuse of notation, we will identify subsets of $J$ with their $\alpha$-blow-ups.

Let $\beta$ be the $(k-1)$-linkedness  relation on $J$, and  
fix an arbitrary $\beta$-class $K'$. Recall that $K'$ is $\Omega$-reductionistic and pp-definable from $\urgraph$ and $\alpha$-classes by \Cref{l:linkedcompreduc}. We employ the following claim, which is taken from~\cite[Claim 3.11 and Proposition 3.2]{cyclicterms}:

\fullcaseoldargument*
\begin{proof}[Proof of~\Cref{claim:fullcaseoldargument}]
Let $S$ denote the smooth part of $K'$, and let $\phi(x)$ be the pp-formula asserting the existence of both an $\edge$-forward walk as well as that of an $\edge$-backward walk from $x$ in $K'$. Then $\phi$ is a tree pp-formula that defines $S$  from $\edge$ and $K'$: any element of the smooth part  clearly satisfies $\phi(x)$; on the other hand, the set of all elements satisfying $\phi(x)$ induces a smooth digraph since whenever $\phi(A)$ holds for some $A$, then it also holds for all witnesses of the existentially quantified variables.

We next observe that for any element $A\in K'$ there exists $B\in K'$ with $A\edge B$. It then follows that $K'$ contains a cycle, and hence its smooth part is non-empty. To make this observation, note there exists an $\edge$-forward walk from $A$ in $\fingraph$ of length $k-1$ by smoothness, ending at some element $C$. There also exists an $\edge$-forward walk from $A$  to $C$ of length $k$ since $\edgek$ is full. The second element $B$ of this walk then lies in the $\beta$-class of $A$ and we have $A\edge B$ as required.

Finally, pick any $A_0\edge A_1$ in $S$ and extend it to a $\edge$-forward walk $(A_0,\ldots,A_{k-1})$ in $S$; it exists by smoothness.  Since $\edgek$ is full on $\fingraph/\alpha$, there is also an $\edge$-forward walk $(B_0,B_1,\ldots,B_{k-1}, B_k)$ of length $k$ from $A_0=B_0$ to $A_{k-1}=B_k$. 
For all $i$, we have that $B_{i+1}$  is  $\beta$-equivalent to $A_i$ (this is obvious for $i=0$ from the two walks, and  can be seen for all other $i$ by extending both walks  beyond $A_{k-1}=B_k$); hence they lie in  $K'$. Since $B_0,B_k$ satisfy $\phi(x)$, so do all $B_i$; hence they lie in $S$. Together, we get that $A_0$ and $A_1$ are connected by a fence of height $k-1$ in $S$. As $A_0,A_1$ were arbitrary, it follows that every weakly connected component of $S$ is $(k-1)$-linked as desired.
\end{proof}

We may now take $K$ to be any weakly connected component of the smooth part of $K'$. Indeed, by~\Cref{l:smoothpartreduc}, the smooth part $S$ of the $\Omega$-reductionistic set $K'$ is itself $\Omega$-reductionistic. Thus, since $K$ is $(k-1)$-linked and weakly connected,~\Cref{l:linkedcompreduc} applied to $\urgraph|_S$ yield that   $K$ is $\Omega$-reductionistic. 

\end{proof}

\subsection{Missing proofs from \texorpdfstring{\Cref{sect:thm7}}{Section~\ref{sect:thm7}}}\label{sect:proofofproper}

We now fill in the details on the proof of \Cref{thm:7}.

\begin{center}
\noindent\fbox{%
    \parbox{0.95\columnwidth}{\textbf{In the rest of this section, we fix a smooth digraph $\urgraph =(\urdomain; \edge)$, an oligomorphic subgroup $\Omega$ of $\Aut(\urgraph)$, and $\alpha\subseteq\omega$ which finitises $(\urgraph, \Omega)$. Additionally, we assume that $\urgraph /\alpha$ is $k$-linked for some $k\geq 1$, that $k$ is the smallest number with this property,  and  that $\edgek\neq (\urdomain / \alpha)^2$.
    }
}}
\end{center}

We will make use of the following relation multiple times:

\begin{definition}
    Fix a tuple $\tuple t$ containing one representative of every $\alpha$-class, and let $\tilde \inj$ be the  $\Omega$-orbit of $\tuple t$. Then by $\inj$ we denote smallest $\alpha$-stable relation containing $\tilde \inj$. 
\end{definition}
For our purposes, the precise tuple $\mathbf t$ will not matter, (although different tuples yield different relations). Note that as $\Omega$ acts on the $\alpha$-classes, any tuple in $\tilde \inj$ contains  precisely one representative of every $\alpha$-class. Moreover,  $\inj$  consists of all tuples which are componentwise $\alpha$-equivalent to some tuple in $\tilde\inj$. Although this pp-definition of $\inj$ from $\tilde\inj$ uses $\alpha$, $\inj$ can be pp-defined from $\Omega$-orbits and the restrictions of $\alpha$ to $\omega$-classes: Indeed, $\tilde{\inj}$ is given by an $\Omega$-orbit, and since the projection of $\tilde\inj$ to any coordinate is contained in a single $\omega$-class $O$, componentwise $\alpha$-equivalence to a tuple in $\tilde\inj$ only requires access to $\alpha_O$ for all $\omega$-classes $O$. If $M \subseteq \urdomain$ is 
$\omega$-stable, then the \emph{projection $\prinj{M}$ of $\inj$ to $M$} is the relation obtained by projecting $I_G$ to those  coordinates which take values in $M$.

\subsubsection{Central or OR (proof of \texorpdfstring{\Cref{prop:Marcinsmagic}}{Proposition~\ref{prop:Marcinsmagic}} )}\label{sect:proofofmarcinsmagic}


\Marcinsmagic*

For the proof of \Cref{prop:Marcinsmagic}, we need to introduce the following notion: for a binary relation $R$, and a set $\subdomain\subseteq \urdomain$ that is a union of $\alpha$-classes $A_1,\dots,A_\ell$, we write $\subdomain\rplus R$ for the unary relation $$\subdomain\rplus R(x)\equiv \exists x_1,\dots,x_\ell\bigwedge\limits_{i\in[\ell]} (x_i\in A_i\wedge R(x_i,x)).$$ Note that if $H=\{A_1\}$, then $H\rplus R=H+ R$. Furthermore, observe that if $R$ is pp-definable without parameters, and $\subdomain$ is $\omega$-stable, then $\subdomain\rplus R$ is pp-definable from $\urgraph$, $\Omega$-orbits, and the restrictions of $\alpha$ to $\omega$-classes using the relation $\prinj{H}$. 

\Cref{prop:Marcinsmagic} follows immediately from the combination of \Cref{lemma:trickM,lemma:trickM2} below. 

\begin{lemma}\label{lemma:trickM}
    $\urgraph$ together with $\Omega$-orbits and the restrictions of $\alpha$ to $\omega$-classes pp-define one of the following
    \begin{enumerate}
        \item a relation that is central modulo $\alpha$, or
        \item a binary relation $R$ and a unary relation $C$ such that there exists an $\omega$-class $O$ with the property that for every $\alpha$-class $A\subseteq O$, $((A\rplus R)\cap C)-R=\urdomain$ but $((O\rplus R)\cap C)-R\neq \urdomain$. \label{itm:marcinmagicnoncentral}
    \end{enumerate}
\end{lemma}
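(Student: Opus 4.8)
The plan is to work inside the finite digraph $\fingraph:=\urgraph/\alpha$ on $J:=\urdomain/\alpha$ -- it is smooth, $k$-linked, hence weakly connected, $\edgek$ is not full on $J$, and (as in \Cref{thm:7}) $\urgraph/\Omega$ has no loop. All relations I will use are pp-definable from $\urgraph$, $\Omega$-orbits and the restrictions of $\alpha$ to $\omega$-classes: this includes $\edgek$, its compositions $\edge^{mk}$, the height-$k$ fence relations $\fence{k}{n}$ (all pp-definable from $\edge$, in fact by trees), the relations $\prinj{O}$, and hence $O\rplus R$ for these $R$ and every $\omega$-class $O$ -- exactly as recorded just before the statement.

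First I would dispose of two easy sub-cases. If some $\alpha$-class has all of $J$ as its set of $\edgek$-successors modulo $\alpha$, then $\edgek$ is a relation central modulo $\alpha$ -- the union of all such classes being a nonempty proper $\alpha$-centre, proper because $\edgek$ is not full -- and the first item holds; dually, if some $\alpha$-class is $\edgek$-reachable from every $\alpha$-class, then $\edge^{-k}$ is central modulo $\alpha$ and the first item holds again. So I may assume henceforth that the $\edgek$-forward image and the $\edgek$-backward image of every single $\alpha$-class is a proper subset of $J$.

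The core of the proof is then an adaptation to $\fingraph$ of the finite-digraph argument behind \cite[Theorem~7]{symmetries} and the fence machinery of \cite{cyclicterms} (``Marcin's trick''). Using $k$-linkedness one distils from the height-$k$ fence relations a unary set $C$, pp-definable from $\edge$, so tuned that with $R:=\edgek$ (or a suitable power $\edge^{mk}$ of it): (i) for \emph{every single} $\alpha$-class $A$, each vertex of $J$ admits a $\edgek$-walk into $(A\rplus R)\cap C$, i.e.\ $((A\rplus R)\cap C)-R=\urdomain$ -- here $k$-linkedness together with the two non-centrality reductions above is what guarantees that enough of $A+\edgek$ survives intersection with $C$ while staying co-reachable; and (ii) \emph{either} there is an $\omega$-class $O$ for which the simultaneous version is deficient, $((O\rplus R)\cap C)-R\neq\urdomain$, which together with (i) gives the second item, \emph{or} this simultaneous version holds for \emph{all} $\omega$-classes $O$, in which case the relation pp-defined from $R$, $C$ and the matching constraints $\prinj{O}$ that expresses universal co-reachability turns out to be central modulo $\alpha$, and we fall back to the first item. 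Assembling (i) and (ii) yields the dichotomy.

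The main obstacle is precisely the two halves of this last step: engineering a single $C$ for which (i) is total at every individual $\alpha$-class, and the analysis in (ii) showing that the only obstruction to deficiency at some $\omega$-fibre is the presence of a central relation. This is where the $\omega$-versus-$\alpha$ layering truly enters -- through $\inj$ and the projections $\prinj{O}$, which have no finite counterpart and which are exactly what makes the second item exploitable in \Cref{lemma:trickM2} to build an $\alpha$-stable TSR-relation $T$ with $\OR(T,T)$ pp-definable. A pervasive side-burden is to keep every auxiliary relation pp-defined from $\edge$ (itself not $\alpha$-stable) interacting correctly with $\alpha$, so that its $\alpha$-factor on $\fingraph$ behaves as the finite argument requires; the bijectivity of $\edge$ between adjacent $\omega$-classes (\Cref{lem:alphaFinitises}) and the $\alpha$-stability statements of \Cref{lem:alphasymmetry} are the tools for that.
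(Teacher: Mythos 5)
Your plan has the right overall skeleton --- dispose of the case where $\edgek$ is central modulo~$\alpha$, then construct a relation $R$ and a unary $C$ so that for every $\alpha$-class $A$ one has $((A\rplus R)\cap C)-R=\urdomain$, and split on whether this fails after widening $A$ to an $\omega$-class~$O$ --- and this matches the strategy the paper actually uses. But as you yourself note, the ``main obstacle'' is engineering $R$ and $C$, and you leave exactly that part open, so what you have is a plan rather than a proof. Two concrete points where the missing step would not go through as you sketched it.

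First, your proposed $R:=\edgek$ ``or a suitable power $\edge^{mk}$'' is the wrong object in general. The argument needs a relation $R$ with $(R-R)/\alpha$ full but $R/\alpha$ proper, since it is the fullness of $R-R$ through~$C$ (the relation $R_C(x,y)\equiv\exists z\, R(x,z)\wedge R(y,z)\wedge C(z)$) that is maintained as the invariant driving the construction. The paper obtains this by picking the least $\ell$ with $\fence{k}{\ell}$ full and $\fence{k}{\ell-1}$ not, and then setting $R:=\edgek$ only if $\ell=1$; when $\ell>1$ it takes $R:=\fence{k}{\ell-1}$, a height-$k$ fence relation, which is symmetric and satisfies $(R-R)/\alpha=\fence{k}{2\ell-2}/\alpha \supseteq \fence{k}{\ell}/\alpha$, hence full. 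No power $\edge^{mk}$ can be guaranteed to have this property when $\ell>1$: indeed $\fence{k}{1}=\edgek-\edgek$ is then already not full, so $\edgek$ fails, and higher powers do not help in any obvious way. The fences are essential, and your sketch does not recover them.

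Second, the unary $C$ is not ``distilled from the fence relations'' in a single shot; it is built \emph{iteratively}, one $\omega$-class at a time. Starting from $C:=\urdomain$, one processes each $\omega$-class $O$: if $((O\rplus R)\cap C)-R\neq\urdomain$ one lands immediately in item~(2); otherwise one replaces $C$ by $C\cap((O\rplus R)\cap C)$ and checks whether $R_C/\alpha$ is still full --- if not, $O$ is in the $\alpha$-centre of $R_C$ and item~(1) holds; if yes, one continues. Only after exhausting all $\omega$-classes does one conclude (if neither exit was taken) that $R$ itself is central modulo~$\alpha$, with every $\alpha$-class inside~$C$ witnessing centrality. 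Your ``either/or'' in point~(ii) collapses this loop into a single dichotomy and omits the intermediate exit via $R_C$, which is part of why you find yourself unable to finish: the central relation in the first branch is not a fixed relation expressing ``universal co-reachability'', but whatever $R_C$ happens to be at the step where its factor stops being full. Finally, your point~(i) is not a deep engineering problem at all: once the invariant $R_C/\alpha=(\urdomain/\alpha)^2$ is in place, $((A\rplus R)\cap C)-R=\urdomain$ for every $\alpha$-class $A$ follows directly, with $\alpha$-stability supplied by item~2 of Lemma~\ref{lem:alphasymmetry}. The difficulty you anticipated there dissolves once the invariant is the right one.
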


\begin{proof}
    If $\edgek$ is central modulo $\alpha$, then we have already found a relation as in the first item. Let us therefore assume that this is not the case.
    Since $\urgraph/\alpha$ is $k$-linked and $\edgek$ is not central modulo $\alpha$, we can find $\ell\geq 1$ such that $\fence{k}{\ell}$ is full, but $\fence{k}{\ell-1}$ is not. If $\ell=1$, we set $R:=\edgek$; otherwise, we set $R:=\fence{k}{\ell-1}$, and observe that $R$     satisfies $(R-R)/\alpha=(\urdomain/\alpha)^2$, but $R/\alpha\neq (\urdomain / \alpha)^2$. For a an $\alpha$-stable set $S$, we will denote by $S^{\rplus}$ the set $(S\rplus R)\cap C$. 
    We first set $C:=\urdomain$. Now, we gradually shrink the unary relation $C$ by intersecting it with sets of the form $O^{\rplus}$, where $O$ is an $\omega$-class, keeping the property 
    that the relation $$R_C(x,y)\equiv \exists z R(x,z)\wedge R(y,z)\wedge C(z)$$ satisfies $R_C/\alpha=(\urdomain/\alpha)^2$. 
    Observe that $R_{G}/\alpha=(\urdomain/\alpha)^2$ holds indeed true, as we have assumed that $(R-R)/\alpha=(\urdomain/\alpha)^2$.  

    Let $O$ be an $\omega$-class. For every $\alpha$-class $A\subseteq O$, it holds that $A^{\rplus}-R=\urdomain$ by~\Cref{aitm:walk} in~\Cref{lem:alphasymmetry}.
    If $O^{\rplus}-R\neq \urdomain$, we are in case (2). Let us therefore assume that $O^{\rplus}-R = \urdomain$, replace $C$ by $C\cap O^{\rplus}$, and observe that $C$ is pp-definable without parameters using the relation $\prinj{O}$.

    Let us first assume that with the updated $C$, $R_C/\alpha\neq (\urdomain/\alpha)^2$. Then it follows that $O$ is contained in the $\alpha$-centre of $R_C$, and $R_C$ satisfies item (1).

    If $R_C/\alpha = (\urdomain/\alpha)^2$ and there is an $\omega$-class $O'$ for which we have not updated $C$ yet, we proceed with $O'$.

    Finally, if $R_C/\alpha = (\urdomain/\alpha)^2$ and we have already updated $C$ for all $\omega$-classes, we conclude that $R$ is itself central modulo $\alpha$. Indeed, any $\alpha$-class $A\subseteq C$ is contained in the $\alpha$-centre of $R$.
\end{proof}

\begin{lemma}\label{lemma:trickM2}
    If \Cref{itm:marcinmagicnoncentral} of~\Cref{lemma:trickM} applies, then $\urgraph$ together with   $\Omega$-orbits and the restrictions of $\alpha$ to $\omega$-classes pp-define a  relation $\OR(T,T)$ for an $\alpha$-stable TSR-relation $T$.
\end{lemma}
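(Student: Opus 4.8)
\textbf{Proof proposal for \Cref{lemma:trickM2}.}

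The plan is to use the relation $R$ and the unary relation $C$ from \Cref{itm:marcinmagicnoncentral} of \Cref{lemma:trickM} to manufacture an $\OR$-relation over a TSR-relation in the following way. Fix the $\omega$-class $O$ with the stated property, and let $A_1,\dots,A_n$ be the $\alpha$-classes contained in $O$. The key feature is that $((A_i\rplus R)\cap C)-R=\urdomain$ for every single $i$, while $((O\rplus R)\cap C)-R=((A_1\rplus R)\cap\cdots\cap(A_n\rplus R)\cap C)-R$ is a proper subset of $\urdomain$. First I would introduce, for each $i$, the $\alpha$-stable unary relation $S_i:=((A_i\rplus R)\cap C)$; each $S_i-R$ equals $\urdomain$, but the intersection of all the $S_i$, call it $S$, fails to satisfy $S-R=\urdomain$. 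The idea is that ``being outside $S$'' is witnessed, for any element, by at least one missing index $i$, and this disjunctive structure is exactly what gives an $\OR$-relation. Concretely, I would form an $n$-ary (or suitably padded binary-over-tuples) relation whose tuples $(x_1,\dots,x_n)$ lie in it iff for \emph{some} $i$, $x_i$ can be $R$-reached from $S_i$ — equivalently, iff not all coordinates simultaneously avoid the respective $S_i$. Formally: define $T$ on $\urdomain$ to be the $\alpha$-blow-up of the relation ``$x\in A_j$ for some $j$ such that the corresponding obstruction does not occur'', and take $\OR(T,T)$ to be $T(\tuple x)\vee T(\tuple y)$.

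More carefully, here is the intended pp-definition. Using the relation $\prinj{O}$ (which lets us quantify one element per $\alpha$-class of $O$), I would write down the relation
\[
T'(z_1,\dots,z_n)\equiv \bigvee_{i=1}^n\Big(\exists w\; R(w,z_i)\wedge w\in A_i\wedge C(z_i)\wedge\bigwedge_{j\neq i}\alpha(z_i,z_j)\Big),
\]
but since disjunction is not directly available in pp-logic, I would instead realise it through the $\OR$-construction: note that $\OR(R_1,\dots,R_m)$ is pp-definable from the $R_\ell$ exactly when we are allowed to take one of finitely many cases, which is precisely what \cite[Proposition 3.2]{cyclicterms}-style arguments and the $\OR$-constructions in \cite[Lemmata~20--23]{symmetries} exploit. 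The honest route is: let $T$ be the $\alpha$-blow-up of the $(n-1)$-fold ``partial diagonal'' relation on $O/\alpha$ that forbids exactly the one combination which makes $((O\rplus R)\cap C)-R$ proper; total reflexivity of $T$ follows because identifying any two coordinates forces us back into one of the ``good'' $S_i$, and total symmetry follows from the symmetric roles of the $A_i$ once we symmetrise by permuting the representatives via $\Omega|_O$ (here $\Omega$ acts on $O/\alpha$, and averaging over this action restores symmetry). Then $\OR(T,T)$ is obtained by the formula $R(\tuple x)\vee R(\tuple y)$ applied to two disjoint copies, and by construction of $T$ from $R,C,\prinj{O}$ and $\alpha$-classes on $O$, everything is pp-definable from $\urgraph$, $\Omega$-orbits, and the restrictions of $\alpha$ to $\omega$-classes; $\alpha$-stability of $T$ is built in since we took an $\alpha$-blow-up, and it is inherited by $\OR(T,T)$.

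The main obstacle I anticipate is two-fold. First, verifying that the relation $T$ so constructed is \emph{proper} — i.e.\ that $\OR(T,T)\neq\urdomain^{2n}$, equivalently $T\neq\urdomain^n$ — which is exactly where one must use the second half of the hypothesis, that $((O\rplus R)\cap C)-R\neq\urdomain$: one picks an element $c$ outside this set, traces back through the definition to see that for \emph{every} $i$ there is no $R$-predecessor of $c$ in $S_i$, and uses this to exhibit a tuple outside $T$. Getting the quantifier alternation right so that ``for every $i$ there is a failure'' translates into a single forbidden tuple, rather than merely $n$ separate constraints, is the delicate point; this is precisely the place where the interplay between $(A_i\rplus R)$ for individual $i$ versus $(O\rplus R)=\bigcap_i(A_i\rplus R)$ does the work. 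Second, establishing total symmetry of $T$ honestly: a priori the $n$ coordinates play distinguishable roles (indexed by distinct $\alpha$-classes $A_i$), so I would need to symmetrise, either by intersecting over all permutations $\tau$ of $[n]$ of the relation (which stays pp-definable, being a finite conjunction) or by invoking that $\Omega$ acts transitively enough on $O/\alpha$ — I would check which of these is actually available and cheap; I expect the finite-conjunction symmetrisation to be the safe choice, at the mild cost of re-checking that properness survives the intersection, which it does because the witnessing bad tuple can be chosen generically (all coordinates equal to a fixed $c$ as above, which is symmetric under all permutations).
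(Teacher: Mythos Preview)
Your proposal has a genuine gap at the crux of the argument: you never actually produce a \emph{pp}-definition of $\OR(T,T)$. You write that ``$\OR(T,T)$ is obtained by the formula $R(\tuple x)\vee R(\tuple y)$ applied to two disjoint copies,'' but $\vee$ is not a primitive-positive connective. The entire content of the lemma is to simulate this disjunction by a pp-formula, and nothing in your outline addresses that. Your references to \cite[Lemmata~20--23]{symmetries} do not help here: those results start from an already-given $\OR$-relation and transform it; they do not create one from scratch.

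There is a second, independent problem: your construction of $T$ is too vague to be correct. You describe it as the relation that ``forbids exactly the one combination which makes $((O\rplus R)\cap C)-R$ proper,'' but there is no single combination to forbid. For each bad element $c\notin ((O\rplus R)\cap C)-R$ there is some pattern of which $S_i$ lack an $R$-preimage of $c$, and these patterns vary with $c$; the hypothesis only says that the full intersection fails, not that a particular tuple of indices is responsible.

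The paper's argument is substantially more intricate. It first passes to a \emph{maximal} union $S\subsetneq O$ of $\alpha$-classes with $S^{\rplus}-R=\urdomain$, then uses the action of $\Omega$ on $O/\alpha$ to find $f\in\Omega$ with $f(S)\neq S$ and a carefully chosen enlargement $U=S\cup\{f(S_1),\dots,f(S_\ell)\}$ so that $U^{\rplus}-R$ is nonempty and proper. The TSR-relation $T$ consists of $k$-tuples contained in some $\Omega$-translate of $U^{\rplus}-R$, with $k$ chosen minimally so that $T$ is proper. The disjunction is then simulated by a ``three-body'' trick: one introduces existentially quantified tuples $\tuple x,\tuple y,\tuple z$ and uses the $\Omega$-orbit of $(s_1,\dots,s_n,f(s_1),\dots,f(s_n))$ to force $\tuple x$ and $\tuple z$ to lie in \emph{different} $\Omega$-translates of $S$, while $\tuple y$ lies in some translate. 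Then either $\tuple x,\tuple y$ or $\tuple y,\tuple z$ are in different translates, and an auxiliary formula $\phi$ converts each such mismatch into a constraint $\tuple v\in T$ or $\tuple v'\in T$. This orbit-based pigeonhole is the mechanism that turns a conjunction into a disjunction, and it is entirely absent from your outline.
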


\begin{proof}
    Let $R,C$ and $O$ be as in item (2) of~\Cref{lemma:trickM}. As in the proof of~\Cref{lemma:trickM}, for an $\alpha$-stable relation $S$, we will denote by $S^{\rplus}$ the set $(S\rplus R)\cap C$. Observe that since $R$ and $C$ are pp-definable from $\urgraph$, $\Omega$-orbits, and the restrictions of $\alpha$ to $\omega$-classes, for all $g \in \Omega$ we have \begin{equation}\label{eq:(g)plus=(gplus)}
        g(S)^{\rplus}-R = g(S^{\rplus} -R).
    \end{equation} Let $S\subseteq O$ be a union of $\alpha$-classes which is maximal with respect to inclusion with the property that $S^{\rplus}-R=\urdomain$. In particular, $S \subsetneq O$ by the properties of $O$. Using~\eqref{eq:(g)plus=(gplus)}, for all $g \in \Omega$ we also get \begin{equation}\label{eq:g(S)+-R}
        g(S)^{\rplus}-R=\urdomain.
    \end{equation}  Let $\ell\geq 1$ be maximal such that there exists $f_{\ell} \in\Omega$ satisfying $f_{\ell} (S)\neq S$ and $|(S/\alpha)\cap (f_{\ell} (S)/\alpha)|=\ell-1$. Clearly, $\ell \leq |S/\alpha|.$
    
    Among all $f\in \Omega$ with $f(S)\neq S$ we choose one with the property that -- for a suitable choice of $\ell$-many pairwise distinct $\alpha$-classes $S_1, \dots, S_{\ell}$ contained in $S$ -- the amount $i_{f}$ of $\alpha$-classes contained in the set $$(S\cup \{f(S_1),\dots,f(S_\ell)\})^{\rplus}-R$$ is maximal.
    Note that we do not require  $|(S/\alpha)\cap (f (S)/\alpha)|=\ell-1$, and that $i_f \geq 1$. Indeed, it is enough to show that there exists a choice of $f$, and $S_1,\dots,S_\ell$ satisfying that $i_f\geq 1$. Choosing $f_\ell$ as above, and $\alpha$-classes $S_1,\dots, S_{\ell-1}\subseteq S\cap f_{\ell}(S)$, for all $\alpha$-classes $S_{\ell} \subseteq S\setminus\{S_{1}, \dots, S_{\ell-1}\}$  we get $f_{\ell}(S_\ell)\in (S \cup \{f_{\ell}(S_1), \dots, f_{\ell}(S_{\ell-1}), f_{\ell}(S_{\ell})\})  ^{\rplus}-R$ as $S^{\rplus}-R=\urdomain$. Whence, $f_\ell(S_\ell)$ witnesses that $i_{f_\ell}\geq 1$. Fix $S_1, \dots, S_{\ell} $ witnessing the maximality of $i_f$, and set $U:=S\cup \{f(S_1),\dots,f(S_\ell)\}$.  It follows that $\emptyset \neq U^{\rplus}-R\neq \urdomain$ by the choice of $S$ and $\ell$.
    Finally, let us remark that $U$ is $\alpha$-stable by definition, as is $U^{\rplus}-R$ by~\Cref{aitm:walk} in~\Cref{lem:alphasymmetry}. However, $U^{\rplus}-R$ might only be pp-definable from $\urgraph$ using $\alpha$-classes from $S\cup f(S)$. Below, we will find a suitable $k \geq 1$ such that the closure $T$ of $(U^{\rplus}-R)^k$ under $\Omega$ is not full, and we will prove that the relation $\OR(T,T)$ is pp-definable from $\urgraph$, $\Omega$-orbits, and the restrictions of $\alpha$ to $\omega$-classes without parameters.

    To this end, let us choose $k\geq 1$ so that every $(k-1)$-element subset of $\urdomain$ is included in $g(U^{\rplus}-R)$ for some $g\in\Omega$, and there exists a $k$-element subset of $\urdomain$ which is not included in $g(U^{\rplus}-R)$ for any $g\in\Omega$. Such $k$ exists as $U^{\rplus}-R\subsetneq \urdomain$ is $\alpha$-stable. In particular, if $N:= \urdomain /\alpha$, we have $k < N$, as any tuple from $\inj$ containing the representatives of all $\alpha$-classes is not contained in $U^{\rplus}-R$. Set our desired $\alpha$-stable TSR-relation $T$ to consist of all tuples $(t_1,\dots t_k)$ such that $\{t_1,\dots,t_k\}\subseteq g(U^{\rplus}-R)$ for some $g\in\Omega$. By choice of $k$, $T$ is proper.

    Assume that $(S_1, \dots,  S_n) $ is an enumeration of all $\alpha$-classes contained in $S$.   In order to construct  the $OR(T,T)$-relation, we first prove the following claim.

    \begin{claim}\label{claim:OR}
        There exists a pp-formula $\phi(\tuple x, \tuple y, \tuple v)$, where $\tuple x,\tuple y$ are of length $n$, and $\tuple v$ is of length $k$, such that all of the following items hold.
        \begin{enumerate}
            \item For all $g\in\Omega$, and for every $\tuple s,\bar{\tuple s}\in (g(S))^n$, the relation defined by $\phi(\tuple s,\bar{\tuple s},\tuple v)$ is $\urdomain^k$.
            \item For all $g, \bar g\in\Omega$ with $g(S) \neq \bar g(S)$, and for every $\tuple s \in g(S_1)\times \dots \times g(S_n) $ and $ \bar{\tuple s} \in \bar g(S_1)\times \dots \times \bar g(S_n)$, the formula $\phi(\tuple s,\bar{\tuple s},\tuple v)$ defines a subset of $T$.
            \item For every $\tuple t\in T$, there exist $g\in\Omega$ and $\tuple s \in g(S_1)\times \dots \times g(S_n) $, $ \bar{\tuple s} \in  g f(S_1)\times \dots \times gf(S_n)$ such that $\phi(\tuple s,\bar{\tuple s},\tuple t)$ holds true.
        \end{enumerate}
    \end{claim}

    \begin{proof}
        Let $u_1,\dots,u_m$ be representatives of all $\alpha$-classes in $U^{\rplus}-R$, and let $O_U$ be the $m$-orbit of $(u_1,\dots,u_m)$ under $\Omega$. Note that $m\geq k$ as $U\subseteq U^{\rplus}-R$.  We define $\phi(\tuple x, \tuple y, \tuple v)$ as follows:
\begin{equation*}
    \begin{gathered}
        \exists w_1,\dots,w_m,w'_1,\dots,w'_m,v'_1,\dots,v'_k, \\
        x_1^{w_1},\dots,x_1^{w_m},x_2^{w_1},\dots,x_n^{w_m},
        x_1^{v_1},\dots,x_1^{v_m},x_2^{v_1},\dots,x_n^{v_m}, \\
        y_1^{w_1},\dots,y_1^{w_m},y_2^{w_1},\dots,y_\ell^{w_m},
        y_1^{v_1},\dots,y_1^{v_m},y_2^{v_1},\dots,y_\ell^{v_m} \\
        O_U(w_1,\dots,w_m)\wedge O(\tuple x)\wedge O(\tuple y)\quad \wedge \\
        \left.
        \begin{array}{l}
            \bigwedge\limits_{i \in [m]} \quad \big(  
            R(w_i, w'_i) \wedge C(w'_i) \quad  \wedge \\ 
            \bigwedge\limits_{j \in [n]} \quad R(x^{w_i}_j, w'_i) \wedge \alpha|_O(x_j, x^{w_i}_j) \quad \wedge \\
            \bigwedge\limits_{j \in [\ell]} \quad R(y^{w_i}_j, w'_i)  \wedge \alpha|_O(y_j, y^{w_i}_j) 
            \big) \quad \wedge
        \end{array}
        \right\} (i) \\ 
        \left.
        \begin{array}{l}
            \bigwedge\limits_{i \in [k]} \quad \big( 
            R(v_i, v'_i) \wedge C(v'_i) \quad \wedge  \\ 
            \bigwedge\limits_{j \in [n]} \quad R(x^{v_i}_j, v'_i) \wedge \alpha|_O(x_j, x^{v_i}_j) \quad \wedge \hspace{0.4cm} \\
            \bigwedge\limits_{j \in [\ell]} \quad R(y^{v_i}_j, v'_i) \wedge \alpha|_O(y_j, y^{v_i}_j) 
            \big)
        \end{array}
        \right\} (ii)
    \end{gathered}
\end{equation*}

Observe that tuples $\tuple x $ and $\tuple y$ can take values only in $O$ by the conjunct $O(\tuple x) \wedge O(\tuple y)$ in $\phi(\tuple x, \tuple y, \tuple v)$. Moreover, the relation defined by $\phi$ is $\alpha$-stable: Indeed, as the variables  from $\tuple x$ appear only in conjuncts using the $\alpha$-stable relation $\alpha|_O$, the truth of $\phi$ depends only on the $\alpha$-classes appearing in the values of $\tuple x$. Similarly, the truth of $\phi$ depends only on the $\alpha$-classes appearing in the values of $\tuple y$. To see that $\phi$ also depends only on the $\alpha$-classes appearing in the values of $\tuple v$,   we apply ~\Cref{aitm:walk} in~\Cref{lem:alphasymmetry} to $C$ and $R$, which are pp-definable from $\urgraph$, $\Omega$-orbits, and the restrictions of $\alpha$ to $\omega$-classes.

Let us now prove that $\phi$ satisfies all items from the statement of the claim.
To prove the first item, let $g$ and $\tuple s,\bar{\tuple s}$ be as in the statement, and take $\tuple t=(t_1,\dots,t_k)\in \urdomain^k$ arbitrarily. We will define an evaluation $\val$ of the existentially quantified variables of $\phi$ witnessing $\phi(\tuple s, \bar{\tuple s}, \tuple t) $. Recall that as observed in~\eqref{eq:g(S)+-R}, $g$  satisfies $g(S)^{\rplus}-R=\urdomain$.   Since $\tuple s,\bar{\tuple s}\in (g(S))^n$, by definition of $g(S)^{\rplus}$, every element from $\urdomain$ has an $R$-neighbour contained in $C$ in common with all the elements of a set containing representatives of every $\alpha$-class appearing in $\tuple s$ and $\bar{\tuple s}$. In other words, for every $i\in[k]$ there exist tuples $(p^i_1, \dots, p^i_n)  , (\bar p^i_1, \dots, \bar p^i_n)  \in g(S)^n  $, and $a_i \in C$ satisfying \begin{itemize}
    \item for every $j\in[n]$, $p^i_j$ lies in the $\alpha$-class of $s_j$, and $(p^i_j, a_i) \in R$,
    \item for every $j \in \br{n}$, $\bar p^i_j$ lies in the $\alpha$-class of $\bar{s}_j$, and  $(\bar p^i_j, a_i) \in R$,
    \item $(t_i, a_i) \in R$.
\end{itemize}
Setting $\val(v'_i):=a_i$ for every $i\in[k]$, $\val(x_j^{v_i}):=p^i_j$ for every $(i,j)\in[k]\times \br n$, and $\val(y_j^{v_i}):=p^i_j$ for every $(i,j)\in[k] \times \br{\ell}$ , we see that $\val$ satisfies $(ii)$ in the definition of $\phi$. Similarly, we proceed for the tuple $(u_1, \dots, u_m)$: Take $(q^i_1, \dots, q^i_\ell), (\bar q^i_1, \dots, \bar q^i_n)  \in g(S)^n, $ $b_i \in C$ satisfying the corresponding bullet points from above, and evaluate $\val(w_i):=u_i,  \val(w'_i):=b_i, \val(x_j^{w_i}):=q_j^i $, and  $\val(y_j^{w_i}):=\bar q_j^i$ for all appropriate $i,j$. Now, $\val$ also satisfies $(i)$ in the definition of $\phi$, thus witnessing $\phi(\tuple s, \bar{\tuple s}, \tuple t) $. 

In order to see that $\phi$ satisfies the second item of the claim, recall that $S$ was chosen to be maximal with the property that $S^{\rplus}-R=\urdomain$, and that $g(S)$ inherits this property by~\eqref{eq:g(S)+-R} for every $g \in \Omega$. Thus, if $g,\bar g\in\Omega$ satisfy $g(S)\neq \bar g(S)$, $\tuple x$ takes a value $\tuple s \in g(S_1)\times \dots \times g(S_n), $ and $\tuple y$ takes a value $ \bar{\tuple s} \in \bar g(S_1)\times \dots \times \bar g(S_n)$, then there needs to exist $i\in[\ell]$ such that $\bar g(\bar{s}_i)\notin g(S)$ by the choice of $\ell$. Indeed, otherwise, $g^{-1} \bar g(\bar{s}_i)\in S$ for every $i\in[\ell]$, whence $|S\cap g^{-1} \bar g(S)|\geq \ell$, and $g^{-1} \bar g(S)\neq S$ in contradiction to the choice of $\ell$. Hence, $g(S)\subsetneq g(S)\cup \bar g(S_1)\cup\dots\cup\bar g(S_\ell)$, whence $\tuple v$ asserting $\phi(\tuple s, \bar{\tuple s}, \tuple v)$ cannot take all values in $\urdomain^k$ by the maximality of $S$.
Let us show that, in fact, $\phi(\tuple s,\bar{\tuple s},\tuple v)$ can take values only in $T$. To this end, let $V\subseteq O$ be the set containing precisely the $\alpha$-classes of $s_1,\dots,s_n,\bar{s}_1,\dots,\bar{s}_\ell$; we have already derived that $V^{\rplus}-R\neq \urdomain$. Let now $\tuple t\in \urdomain^k$ be such that $\phi(\tuple s,\bar{\tuple s},\tuple t)$ holds true, and let $\val'$ be an evaluation of the existentially quantified variables witnessing this. By the first conjunct of $\phi$, we get that there exists $h\in\Omega$ such that $\val'(w_i)=h(u_i)$ for every $i\in[m]$. Let us define a new evaluation $\val(x):=h^{-1}(\val'(x))$. It follows that 
$h^{-1}(\tuple s)\in (h^{-1}g(S))^n$, and $h^{-1}(\bar{\tuple s})\in (h^{-1}\bar{g}(S))^n$.
Moreover, we have $U^{\rplus}-R\subseteq h^{-1}(V^{\rplus}-R)$. Indeed, by the conjuncts $(i)$ in the definition of $\phi$, $\val'(w_i)\in V^\rplus -R$ for every $i\in[m]$. Whence, we obtain $\val(w_i)=u_i\in h^{-1}(V^\rplus - R)$, and since $u_1,\dots,u_m$ are representatives of all $\alpha$-classes in $U^\rplus - R$, ~\Cref{aitm:walk} in~\Cref{lem:alphasymmetry} yields that $U^{\rplus}-R\subseteq h^{-1}(V^{\rplus}-R)$. 
    
By the maximality of $U^{\rplus}-R$, it now follows that 
\begin{equation}\label{eq:h(V)+-R}
   U^{\rplus}-R= h^{-1}(V^{\rplus}-R) = h^{-1}(V)^{\rplus} -R,
\end{equation} where the last equation is justified by~\eqref{eq:(g)plus=(gplus)}.
 Looking at the conjuncts $(ii)$ in $\phi$, we get  for every $i\in[k]$ that $h^{-1}(t_i)\in h^{-1}(V)^{\rplus} -R. $ By~\eqref{eq:h(V)+-R}, therefore   $ h^{-1}(t_i) \in U^{\rplus}-R$, whence $t_i  \in h(U^{\rplus}-R)$ for all $i \in \br k$, and we conclude $\tuple t\in T$ as desired.

Finally, let us prove the third item of the claim. To this end, let $\tuple t\in T$, which means that there exists $g\in \Omega$ such that $\tuple t\in (g(U^\rplus -R))^k$. Let $\tuple s \in g(S_1)\times \dots \times g(S_n) $, $ \bar{\tuple s} \in  gf(S_1)\times \dots \times gf(S_n)$ be arbitrary; we claim that $\phi(\tuple s,\bar{\tuple s},\tuple t)$ holds true. We will define a satisfying evaluation $\val$ of the existentially quantified variables of $\phi$ as follows. Similar to the proof of the first item, let us observe that for every $i\in[k]$, the fact that $t_i\in g(U^\rplus -R)=g(U)^{\rplus}-R $ is witnessed by the existence of $(p^i_1, \dots, p^i_{n}) \in g(S_1) \times \dots \times g(S_n) ,  (\bar p^i_1, \dots, \bar p^i_{\ell}) \in gf(S_1) \times \dots \times gf(S_{\ell})$, and $a_i \in C$ such that all of the following hold:
    \begin{itemize}
        \item for every $j\in[n]$, $p^i_j$ lies in the $\alpha$-class of $s_j$, and $(p^i_j, a_i) \in R$,
        \item for every $j \in \br{\ell}$, $\bar p^i_j$ lies in the $\alpha$-class of $\bar{s}_j$, and  $(\bar p^i_j, a_i) \in R$,
        \item $(t_i, a_i) \in R$.
    \end{itemize}
Setting $\val(v'_i):=a_i$ for every $i\in[m]$, $\val(x_j^{v_i}):=p^i_j$ for every $(i,j)\in[m]\times \br n$, and $\val(y_j^{v_i}):=\bar p^i_j$ for every $(i,j)\in[m] \times \br{\ell}$ , we see that $\val$ witnesses that $(\tuple s,\bar{\tuple s},\tuple t)$ satisfies the conjunct $(ii)$ in the definition of $\phi$.

In order to finish the construction of $\val$, it remains to find values for the variables appearing in the conjunct $(i)$. To this end, let us first set $\val(w_i):=g(u_i)$ for every $i\in[m]$, and take, similarly as above, $(q^i_1, \dots, q^i_n) \in S_1 \times \dots \times S_n  , (\bar q^i_1, \dots, \bar q^i_{\ell} ) \in f(S_1)\times \dots \times f(S_{\ell}) $ and $b_i \in C$ witnessing that $(u_1,\dots,u_m)\in U^\rplus - R$, i.e. such that all of the following hold:
    \begin{itemize}
        \item for every $j\in[n]$, $q^i_j$ lies in the $\alpha$-class of $g^{-1}(s_j)$, and $(q^i_j, b_i) \in R$,
        \item for every $j \in \br{\ell}$, $\bar q^i_j$ lies in the $\alpha$-class of $g^{-1}(\bar{s}_j)$, and   $(\bar q^i_j, b_i) \in R$,
        \item $(u_i, b_i) \in R$.
    \end{itemize}
Let us set $\val(w'_i):=g(b_i)$ for every $i\in[m]$, $\val(x_j^{w_i}):=g(q^i_j)$ for every $(i,j)\in[m]\times \br n$, and $\val(y_j^{v_i}):=g(\bar q^i_j)$ for every $(i,j)\in[m] \times \br{\ell}$. As $\alpha$ is invariant under $\Omega$ by assumption, $(q^i_j,g^{-1}(s_j))\in \alpha$ implies $(g(q^i_j),s_j)\in \alpha$, and similarly for $\bar q^i_j$ and $\bar s_j$. Moreover, $R-R$ is invariant under $\Omega$ since $R$ is pp-definable from $\urgraph$, $\Omega$-orbits, and the restrictions of $\alpha$ to $\omega$-classes. Hence, we see that $\val$ witnesses that $(\tuple s,\bar{\tuple s},\tuple t)$ satisfies also the conjunct $(i)$ in the definition of $\phi$ which finishes the proof. 
\end{proof}

    In order to finish the proof of~\Cref{lemma:trickM2}, let $\phi$ be the formula provided by~\Cref{claim:OR}, let for every $i\in[n]$, $s_i\in S_i$ be arbitrary, let $O_f$ be the $2n-$orbit of $(s_1,\dots,s_n,f(s_1),\dots,f(s_n))$ under $\Omega$ , and let $O_S$ be the $n$-orbit of $(s_1,\dots,s_n)$. We claim that the relation $W(\tuple v, \tuple v')$ defined  by the  following pp-formula defines our desired $\OR(T,T)$-relation:
    \begin{equation*}
        \begin{aligned}
            \exists x_1,\dots,x_n,y_1,\dots,y_n,z_1,\dots,z_n\\
            O_f(x_1,\dots,x_n,z_1,\dots,z_n)\wedge O_S(y_1,\dots,y_n)\wedge\\
            \phi(x_1,\dots,x_n,y_1,\dots,y_n,v_1,\dots,v_k)
            \wedge \\\phi(y_1,\dots,y_n,z_1,\dots,z_n,v'_1,\dots,v'_k)
        \end{aligned}
    \end{equation*}

    Let us first prove that $T\times \urdomain^k, \urdomain^k\times T\subseteq W$. Indeed, to see that $T\times \urdomain^k\subseteq W$, take $(\tuple v,\tuple v')\in T\times \urdomain^k$ arbitrary. As $\tuple v\in T$, by the third item of~\Cref{claim:OR} there exist $g \in \Omega$, $\tuple r \in g(S_1) \times \dots \times g(S_n), \bar{\tuple r} \in gf(S_1) \times \dots \times gf(S_n)$ such that $\phi(\tuple r, \bar{\tuple r}, \tuple v)$ holds true. Evaluating $x_i\mapsto g(r_i)$, $y_i,z_i\mapsto gf(r_i)$, and applying item (1) from~\Cref{claim:OR}, we get that $(\tuple v,\tuple v')\in W$.  
    The proof that $A^k\times T\subseteq W$ is symmetric.

    On the other hand, if $(\tuple v,\tuple v')\in W$, take an evaluation $\val$ witnessing this fact. It follows that all of the tuples $(\val(x_1),\dots,\val(x_n))$, $(\val(y_1),\dots,\val(y_n))$, and $(\val(z_1),\dots,\val(z_n))$ lie in $O_S$, whence there exist $g_1,g_2,g_3\in\Omega$ such that $(\val(x_1),\dots,\val(x_n))\in (g_1(S))^n$, $(\val(y_1),\dots,\val(y_n))\in (g_2(S))^n$, and $(\val(z_1),\dots,\val(z_n))\in (g_3(S))^n$. Moreover, since $(\val(x_1),\dots,\val(x_n),\val(y_1),\dots,\val(y_n))$ lies in $O_f$, and $f(S)\neq S$, it follows that either $g_1(S)\neq g_2(S)$, or $g_2(S)\neq g_3(S)$. Applying item (2) from~\Cref{claim:OR}, we get that in the former case, $\tuple v\in T$, and in the latter one, $\tuple v'\in T$. 
\end{proof}

\subsubsection{From central to OR (proof of \texorpdfstring{\Cref{prop:trickT2}}{Proposition~\ref{prop:trickT2}})}\label{sect:proofoftrickT2}

In the case of \Cref{itm:marcinsmagiccentralcase} of \Cref{prop:Marcinsmagic}, we proceed by constructing a relation $\OR(D_L, D_R)$ for some unary, $\omega$-stable relations $D_L$ and $D_R$.

 \trickT*

\begin{proof}

    Let $C$ denote the $\alpha$-centre of $R$. Note that by $\Omega$-invariance, $C$ is $\omega$-stable. Since $\urgraph$ is weakly connected, there exist $\omega$-classes $O_{\oin}, O_{\out}$ such that $O_{\out} \cap C = \emptyset$, $O_{\oin} \subseteq C$, and $O_{\oin} \edgeo O_{\out}$ or $O_{\out} \edgeo O_{\oin}$. 

    \begin{claim}\label{claim:centralmagic} Then there exist
    $\omega$-classes $O'_{\oin}\subseteq C, O'_{\out}\subseteq \urdomain\setminus C $, a realisable $\Omega$-orbit-labelled path $\pi$ from $O_{\oin} $ to $O'_{\out}$, and a realisable relabelling $\pi'$ of $\pi$ from $O_{\out} $ to $O'_{\oin}$ such that $(\pi, \pi')$ are properly separated. 
\end{claim}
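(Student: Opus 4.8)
The plan is to mimic, at the level of $\omega$-classes, the proper-separation machinery already developed in Lemma~\ref{lemma:labellings}, but now steering the walk so that it starts at the boundary pair $(O_{\oin},O_{\out})$ of the $\alpha$-centre $C$ and again \emph{ends} at such a boundary pair. Concretely, write $\fingraph := \urgraph/\omega$ (a finite smooth digraph, with no loops, of algebraic length~$1$, since $\urgraph$ has all these properties and $\urgraph/\Omega$ has no loop). The hypothesis gives two $\edge$-adjacent $\omega$-classes with exactly one of them in $C$; after possibly reversing the edge, assume $O_{\oin}\edgeo O_{\out}$ with $O_{\oin}\subseteq C$ and $O_{\out}\cap C=\emptyset$. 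First I would invoke Lemma~\ref{lemma:labellings} applied to the trivial (length-zero) realisable path at $O_{\oin}$ — or more robustly, observe that Claim~\ref{cyclecombinatorics} inside its proof is exactly the tool we need: it produces, for any two $\edge$-adjacent pairs $U\edgeo U'$, $V\edgeo V'$ with $U,V$ linked by an edge in either direction, a properly separated pair $(\mu,\mu')$ with $\mu$ from $U$ to $V$ and $\mu'$ from $U'$ to $V'$.

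The key step is to choose the target pair $(O'_{\oin},O'_{\out})$ well. Since $C$ is $\Omega$-invariant and $\omega$-stable but $R$ is central modulo~$\alpha$ with centre exactly~$C$, we know $C\neq\emptyset$ and $C\neq G$; hence $\fingraph$ contains at least one $\omega$-class inside $C$ and at least one outside, and by weak connectedness of $\fingraph$ there is an edge crossing the boundary of $C$. We take $(O'_{\oin},O'_{\out})$ to be the \emph{same} boundary edge $O_{\oin}\edgeo O_{\out}$ (so $O'_{\oin}:=O_{\oin}$, $O'_{\out}:=O_{\out}$); then we must produce a realisable $\Omega$-orbit-labelled path $\pi$ from $O_{\oin}$ to $O_{\out}$ with a properly separated relabelling $\pi'$ from $O_{\out}$ to $O_{\oin}$. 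Now apply Claim~\ref{cyclecombinatorics} with $U:=O_{\oin}$, $U':=O_{\out}$, $V:=O_{\oin}$, $V':=O_{\out}$, using that $U\edgeo V$ fails but $U\ledgeo V$ may also fail — here is the subtlety: we need $U$ and $V$ adjacent in one of the two directions, but $U=V=O_{\oin}$ and there is no loop. So instead I would pick a genuinely distinct boundary edge, or break the symmetry: choose $O'_{\out}$ to be some $\omega$-class outside $C$ adjacent to a class $O'_{\oin}\subseteq C$ with the crossing edge oriented $O'_{\oin}\edgeo O'_{\out}$ as well (such an edge exists by weak connectedness; if every crossing edge leaves $C$ we are in exactly the situation of the hypothesis, and if not we reverse roles). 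Then with $U:=O_{\oin}$, $U':=O_{\out}$, $V:=O'_{\oin}$, $V':=O'_{\out}$ and noting $U,V\subseteq C$ are both in~$C$, weak connectedness of $\fingraph|_C$ (or of $\fingraph$, routing through $C$) supplies an $\edge$-walk between $U$ and $V$, which via the reduction in the proof of Claim~\ref{cyclecombinatorics} reduces to the adjacent case; this yields the properly separated pair $(\pi,\pi')$ with the stated endpoints.

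The main obstacle I anticipate is precisely this endpoint bookkeeping: Claim~\ref{cyclecombinatorics} is stated for pairs that are \emph{already adjacent}, so to chain it along an arbitrary connecting walk inside $C$ one must iterate it edge by edge (exactly as Lemma~\ref{lemma:labellings} is derived from its internal claim by induction on the length of the path), keeping proper separation stable under concatenation — which is fine because proper separation of $(\mu_1,\mu_1')$ and of $(\mu_2,\mu_2')$ gives proper separation of the concatenation whenever the gluing orbits match, and the ``$\searrow,\nwarrow$ forbidden'' condition of Definition~\ref{defi:properSeparation} is checked segment by segment. A secondary point to be careful about is that all intermediate $\omega$-classes we route through must stay \emph{realisable} as an $\Omega$-orbit-labelled path, which is automatic since every walk in $\fingraph=\urgraph/\omega$ lifts to a realisable $\Omega$-orbit-labelled path (as already noted at the start of the proof of Claim~\ref{cyclecombinatorics}). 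Finally, one records that the resulting $\pi$ does go from $O_{\oin}$ (inside $C$) to $O'_{\out}$ (outside $C$) and $\pi'$ from $O_{\out}$ (outside $C$) to $O'_{\oin}$ (inside $C$), matching the statement.
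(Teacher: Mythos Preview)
There is a genuine gap: your endpoint bookkeeping does not match the claim, and the mismatch is not a typo but a structural obstruction to the approach you outline.

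You set up Claim~\ref{cyclecombinatorics} with $U=O_{\oin}$, $U'=O_{\out}$, $V=O'_{\oin}$, $V'=O'_{\out}$, which produces $\mu$ from $O_{\oin}$ to $O'_{\oin}$ and $\mu'$ from $O_{\out}$ to $O'_{\out}$; yet in your last sentence you assert $\pi$ ends at $O'_{\out}$ and $\pi'$ at $O'_{\oin}$. These are not the same paths. More importantly, the swap cannot be obtained this way: Claim~\ref{cyclecombinatorics} (and its iteration along a walk, exactly as in Lemma~\ref{lemma:labellings}) maintains the orientation $O_j\edgeo P_j$ between top and bottom rows at every step. Since you start with $O_{\oin}\edgeo O_{\out}$ (top in $C$, bottom outside), the final pair must satisfy $V\edgeo V'$ with $V$ the top endpoint. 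To match the claim you would need $V=O'_{\out}$ and $V'=O'_{\oin}$, i.e.\ a boundary edge $O'_{\out}\edgeo O'_{\oin}$ going \emph{into} $C$. You never establish that such an edge exists, and indeed under your assumption $O_{\oin}\edgeo O_{\out}$ there is no reason one should: it is easy to build finite smooth loopless quotients where every boundary edge leaves $C$.

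The paper's proof does not invoke Claim~\ref{cyclecombinatorics} at all. It assumes $O_{\out}\edgeo O_{\oin}$ and distinguishes two cases. If some $\edgeo$-forward walk from $O_{\oin}$ inside $C$ eventually exits $C$ at some $O_n$, then the pair is given by the simple ``shift by one'' construction $\pi=(O_{\oin},O_1,\ldots,O_n)$, $\pi'=(O_{\out},O_{\oin},O_1,\ldots,O_{n-1})$; proper separation holds because the forbidden $\searrow$ edge at every position would be a loop. Otherwise there is an $\edgeo$-cycle entirely within $C$ reachable from $O_{\oin}$, and a maximal-segment argument (in the spirit of, but separate from, Claim~\ref{cyclecombinatorics}) produces the pair, walking out to the cycle, around a maximal segment, and back. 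The crucial feature in both cases is that the two rows are the \emph{same} walk offset by one step, which is precisely what forces the in/out swap at the endpoints; your black-box use of Claim~\ref{cyclecombinatorics} loses this.
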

\begin{proof}
    We assume that $O_{\out} \edgeo O_{\oin}$ as the other case works symmetrically. 
    We consider two cases. If there is an $\edgeo$-walk of the form $O_{\oin}\edgeo O_1 \edgeo \dots \edgeo O_{n-1} \edgeo O_n $ such  that for all $ 1 \leq i \leq n-1$, $O_i$ is contained in $C$, but $O_n \cap C = \emptyset$, then we set $O'_{\oin}:=O_{n-1} $ and $O'_{\out}:= O_n$, and the orbits-labelled paths $\pi, \pi'$ given by \begin{equation*}
        \begin{pmatrix}
            O_{\oin} \\ O_{\out}
        \end{pmatrix} \edgeo \begin{pmatrix}
            O_1 \\ O_{\oin}
        \end{pmatrix} \edgeo \begin{pmatrix}
            O_2 \\ O_1
        \end{pmatrix} \edgeo \dots \edgeo \begin{pmatrix}
            O_n \\ O_{n-1} 
        \end{pmatrix} = \begin{pmatrix}
            O'_{\out} \\ O'_{\oin}
        \end{pmatrix}
    \end{equation*} satisfy the required properties. In the second case, if there does not exist such a walk, then the idea of the proof is the same as in the proof of \Cref{lemma:labellings}. Since no walk as above exists,
    there must be an a $\edgeo$-cycle within $C$. Take the first such cycle $P_0 \edgeo P_1 \edgeo \dots \edgeo P_{k}=P_0$ with $P_i \subseteq C$ for all $i$ that is $\edgeo$-reachable from $O_{\oin}$, i.e. 
    \begin{equation*}
        O_{\out} \edgeo O_{\oin} \edgeo O_1 \edgeo \dots \edgeo O_{n} \edgeo P_0 \edgeo \dots \edgeo P_{k}=P_0,
    \end{equation*} 
    $O_i \subseteq C$ for all $i \geq 1$,  and $O_i$ is not contained in any $\edgeo$-cycle for all $i <n$. Consider the induced subgraph of $\urgraph/\omega$ on the vertices  $P_0, \dots, P_{k-1}$. As in the proof of \Cref{lemma:labellings}, a \textit{segment} of the cycle $P_0 \edgeo \dots \edgeo P_{k}=P_0$ is any edge of the form $P_{j} \edgeo P_{i}$. The \textit{length} of a segment $P_{j} \edgeo P_{i}$ is $i-j \ \textnormal{mod}\ k$. In particular, every edge $P_{j} \edgeo P_{j+1}$ is a segment of length $1$ (where all indices $i$ of orbits labelled with $P_i$ are from now on considered mod $k$). Take a segment $P_{j} \edgeo P_{i}$ of maximal length, and consider the $\Omega$-orbit-labelled paths $\pi, \pi'$ given by
    \begin{equation*}
        \begin{pmatrix}
            O_{\oin} \\ O_{\out}
        \end{pmatrix} \edgeo  \begin{pmatrix}
            O_1 \\ O_{\oin}
        \end{pmatrix} \edgeo \dots \edgeo  \begin{pmatrix}
            P_{i+1} \\ P_{i}
        \end{pmatrix} \ledgeo \begin{pmatrix}
            P_{i}  \\ P_{j}
        \end{pmatrix} 
    \end{equation*}

    Note that $(\pi, \pi')$ is indeed properly separated, as $P_{j} \edgeo P_{i+1} $ would give either a segment of greater length, or a loop in $\urgraph /\omega$. With the same reasoning, we can append to $\pi$ and $\pi'$, respectively, the $\Omega$-orbit-labelled paths
    \begin{equation*}
    \begin{pmatrix}
            P_{i}  \\ P_{j}
        \end{pmatrix} \ledgeo \begin{pmatrix}
            P_{i-1}  \\ P_{j-1}
        \end{pmatrix} \ledgeo \dots \ledgeo \begin{pmatrix}
            P_{0}  \\ P_{j-i}
        \end{pmatrix} \ledgeo \begin{pmatrix}
            O_{n}  \\ P_{j-i-1}
        \end{pmatrix}.
    \end{equation*} 
 Finally, we append
\begin{equation*}
    \begin{pmatrix}
            O_{n}  \\ P_{j-i-1}
        \end{pmatrix} \ledgeo \begin{pmatrix}
            O_{n-1}  \\ P_{j-i-2}
        \end{pmatrix} \ledgeo \dots \ledgeo \begin{pmatrix}
            O_{\out}  \\ P_{j-i-(n-1)}
        \end{pmatrix}.
    \end{equation*}
Indeed, $(\pi, \pi')$ remains properly separated as $O_i$ is not contained in any $\edgeo$-cycle  for all $i <n$. We set $O'_{\oin}:= P_{j-i-(n-1)} \subseteq C $, and  $O'_{\out}:= O_1 \subseteq \urdomain\setminus C $. Now, $(\pi, \pi')$ has the required properties.
\end{proof}

Let $\gamma_{\pi \merge \pi'} \subseteq (O_{\out} \cup O_{\oin}) \times (O'_{\oin} \cup O'_{\out})$ denote the  relation induced by $\pi \merge \pi'$ on $(O_{\out} \cup O_{\oin}) \times (O'_{\oin} \cup O'_{\out})$. To construct a relation $\OR(D_L, D_R)$ for $\alpha$-stable sets $D_L, D_R \subseteq \urdomain$, we symmetrically built an $\alpha$-stable left-hand-side $S_L\subseteq \urdomain \times (O_{\out} \cup O_{\oin})$ and an $\alpha$-stable right-hand-side $S_R\subseteq (O_{\out} \cup O_{\oin}) \times \urdomain$ by letting  
 \begin{align*}
        S_L:= R^{-1} + \alpha|_{O_{\out} \cup O_{\oin}} \\
        S_R:  \alpha|_{O_{\out} \cup O_{\oin}} + \gamma_{\pi \merge \pi'} + R.
     \end{align*}
  
Observe that both $S_L$ and $S_R $  are indeed $\alpha$-stable, and  satisfy \begin{gather}   
 \begin{aligned}\label{eq:S_LR}
    \urdomain \times O_{\oin} \subseteq S_L \\
    O_{\out} \times \urdomain \subseteq S_R.    
\end{aligned} \end{gather}  
Observe that now  $(A,B)\in S_L$ -- where $S_L$ is considered as a relation on $\urdomain/\alpha$ -- holds true if and only if $A \times B \subseteq S_L$. The same is valid for $S_R$ by $\alpha$-stability.

Let $D_L \subseteq \urdomain$ be the union of all $\omega$-classes $O$ such that $S_L|_{O\times O_{\out}}=O\times O_{\out}$. Similarly, $D_R \subseteq \urdomain$ is the union of all $\omega$-classes $P$ such that $S_R|_{O_{\oin} \times P}=O_{\oin} \times P$. By definition, $D_L$ and $D_R$ are indeed $\alpha$-stable. Note that by $\alpha$-stability of $S_L$ and~\Cref{aitm:neighbours} in~\Cref{lem:alphasymmetry}, $A \subseteq D_L$ if and only if for all $B \subseteq O_{\out} $ it holds that $(A,B) \in S_L$. Analogously, $B \subseteq D_R$ if and only if $(A,B) \in S_R$ for all $A \subseteq O_{\oin} $. In particular, \begin{gather}
    \begin{aligned}\label{eq:centresofSL}
    D_L \times O_{\out} \subseteq S_L \\ 
O_{\oin} \times D_R \subseteq S_R .
\end{aligned}\end{gather}  Clearly, by definition of $C$, we have $C \subseteq D_L, D_R$.  Furthermore, we claim that $D_L,D_R\neq \urdomain$ i.e. both $S_L$ and $S_R$ are proper. In fact, we can even show: 
\begin{claim}
For every $B \subseteq O_{\out}  $ there exists $A_B$ such that $(A_B,B) \notin S_L$. For every $A \subseteq O_{\oin} $ there exists $B_A \subseteq \urdomain$ with $(A, B_A) \notin S_R$.
\end{claim}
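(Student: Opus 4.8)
The plan is to analyse the two relations $S_L$ and $S_R$ at the level of $\alpha$-classes, reduce each membership question to one about $R$, and then invoke the maximality of the $\alpha$-centre $C$ of $R$ together with the facts $O_{\out}\cap C=\emptyset$ and $O'_{\out}\cap C=\emptyset$ (the latter from \Cref{claim:centralmagic}). Throughout I use that $S_L$ and $S_R$ are $\alpha$-stable, so that a pair of $\alpha$-classes is either fully inside or fully disjoint from the relation, and I freely identify $\alpha$-classes with their $\alpha$-blow-ups.

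For the statement about $S_L$ I would argue as follows. Since $S_L=R^{-1}+\alpha|_{O_{\out}\cup O_{\oin}}$ and the relation $\alpha|_{O_{\out}\cup O_{\oin}}$ acts as the identity on the $\alpha$-classes lying inside $O_{\out}\cup O_{\oin}$, for every $\alpha$-class $A$ and every $\alpha$-class $B\subseteq O_{\out}$ one has $(A,B)\in S_L$ if and only if $(B,A)\in R$. Now fix $B\subseteq O_{\out}$. As $O_{\out}\cap C=\emptyset$, the class $B$ does not lie in the $\alpha$-centre $C$, so by maximality of $C$ there is an $\alpha$-class $A_B$ with $(B,A_B)\notin R$, and hence $(A_B,B)\notin S_L$. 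This half is essentially bookkeeping.

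For the statement about $S_R$ the real work is to describe $\gamma_{\pi\merge\pi'}$ on pairs whose first coordinate lies in $O_{\oin}$. First I would show, by induction along the path (exactly the argument sketched for \Cref{l:alphaonpairs}), that any realisation of $\pi\merge\pi'$ — in $\urgraph$ or in $\urgraph/\alpha$ — that starts in $O_{\oin}$ is in fact a realisation of $\pi$: since $(\pi,\pi')$ is properly separated and the two first labels $O_{\oin}\subseteq C$ and $O_{\out}$ are distinct $\omega$-classes, one can never leave the $\pi$-track for the $\pi'$-track. Conversely every realisation of $\pi$ is a realisation of $\pi\merge\pi'$, so on $\alpha$-classes the relations $\gamma_{\pi\merge\pi'}$ and $\gamma_\pi$ agree on all such pairs. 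By \Cref{aitm:bijection} of \Cref{lem:alphaFinitises}, composing the edgewise bijections along $\pi$ shows that $\gamma_\pi$ induces a bijection from $O_{\oin}/\alpha$ onto $O'_{\out}/\alpha$; let $A''\subseteq O'_{\out}$ be the image of the fixed class $A\subseteq O_{\oin}$. Unwinding $S_R=\alpha|_{O_{\out}\cup O_{\oin}}+\gamma_{\pi\merge\pi'}+R$ at the level of $\alpha$-classes then gives: for every $\alpha$-class $B$, $(A,B)\in S_R$ iff $(A'',B)\in R$. Since $O'_{\out}\cap C=\emptyset$, the class $A''$ is not in $C$, so maximality of $C$ produces $B_A$ with $(A'',B_A)\notin R$, whence $(A,B_A)\notin S_R$.

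I expect the proper-separation induction — showing that $\gamma_{\pi\merge\pi'}$ restricted to starting classes in $O_{\oin}$ collapses to $\gamma_\pi$, so that the unique $A''$ given by \Cref{aitm:bijection} controls everything — to be the only delicate point; all remaining ingredients ($\alpha$-stability of $S_L$ and $S_R$, the edgewise bijection property, proper separation of $(\pi,\pi')$, and the locations $O_{\out},O'_{\out}\subseteq\urdomain\setminus C$, $O_{\oin},O'_{\oin}\subseteq C$) are already in place from \Cref{claim:centralmagic} and the surrounding discussion.
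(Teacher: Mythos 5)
Your proposal is correct and follows essentially the same approach as the paper: for $S_L$ you reduce membership to a question about $R$ and invoke $O_{\out}\cap C=\emptyset$, and for $S_R$ you use proper separation of $(\pi,\pi')$ together with the edgewise bijection of \Cref{lem:alphaFinitises} to land in a single $\alpha$-class $A''\subseteq O'_{\out}$ and then invoke $O'_{\out}\cap C=\emptyset$. The only (cosmetic) difference is direction: the paper fixes $a'\in O'_{\out}$ first and argues that the correspondence $a'\mapsto a$ is injective on $\alpha$-classes (hence covers all of $O_{\oin}/\alpha$), whereas you fix $A\subseteq O_{\oin}$ and push forward to $A''$ — same ingredients, reorganized.
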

\begin{proof}
    As $O_{\out} \cap C = \emptyset$, for every $b \in O_{\out}$ there exists  $a_b \in \urdomain$ such that $(b, a_b) \notin (R/\alpha)^{\alpha} $. By $\alpha$-stability of $S_L$, this gives us  $(A_B,B) \notin S_L$, where $A_B$ and $B$ denote the $\alpha$-classes of $a_b$ and $b$, respectively.  
    
    To prove the similar statement for $S_R$, recall that $O'_{\out} \cap C = \emptyset$. Thus,  for every $a' \in O'_{\out} $ we can take $b_{a'} \in \urdomain$ such that $(a',b_{a'}) \notin (R/\alpha)^{\alpha} $. For every $a' \in O'_{\out}$,  choose $a \in O_{\oin} $ with $\gamma_{\pi \merge \pi'}(a, a')$. Observe that as $\alpha$ finitises $(\urgraph, \Omega)$ by 
    assumption, if $(a'_1, a'_2) \notin \alpha$, then also $(a_1, a_2) \notin \alpha$. Thus, by picking $a \in O_{\oin} $ for every $a' \in O'_{\out} $ as above, we eventually pick a representative of every $\alpha$-class in  $O_{\oin}$. Fix elements $a, a', b_{a'}$, accordingly.  Since $(\pi, \pi')$ is properly separated, we have $a + \gamma_{\pi \merge \pi'} \subseteq O'_{\out}$. In particular,  again by ~\Cref{lem:alphaFinitises}, we get $b_{a'} \notin a + \gamma_{\pi \merge \pi'} + (R/\alpha)^{\alpha}$. By $\alpha$-stability of $S_R$, henceforth $(A, B_{A'}) \notin S_R$, where $A$ and $B_{A'}$ denote the $\alpha$-classes of $a$ and $b_{a'}$, respectively.
\end{proof}

For $(A_B,B) \notin S_L$ as above, we now want to thin out $S_L$ to a relation $S_L^B$ with the property that the only right-$S_L^B$-neighbours of elements from $A$ are all elements of $O_{\oin}$. Similarly, we proceed for $S_R$ and tuples $(A, B_A) \notin S_R$. 

\begin{claim}\label{cl:S_L^B}
    For every $\alpha$-class $B\subseteq O_{\out}$, $S_L$,
    $\Omega$-orbits, restrictions of $\alpha$ to $\omega$-classes, and $B$ pp-define a relation $S_L^{B}$ with $D_L\times B, \ \urdomain\times O_{\oin}\subseteq S_L^{B}$, and such that for every $A\notin B-S_L$, it holds that $A+S_L^{B}=O_{\oin}$.  
    
    For every $\alpha$-class $A\subseteq O_{\oin}$, $S_R$,
    $\Omega$-orbits, restrictions of $\alpha$ to $\omega$-classes, and $A$ pp-define a relation $S_R^{A}$ with $A \times D_R,\ O_{\out} \times \urdomain\subseteq S_R^{A}$, and such that for every $B\notin A+ S_R$, it holds that $B-S_R^{A}=O_{\out}$. 
\end{claim}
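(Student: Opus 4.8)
We prove the statement for $S_L^B$; the relation $S_R^A$ is obtained by the dual argument, with $S_L,O_{\out},O_{\oin},D_L$ replaced by $S_R,O_{\oin},O_{\out},D_R$, and using in addition that, since $(\pi,\pi')$ is properly separated, any realisation of $\pi\merge\pi'$ starting in $O_{\oin}$ remains a realisation of $\pi$, so that $a+\gamma_{\pi\merge\pi'}\subseteq O'_{\out}$ for every $a\in O_{\oin}$ while $O'_{\out}\cap C=\emptyset$; this is what forces $S_R^A$ to be proper on the $O_{\out}$-side, dually to the argument below. Throughout, we may assume $O_{\out}\edgeo O_{\oin}$.

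First I would record the properties of $S_L$ that enter the argument. Since $S_L=R^{-1}+\alpha|_{O_{\out}\cup O_{\oin}}$ is pp-definable from the $\Omega$-invariant relations $R$ and $\alpha|_{O_{\out}\cup O_{\oin}}$, it is $\alpha$-stable by \Cref{aitm:walk} in \Cref{lem:alphasymmetry} (all relevant sets live in the single weakly connected component of $\urgraph/\alpha$ containing $O_{\out}\cup O_{\oin}$). By \eqref{eq:S_LR} and \eqref{eq:centresofSL} we have $\urdomain\times O_{\oin}\subseteq S_L$ and $D_L\times O_{\out}\subseteq S_L$; and, by $\alpha$-stability, for $\alpha$-classes $A\subseteq\urdomain$ and $B'\subseteq O_{\out}$ the conditions $(A,B')\in S_L$, $A\times B'\subseteq S_L$ and $B'\subseteq A+S_L$ are equivalent. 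Finally, $D_L=\{x\in\urdomain:O_{\out}\subseteq x+S_L\}$ is pp-definable from $S_L$, the $\Omega$-orbits and the restrictions of $\alpha$ to $\omega$-classes: using the relation $\prinj{O_{\out}}$, which is non-empty and all of whose tuples are transversals of $O_{\out}/\alpha$, one has $O_{\out}\subseteq x+S_L$ if and only if there is $\bar z\in\prinj{O_{\out}}$ with $S_L(x,z_i)$ for each coordinate $z_i$ of $\bar z$.

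The plan is to pp-define, from $S_L$, the $\Omega$-orbits, the restrictions of $\alpha$ to $\omega$-classes and to the adjacent pair $O_{\out}\cup O_{\oin}$, and $B$, an $\alpha$-stable relation $S_L^B\subseteq\urdomain\times(O_{\out}\cup O_{\oin})$ such that, for every $\alpha$-class $A$,
\[
A+S_L^B=\begin{cases}O_{\out}\cup O_{\oin}, & \text{if }B\subseteq A+S_L,\\ O_{\oin}, & \text{otherwise.}\end{cases}
\]
Granting this, the three asserted properties are immediate: $\urdomain\times O_{\oin}\subseteq S_L^B$ holds by definition; if $A\subseteq D_L$ then $O_{\out}\subseteq A+S_L$, hence $B\subseteq A+S_L$ and so $B\subseteq A+S_L^B$, giving $D_L\times B\subseteq S_L^B$; and if $A\notin B-S_L$, that is $B\not\subseteq A+S_L$, then $A+S_L^B=O_{\oin}$ by construction.

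The crux — and the step I expect to be the main obstacle — is to realise this $S_L^B$ by a \emph{primitive positive} formula: the fibre over $x$ must toggle between $O_{\out}\cup O_{\oin}$ and $O_{\oin}$ exactly according to whether $B\subseteq x+S_L$, and this gating has to be achieved without any disjunction. The idea is to implement the gate by an existential witness: the conjunct $\exists b\,(B(b)\wedge S_L(x,b))$ is satisfiable precisely when $B\subseteq x+S_L$, so a copy of $O_{\out}$ attached on the right through a witness forced — via the parameter $B$ together with the relation $\prinj{O_{\out}}$ — to lie in $B$ contributes all of $O_{\out}$ exactly in those rows $A$ with $B\subseteq A+S_L$; the unconditional part $\urdomain\times O_{\oin}$ is routed through the same $S_L$-conjunct, which is always satisfied by a witness in $O_{\oin}$ since $\urdomain\times O_{\oin}\subseteq S_L$. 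One then verifies, using the containments above and $\alpha$-stability, that the relation so defined has exactly the displayed fibres; its $\alpha$-stability is read off from the construction via \Cref{aitm:walk,aitm:neighbours} in \Cref{lem:alphasymmetry}. With $S_L^B$ and, symmetrically, $S_R^A$ in hand, the proof of \Cref{prop:trickT2} is concluded by intersecting the relations $S_L^B+S_R$ over all $\alpha$-classes $B\subseteq O_{\out}$ and dually thinning on the right with the $S_R^A$, which leaves a relation whose fibre over $x$ is $\urdomain$ for $x\in D_L$ and $D_R$ otherwise, i.e.\ $\OR(D_L,D_R)$.
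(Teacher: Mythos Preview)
Your proposal has a genuine gap at the central step: you never actually produce a pp-formula for $S_L^B$, and the sketch you give cannot be completed as written. Your target relation has fibres equal to $O_{\out}\cup O_{\oin}$ or $O_{\oin}$ according to whether $x\in B-S_L$; but this is precisely the restriction of $\OR(B-S_L,\,O_{\oin})$ to $G\times(O_{\out}\cup O_{\oin})$, i.e.\ already an $\OR$-type relation. Your ``gating'' idea does not escape this: the conjunct $\exists b\,(B(b)\wedge S_L(x,b))$ either holds or it doesn't, and routing the unconditional part $G\times O_{\oin}$ through the same $S_L$-conjunct would require a witness $b$ ranging over $B\cup O_{\oin}$ together with a linking relation $\psi(b,y)$ satisfying $b+\psi=O_{\out}\cup O_{\oin}$ for $b\in B$ and $b+\psi\subseteq O_{\oin}$ otherwise. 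Both the set $B\cup O_{\oin}$ and the relation $\psi$ are themselves disjunctions ($\psi$ is literally $\OR(B,O_{\oin})$), and nothing in the listed primitives ($S_L$, $\Omega$-orbits, $\alpha$ restricted to $\omega$-classes, the parameter $B$) yields them. The invocation of $\prinj{O_{\out}}$ does not help: it lets you quantify over a transversal of $O_{\out}/\alpha$, but it does not manufacture the required disjunctive link between $b$ and $y$.

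The paper's construction avoids this circularity by \emph{not} aiming for two-valued fibres. Instead it shrinks the right-hand side of $S_L$ to the unary set $Z_L^B:=\bigcap_{A'\in B-S_L}(A'+S_L)$, which is pp-definable from $S_L$, $B$, and $\prinj{B-S_L}$, and sets $S_L^B:=S_L|_{G\times Z_L^B}$. The inclusions $D_L\times B\subseteq S_L^B$ and $G\times O_{\oin}\subseteq S_L^B$ are then immediate from $B,O_{\oin}\subseteq Z_L^B$. The real content is showing $A+S_L^B=O_{\oin}$ for $A\notin B-S_L$, and this is a counting argument via item~\ref{aitm:neighbours} of \Cref{lem:alphasymmetry}: if some $B'\subseteq O_{\out}$ lay in $(A+S_L)\cap Z_L^B$, then every $\alpha$-class $A_j$ in the $\omega$-class of $A$ with $A_j\in B-S_L$ would also have $B'$ as an $S_L$-neighbour; thus $A,A_1,\dots,A_m$ would all lie in $(B'-S_L)\cap(\omega\text{-class of }A)$, one more than the $m$ classes in $(B-S_L)\cap(\omega\text{-class of }A)$, contradicting the bijection. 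This combinatorial step is what your sketch is missing.
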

\begin{proof}
We will only prove the first statement of the claim. The proof of the second statement works dually and will be omitted. Fix an $\alpha$-class $B \subseteq O_{\out} $. Clearly, we have $D_L \subseteq B - S_L$. If for some (equivalently: for all) $A \notin B -  S_L$ there exists $C \in O_{\out} $ with $(A,C) \in S_L$, then  $D_L \neq B-S_L$.  Indeed, in this case we apply~\Cref{aitm:neighbours} from~\Cref{lem:alphasymmetry} twice to get $A'$ with $(A, A') \in \omega$ such that $(A', B) \in S_L$, and $D \in O_{\out}$ with $(A', D) \notin S_L$.  Hence,  $A'\subseteq (B-S_L) \setminus D_L$. Recall that $\urdomain\times O_{\oin}\subseteq S_L$ by~\Cref{eq:S_LR}. Hence, if $B-S_L=D_L$, then $S_L^B:=S_L$ already satisfies the required properties. 

Suppose that $D_L \neq B-S_L$, and let \begin{equation*}
    Z_L^B:=\bigcap_{A \in B-S_L}A+S_L.
\end{equation*} Note that $Z_L^B$ is pp-definable from $S_L$, $B$, and the projection of $\inj$ to $B-S_L$. Clearly, $B\cup O_{\oin} \subseteq Z_L^B $. Furthermore, since $D_L \neq B-S_L$, we have $Z_L^B \neq O_{\oin} \cup O_{\out}$. Indeed, if $A$ is such that $(A,B) \in S_L$, and $(A,C) \notin S_L$ for some $C \in O_{\out} $, then $C \cap Z_L^B = \emptyset$. Define \begin{equation*}
    S_L^B:=S_L|_{\urdomain \times Z_L^B},
\end{equation*}  and let us verify that $S_L^B$ satisfies the required properties. Firstly, $D_L \times B \subseteq S_L^B$ holds true since $B \subseteq Z_L^B$, and~\Cref{eq:centresofSL}.
Secondly, $\urdomain\times O_{\oin}\subseteq S_L^{B}$ by~\Cref{eq:S_LR} and $O_{\oin} \subseteq Z_L^B$. Lastly, take $A\notin B- S_L$. We need to show that $A+S_L^{B}=O_{\oin}$. Assume that $A_1, \dots, A_m$ is an enumeration of all $\alpha$-classes from the $\omega$-class of $A$ contained in $B-S_L$. Suppose towards a contradiction that there exists an $\alpha$-class $B' \subseteq O_{\out}  $ with $(A , B') \in  S_L^{B}.$ Since $B' \subseteq Z_L^B$, we also have $(A_{j}, B') \in S_L $ for every $j\in\{1,\dots,m\}$. As $A$ was chosen such that $(A , B) \notin B- S_L $, this contradicts~\Cref{aitm:neighbours} in~\Cref{lem:alphasymmetry}. Hence, $(A, C) \in S_L^B$ holds true if and only if $C \in O_{\oin}, $ which concludes the proof of the claim.
\end{proof}

For every pair $(B,A)$ of $\alpha$-classes with $B\subseteq O_{\out} $ and $ A \subseteq  O_{\oin}$, we now let \begin{equation*}
    S^{(B,A)}:=S^B_L+S^A_R.
\end{equation*}  Observe that since $D_L\times B\subseteq S_L^{B}$ and $O_{\out} \times \urdomain\subseteq S_R^{A}$, we have  $D_L\times \urdomain \subseteq S^{(A,B)}$. Similarly, because of  $\urdomain\times O_{\oin}\subseteq S_L^{B}$ and $A \times D_R\subseteq S_R^{A}$ we get $ \urdomain\times D_R\subseteq S^{(A,B)}$. In other words, for every $B\subseteq O_{\out} $ and  $ A \subseteq  O_{\oin}$, \begin{equation*}
    \OR(D_L, D_R) \subseteq S^{(B,A)}.
\end{equation*}  Furthermore, $S^{(B,A)} \neq \urdomain^2$, as for every $A'\notin B- S_L $ and for every $ B'\notin A+ S_R$ we have $(A'+S_L^B) \cap (B'-S_R^A) = \emptyset $, i.e. $(A',B')\notin S^B_L+S^A_R = S^{(A,B)}$. Finally, we let \begin{equation}\label{eq:S}
    S:= \bigcap_{B \times A \subseteq O_{\out} \times O_{\oin}} S^{(B,A)},
\end{equation} and claim that $S = \OR(D_L, D_R)$. It is only left to show that $S \subseteq \OR(D_L, D_R)$. Thus, take $(x,y) \in S$, and suppose first that $x \notin D_L$, i.e.  there exists an $\alpha$-class $B \subseteq O_{\out} $ such that $x \notin B -  S_L$. Thus, by~\Cref{cl:S_L^B}, $x+ S_L^B = O_{\oin}$. Since in particular, $(x,y) \in \bigcap_{A \subseteq O_{\oin}} S_L^B + S_R^A$, for every  $\alpha$-class $A \subseteq O_{\oin} $ there exists $z_A$ with $(x, z_A) \in S_L^B$, and $(z_A, y) \in S_R^A$. Since $x+ S_L^B = O_{\oin}$, we must have $z_A \in O_{\oin} $ for every $A$, giving us $y - S_R^A \neq O_{\out}$. Again by~\Cref{cl:S_L^B}, we obtain $y \in A + S_R$ for every $A\subseteq O_{\oin}$, i.e. $y \in D_R$.  Similarly, for any $y\notin D_R$, we get $y-S=D_L$, and it follows that $S=\OR(D_L,D_R)$ as desired. 

 We remark that $S$ as defined in~\Cref{eq:S} is in fact pp-definable from $R,  \inj$, and $\alpha|_{O_{\out} \cup O_{\oin}}$. Indeed, let $(t_1,\dots,t_{2n})\in \prinj{O_{\out} \cup O_{\oin}}$ be arbitrary. Without loss of generality, we assume that $t_1, \dots, t_n \in O_{\out} $, and $t_{n+1}, \dots, t_{2n} \in O_{\oin}$.  
 Then $S$ is defined  by the formula \begin{multline*}
     \exists x_1, \dots, x_n, y_1, \dots, y_n \\  \prinj{O_{\out} \cup O_{\oin}}(x_1, \dots, x_n, y_1, \dots, y_n) \wedge 
     \bigwedge_{(i,j)\in \br{n} \times \br{n} } S_L^{x_i} + S_R^{y_j},
 \end{multline*}
where $S_L^{x_i}$ is obtained by replacing every occurrence of $A$ in the pp-definition of $S_L^A$ by the $\alpha$-class of $x_i$ -- more formally, we replace every occurrence of $A$ by a fresh existentially quantified variable $x'$ and for every variable introduced in this way, we add the conjunct $\alpha|_{O_{\oin} \cup O_{\out}} (x_i,x')$, and $ S_R^{y_j}$ by replacing every occurrence of $B$ in the definition of $S_R^B$ by the $\alpha$-class of $y_j$.
\end{proof}

\subsubsection{Improving OR}\label{sect:symmetries} 

In this section, we will prove the remaining lemmata needed for the proof of \Cref{thm:7}. The following statement allows us to replicate the corresponding proofs from~\cite{symmetries}.

\begin{lemma}\label{l:ppdefalphastable} Let $\findomain:= \urdomain /\alpha$, and let $R_1, \dots, R_m$ be relations on $\findomain$. If $S \subseteq \findomain^n$ is pp-definable from $R_1, \dots R_m$, then its $\alpha$-blow-up $S^{\alpha}\subseteq \urdomain^n $ is pp-definable from $R_1^{\alpha} , \dots ,R_m^{\alpha}.$ 
\end{lemma}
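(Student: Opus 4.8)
The plan is to argue by induction on the structure of the pp-formula $\phi(\tuple x_1,\dots,\tuple x_n)$ that defines $S$ from $R_1,\dots,R_m$, producing at each stage a pp-formula over $R_1^\alpha,\dots,R_m^\alpha$ defining the $\alpha$-blow-up of the relation defined so far. The key observation driving the induction is a compatibility between the blow-up operation and the two formula-building operations (conjunction and existential quantification), which I spell out below. Throughout, recall that for a relation $Q$ on $\findomain = \urdomain/\alpha$, the $\alpha$-blow-up $Q^\alpha$ is the $\alpha$-stable relation on $\urdomain$ whose $\alpha$-factor is $Q$; equivalently, $(a_1,\dots,a_k)\in Q^\alpha$ iff $(a_1/\alpha,\dots,a_k/\alpha)\in Q$.

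First I would record the base case: an atomic formula $R_i(\tuple x)$ defines $R_i$ on $\findomain$, and its $\alpha$-blow-up is precisely $R_i^\alpha$, which is atomic over the signature $R_1^\alpha,\dots,R_m^\alpha$; likewise $x=y$ blows up to the relation $\{(a,b) : \alpha(a,b)\}$, but since equality in $\findomain$ is the diagonal, its blow-up is $\alpha$ itself — however this case does not arise if we take $\phi$ to use only the $R_i$ and not equality, or we may simply disallow equality atoms by the standard convention; if equality atoms are allowed, one notes $\alpha$ is not in general available, so one should instead pass to the convention (harmless for pp-definability) that identified variables are literally the same variable. For the conjunction step, if $\phi = \psi_1 \wedge \psi_2$ and, by induction, $\psi_j$ blows up to a pp-formula $\psi_j^\alpha$ over the starred relations defining the blow-up of the relation $T_j$ defined by $\psi_j$, then one checks directly that $(T_1\cap T_2)^\alpha = T_1^\alpha \cap T_2^\alpha$ (both sides consist of those tuples whose componentwise $\alpha$-classes lie in $T_1\cap T_2$), so $\psi_1^\alpha \wedge \psi_2^\alpha$ works. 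The genuinely substantive step is existential quantification: if $\phi(\tuple x) = \exists y\, \psi(\tuple x, y)$ and $\psi$ blows up to $\psi^\alpha$ defining $(T_\psi)^\alpha$, I claim that $\exists y\, \psi^\alpha(\tuple x,y)$ defines $(\exists y\, T_\psi)^\alpha$, i.e.\ $\big(\{\tuple A : \exists B\ (\tuple A,B)\in T_\psi\}\big)^\alpha = \{\tuple a : \exists b\ (\tuple a, b)\in (T_\psi)^\alpha\}$. The inclusion $\supseteq$ is immediate by projecting $\alpha$-classes. For $\subseteq$: given $\tuple a$ with $(\tuple a/\alpha)$ in the projection of $T_\psi$, pick $B$ with $(\tuple a/\alpha, B)\in T_\psi$, then pick any $b\in B$; since $(T_\psi)^\alpha$ is $\alpha$-stable and $(\tuple a/\alpha, b/\alpha) = (\tuple a/\alpha, B)\in T_\psi$, we get $(\tuple a, b)\in (T_\psi)^\alpha$. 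So existential quantification also commutes with blow-up, and the induction closes.

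The proof is therefore essentially a bookkeeping argument, and I do not expect a serious obstacle; the one point requiring a little care is the treatment of equality/variable-identification atoms, and more precisely the fact that for the existential step one must already know $(T_\psi)^\alpha$ is $\alpha$-stable — which it is by the definition of blow-up, so there is no circularity. It is also worth noting explicitly that, unlike~\Cref{lem:liftingTreeDefs}, this lemma does not need $\phi$ to be a tree pp-formula and does not invoke finitisation at all: it is a purely formal fact about blow-ups along an arbitrary equivalence relation. I would phrase the write-up accordingly: state the two commutation identities $(T_1\cap T_2)^\alpha = T_1^\alpha\cap T_2^\alpha$ and $(\exists y\, T)^\alpha = \exists y\, (T^\alpha)$ as a short observation, then conclude by induction on the construction of the pp-formula defining $S$, replacing each atom $R_i$ by $R_i^\alpha$.
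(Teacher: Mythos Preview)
Your proposal is correct and takes essentially the same approach as the paper: both show that the blow-up operation commutes with conjunction and with existential quantification, and then conclude by structural induction on the pp-formula (the paper condenses this by putting the formula in prenex form, declaring the quantifier-free part trivial, and handling all existential quantifiers at once). Your caveat about equality atoms is appropriate but moot here, since the paper's definition of pp-formula uses only predicates from~$\mathcal R$, conjunction, and existential quantification.
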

\begin{proof}
Let us first observe that if $S$ has a quantifier-free pp-definition, then the statement is trivial. Assume now
that $S(\tuple x)$ is defined by the formula $\phi(\tuple x)\equiv\exists \tuple y T(\tuple x, \tuple y), $ where $T$ is quantifier-free pp-definable from $R_1^{\alpha} , \dots ,R_m^{\alpha}.$ Then $S^{\alpha}$ is defined by $\phi^{\alpha}(\tuple x) \equiv \exists \tuple y T^{\alpha}(\tuple x, \tuple y)$. Indeed, by definition of $T^{\alpha}$, a tuple $(B_1, \dots, B_k)$ witnesses $\phi(A_1, \dots, A_n)$ if and only if every tuple $(b_1, \dots, b_k) \in B_1 \times \dots \times B_k$ witnesses $\phi^{\alpha}(a_1, \dots, a_n) $ for every $(a_1, \dots, a_n) \in A_1 \times \dots \times A_n$. 
\end{proof}

\paragraph{Decreasing the arity of \texorpdfstring{$\OR(T,T)$}{OR(T,T)}}

In this section, we show that for an arbitrary $\alpha$-stable TSR-relation $T$, the relation $\OR(T,T)$ pp-defines either the relation $\OR(U,U)$ for some unary, $\alpha$-stable $U$, or the $\alpha$-blow-up of a proper equivalence on a finite set. As in \cite{symmetries}, the following definitions play a major role:

\begin{definition}
A subdirect relation $R\subseteq \urdomain^n$ for $n \geq 2$ is \emph{P-central} if its \emph{P-centre}, the set given by $$\left\{a \in \urdomain: (a, a_2, \dots, a_3) \in R \ \textnormal{for all} \ a_2, \dots, a_n\right\},$$ is non-empty and proper. \end{definition} Note that this definition coincides with the one of a central relation if $n=2$.

\begin{definition}
A proper relation $R\subseteq \urdomain^n$ for $n > 2$ is \emph{PQ-central} if its projection to any pair of coordinates is full, and the formula $$\phi(x,y)\equiv \forall z_3, \dots, z_n \ R(x,y,z_3, \dots, z_n)$$ defines an equivalence $\sigma$ on $\urdomain$. In this case, $\sigma$ is called the \emph{central equivalence} of $R$.\end{definition}

We will first show that an $\alpha$-stable TSR-relation pp-defines an $\alpha$-stable TSR-relation that is, additionally, P-central or PQ-central. The statement is proved in \cite[Lemma 14]{symmetries}, nevertheless,  we provide the proof for the convenience of the reader.

\begin{lemma}\label{l:lemma14}
Let $R \subseteq \urdomain^n$ be a proper $\alpha$-stable TSR-relation.
    \begin{enumerate}
        \item If $n\geq 3$, then $R$ equality-free pp-defines a proper $\alpha$-stable TSR-relation $T$ which is P-central or PQ-central.
        \item If $n=2$ and $R$ is additionally linked, then $R$ equality-free pp-defines a proper $\alpha$-stable TSR-relation $T$ which is P-central or PQ-central.
    \end{enumerate} 
\end{lemma}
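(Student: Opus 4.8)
The plan is to reduce \Cref{l:lemma14} to its finite counterpart, \cite[Lemma 14]{symmetries}, applied to the $\alpha$-factor. Since $R$ is $\alpha$-stable it is the $\alpha$-blow-up $\bar R^{\alpha}$ of a relation $\bar R$ on the finite set $\urdomain/\alpha$, and one checks immediately that $\bar R$ inherits from $R$ the properties of being proper and TSR, and, in the case $n=2$, of being linked (linkedness of $R$ factors to linkedness of $\bar R$). Applying \cite[Lemma 14]{symmetries} to $\bar R$ then yields a proper TSR relation $\bar T$ on $\urdomain/\alpha$ which is P-central or PQ-central, together with an \emph{equality-free} pp-definition of $\bar T$ from $\bar R$. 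We set $T:=\bar T^{\alpha}$. By \Cref{l:ppdefalphastable}, $T=\bar T^{\alpha}$ is pp-definable from $\bar R^{\alpha}=R$, and here it is essential that the defining formula carry no equality conjunct: the $\alpha$-blow-up of the equality relation on $\urdomain/\alpha$ is $\alpha$ itself, which in general is not pp-definable from $R$, whereas a formula built from $\bar R$ using only conjunction, existential quantification and variable identification blows up to a formula of the same shape over $R$. Thus $T$ is equality-free pp-defined from $R$, and it is $\alpha$-stable by construction.

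It remains to verify that $T$ is proper, TSR, and P-central or PQ-central; each of these is "blow-up invariant". Indeed $T\subsetneq\urdomain^{n}$ if and only if $\bar T\subsetneq (\urdomain/\alpha)^{n}$; total symmetry and total reflexivity pass coordinatewise from $\bar T$ to $T$ (and $\alpha$-stability of $T$ in fact gives the stronger form of reflexivity); the P-centre of $T$ is exactly the $\alpha$-blow-up of the P-centre of $\bar T$, hence nonempty and proper whenever the latter is; and for arity $>2$ the projection of $T$ to any pair of coordinates is the $\alpha$-blow-up of the corresponding projection of $\bar T$, while the binary relation $\forall z_{3}\ldots z_{n}\,T(x,y,z_{3},\ldots,z_{n})$ is precisely the $\alpha$-blow-up of $\forall z_{3}\ldots z_{n}\,\bar T(x,y,z_{3},\ldots,z_{n})$, so it is an equivalence as soon as that relation is. This yields all the claimed properties of $T$ and completes the reduction.

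For the convenience of the reader one would then reproduce the argument of \cite[Lemma 14]{symmetries} for $\bar R$: if $\bar R$ is already P-central there is nothing to prove, and otherwise its P-centre is empty (it cannot be full, as $\bar R$ is proper), so the desired relation must be manufactured. The argument runs by induction on the arity: for $n\ge 3$ one either extracts from $\bar R$, by a suitable equality-free pp-formula, a proper TSR relation of smaller arity (and recurses), or the failure of this yields directly a PQ-central relation as a conjunction of copies of $\bar R$ glued along shared coordinates; the base case $n=2$, where total reflexivity no longer suffices, is handled by a separate argument invoking the linkedness of $\bar R$. The substantive work here is purely about finite relations and is taken over verbatim. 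The two points genuinely specific to our setting---and the only ones requiring attention---are the bookkeeping already flagged above: confirming by inspection that \cite[Lemma 14]{symmetries} never introduces an equality conjunct, so that \Cref{l:ppdefalphastable} applies and $\alpha$-stability is preserved, and confirming that passing to $\alpha$-blow-ups preserves P- and PQ-centrality; both are routine, the first being exactly why the lemma is phrased with "equality-free pp-defines".
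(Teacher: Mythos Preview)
Your proposal is correct and takes essentially the same approach as the paper: both reduce to the finite quotient $\urdomain/\alpha$ via \Cref{l:ppdefalphastable}, apply (or reproduce) \cite[Lemma~14]{symmetries} there, and observe that properness, TSR, P-centrality and PQ-centrality all transfer under $\alpha$-blow-up, with the equality-free requirement being precisely what makes the lift go through. One minor inaccuracy worth flagging: in your closing sketch of the internals of \cite[Lemma~14]{symmetries} you describe an induction extracting a TSR relation of \emph{smaller} arity, but the actual construction iterates the operator $R^{[\ell]}$, which either strictly enlarges $R$ at the same arity or passes to a \emph{larger} arity $\ell>n$, with termination coming from finiteness of the ambient set; since you treat the cited lemma as a black box this does not affect the correctness of your reduction, but the sketch as written would mislead a reader trying to reconstruct the argument.
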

\begin{proof} Let $\fingraph:= \urgraph/\alpha$. By~\Cref{l:ppdefalphastable} and $\alpha$-stability of $R$, it suffices to pp-define from $R$ --  considered as a relation on $J$ -- a suitable TSR-relation on $\findomain$ as its $\alpha$-blow-up will then satisfy the required properties. By finiteness of $\fingraph:= \urgraph/\alpha$, we may follow exactly the proof of Lemma~14 of \cite{symmetries}. For convenience of the reader, we restate the whole proof.

We repeatedly use the subsequent construction on $J$: For $\ell \geq n-1$, we define  
\begin{equation}\label{eq:R^k} R^{[\ell]}(x_1, \dots, x_{\ell}) = \exists y \bigwedge _{1\leq i_1<\dots <i_{n-1}\leq \ell} R(y, x_{i_1},\dots,x_{i_{n-1}}).\end{equation}

Observe that $R^{[\ell]}$ is defined by an equality-free pp-formula. 
We proceed in a series of steps. In each step, if $R^{[n]}=R$, we stop. Otherwise, we consider two cases: If for all $\ell \geq n$, we have $R^{[\ell]}= \findomain^{\ell}$, we stop. In the remaining case, we take the smallest $\ell\geq n$ such that $R^{[\ell]}\neq \findomain^{\ell}$. Note that in this case, $R^{[\ell]}$ remains a proper TSR-relation. Indeed, $R^{[\ell]}$ is totally symmetric as $R$ is.  By choice of $\ell$ and total reflexivity of $R$, we have $R^{[\ell-1]} = J^{\ell-1}$. By total reflexivity and total symmetry of $R$, this implies that $R^{[\ell]}$ is totally reflexive too.   We substitute $R^{[\ell]}$ for $R$, thus   changing $n$ to $\ell$, and continue by repeating the process.    

In order to see that this process must stop eventually, let $N:=|\findomain|$. We claim that the arity of the relation we consider in the last step is at most $N$. First, note that by the pigeon-hole principle, any TSR-relation on $J$ of arity greater than $N$ is full. In particular, $n \leq N$. Assume that we have not stopped with $R$ of  arity $n$, i.e. $R^{[n]}\neq R $, and there exists a minimal $ n \leq \ell \leq N$ with $R^{[\ell]}\neq \findomain^{\ell}$. Let us first consider the case $\ell=n$. Since $R$ is totally reflexive, we have $R \subseteq R^{[n]}$ as witnessed for every tuple by its first entry, whence $R\subsetneq R^{[n]} \subsetneq \findomain^n$. By substituting $R^{[n]}$ for $R$, we have thus added at least one element to $R\subseteq \findomain^n$. If $\ell >n$, substituting $R^{[\ell]}$ for $R$ increases the arity of $R$. Both steps can only be done finitely many times maintaining $R\subsetneq \findomain^n$ and $n \leq \ell\leq N$. 

    Assume that the process stops with $R$ of arity $n \leq N$, i.e. either $R^{[n]}= R $, or $R^{[\ell]}= \findomain^{\ell}$ for all $\ell \geq n$. Let us first assume the latter. In particular, $R^{[N]}$ contains a tuple $(A_1, \dots, A_N) $ enumerating all elements of $\findomain$. Let  $A$ be the value of $y$ in~\Cref{eq:R^k} witnessing $(A_1, \dots, A_N) \in R^{[N]}$. \begin{claim}
        $R$ is P-central, and $A$ belongs to the P-centre of $R$.
    \end{claim} 
    \begin{proof}
        We need to show that $(A, B_1, \dots, B_{n-1}) \in R $ for all $(B_1, \dots, B_{n-1}) \in \findomain^{n-1}$. Indeed, if $(B_1, \dots, B_{n-1}) \in \findomain^{n-1} $ is a tuple such that $B_i=B_j $ for some $i \neq j$, then $(A, B_1, \dots, B_{n-1}) \in R $ by total reflexivity of $R$.  
    Otherwise, if $(B_1, \dots, B_{n-1}) \in \findomain^{n-1} $ is an injective tuple, then we can extend it to a tuple $(B_1, \dots, B_{n-1}, B_n, \dots, B_N )$  enumerating all elements of $\findomain$. Since $R$ is totally symmetric and $A$ witnesses $(A_1, \dots, A_N) \in R^{[N]}$, we must also have $(A, B_1, \dots, B_{n-1}) \in R$ by looking at the appropriate conjunct in~\Cref{eq:R^k}. 
    \end{proof} 

    Let us now assume $R^{[n]}= R $. We first show  that in this case, $n >2$. Indeed, if $n=2$, then by assumption, $R$ would be linked. Note that if $(A_1, A_2), (A_2, A_3) \in R$, then also $(A_1, A_3) \in R$ since $ R$ is symmetric and $A_2$ witnesses $(A_1, A_3) \in R^{[2]} =R$. In other words, $R$ is an equivalence on $\findomain$. By linkedness, this contradicts $R \neq \findomain^2$. Hence, we must have $n >2$. 
    \begin{claim}
        $R$ is PQ-central.
    \end{claim}
    \begin{proof}
      The projection to any two coordinates of $R$ is full as $R^{[n-1]}=\findomain^{n-1} $.  Let $\sigma$ be defined by
      \begin{equation*}
        \{ (A, A') \in \findomain^2 \mid \forall A_1, \dots, A_{n-2} (A, A', A_1, \dots, A_{n-2}) \in R\}.
    \end{equation*} We want to show that $\sigma$ is an equivalence. Indeed, $\sigma$ is reflexive and symmetric since $R$ is totally symmetric and totally reflexive. It remains to show that it is transitive. Take $(A_1, A_2), (A_2, A_3) \in \sigma.$ As before, $A_2$ witnesses that $(A_1, A_3, B_1, \dots, B_{n-2})\in R^{[n]}= R $ for all $(B_1, \dots, B_{n-2})$, and $\sigma$ is indeed an equivalence. By definition, $R$ is thus PQ-central.
\end{proof}  
Applying~\Cref{l:ppdefalphastable} to the relation obtained, we get a pp-definition of its $\alpha$-blow-up, which satisfies the desired properties. 
\end{proof}

The following auxiliary result is proved in the same way as Lemmata 15 and 16 in~\cite{symmetries}. 

\begin{lemma}\label{lemma:15+16}
    An $\alpha$-stable TSR P-central (PQ-central) relation $R$, $\Omega$-orbits, and the restrictions of $\alpha$ to $\omega$-classes equality-free pp-define the  centre (the  central equivalence) of $R$.
\end{lemma}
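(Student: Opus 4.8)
The plan is to follow \cite[Lemmata 15 and 16]{symmetries} almost verbatim; the one adjustment forced by the infinite setting is to carry out the finite conjunctions of \emph{loc.\ cit.}\ over $\urdomain/\alpha$ instead of $\urdomain$, and to produce an enumeration of the $\alpha$-classes inside a pp-formula by means of the relation $\inj$ introduced above, which is equality-free pp-definable from $\Omega$-orbits and the restrictions of $\alpha$ to $\omega$-classes. I would first record that $J:=\urdomain/\alpha$ is finite: indeed $\urgraph/\alpha$ is $k$-linked, hence weakly connected, hence equals the $\alpha$-quotient of its single weakly connected component, which is finite by \Cref{lem:alphaFinitises}. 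Write $m:=|J|$. Since $R$ is $\alpha$-stable, so is its centre (resp.\ its central equivalence), which therefore coincides with the $\alpha$-blow-up of the corresponding object of $R$ viewed as a relation on $J$; it thus suffices to pp-define that blow-up, and because both $R$ and $\inj$ are already at our disposal on $\urdomain$, the formula can be written directly over $\urdomain$ without recourse to blow-up lifting.

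In the P-central case I would first note that $R$ is proper, so $n\le m$ (a TSR-relation of arity exceeding $m$ on the $m$-element set $J$ is full by the pigeon-hole principle), hence $(n-1)$-element subsets of $J$ exist. The claim to prove is that the equality-free pp-formula
\[
\psi(x)\;\equiv\;\exists y_1,\dots,y_m\ \Bigl(\inj(y_1,\dots,y_m)\wedge\!\!\!\bigwedge_{1\le i_1<\dots<i_{n-1}\le m}\!\!\!R(x,y_{i_1},\dots,y_{i_{n-1}})\Bigr)
\]
defines the centre of $R$. For the inclusion ``$\subseteq$'' one takes the $y_i$ to be the fixed transversal $\tuple t$: every conjunct then holds by definition of the centre. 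For ``$\supseteq$'' one uses that any witnesses $y_1,\dots,y_m$ of $\psi(x)$ form a transversal of $J$ (by definition of $\inj$), so every $(n-1)$-subset of $\alpha$-classes occurs among the conjuncts; then total reflexivity of $R$ disposes of tuples with a repeated class, total symmetry lets one disregard the order of coordinates, and $\alpha$-stability of $R$ lets one pass from representatives to arbitrary elements, yielding $(x,z_2,\dots,z_n)\in R$ for all $z_2,\dots,z_n$, i.e.\ $x$ lies in the centre. The PQ-central case is handled identically: $n\le m$ by properness again, and the equality-free pp-formula
\[
\chi(x,y)\;\equiv\;\exists w_1,\dots,w_m\ \Bigl(\inj(w_1,\dots,w_m)\wedge\!\!\!\bigwedge_{1\le i_3<\dots<i_n\le m}\!\!\!R(x,y,w_{i_3},\dots,w_{i_n})\Bigr)
\]
defines the central equivalence $\sigma$, by the same three observations applied to the last $n-2$ coordinates of $R$. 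Finally, since $\inj$ is equality-free pp-definable from $\Omega$-orbits and the restrictions of $\alpha$ to $\omega$-classes, both $\psi$ and $\chi$ are equality-free pp-definable from $R$, $\Omega$-orbits, and these restrictions, as required.

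I do not expect any serious obstacle: the entire content is the bookkeeping remark that the universal quantifiers in the definitions of the centre and of the central equivalence unfold into finite conjunctions over a transversal of the finitely many $\alpha$-classes, with that transversal supplied inside the pp-formula by $\inj$ — exactly the manoeuvre of \cite[Lemmata 15, 16]{symmetries} for genuinely finite domains. The two points I would take care to check are that properness of $R$ forces its arity to be at most $m$, so that the index set of the conjunction is nonempty and $\psi$ (resp.\ $\chi$) is not vacuously the full relation, and that no occurrence of the equality relation — which fails to be $\alpha$-stable — creeps in, the latter being ensured by using $\inj$ in place of repeated variables together with the equality-free pp-definition of $\inj$ recorded earlier.
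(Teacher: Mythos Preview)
Your proof is correct and follows the same approach as the paper: quantify a transversal of $\urdomain/\alpha$ via $\inj$ and conjoin over all choices of the remaining coordinates. The paper writes only the PQ-central formula (with indices ranging over all tuples $1\le i_1,\dots,i_{n-2}\le N$ rather than strictly increasing ones) and omits the verification; your version with increasing indices is equivalent by total symmetry and reflexivity, and your explicit two-sided check plus the bound $n\le m$ are welcome additions.
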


\begin{proof}
    We prove the lemma for the case where $R$ is PQ-central. The proof for the P-central case is almost identical, and will be omitted. The proof of this case might however be found in~\cite[Lemma 15]{symmetries}. Let $n$ be the arity of $R$, and let $N$ be the arity of $\inj$. Then the formula
    \begin{multline*}
        \sigma(x_1,x_2)\equiv \exists y_1,\dots,y_N \inj(y_1,\dots,y_N)\wedge \\\bigwedge\limits_{1\leq i_1,\dots,i_{n-2}\leq N} R(x_1,x_2,y_{i_1},\dots,y_{i_{n-2}})
    \end{multline*}
    pp-defines the central equivalence $\sigma$ of $R$.
\end{proof}

Moreover, We will use the following  statement from~\cite{symmetries}.

\begin{lemma}\cite[Lemma 42]{symmetries}\label{lemma:42}
    Let $R, S, R'$ be relations such that $R'$ has an equality-free pp-definition in $R$. Then $\OR(R,S)$ pp-defines $\OR(R', S)$.
\end{lemma}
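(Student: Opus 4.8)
The plan is to transport the equality-free pp-definition of $R'$ atom by atom into a pp-definition over $\OR(R,S)$, using the shared right-hand block of variables as a uniform ``escape clause''. Write $R\subseteq\urdomain^n$, $S\subseteq\urdomain^m$, and let $\phi(x_1,\dots,x_{n'})\equiv\exists z_1,\dots,z_\ell\ \bigwedge_{j\in\br t}R(\tuple w_j)$ be an equality-free pp-formula defining $R'\subseteq\urdomain^{n'}$ in $R$, where each $\tuple w_j$ is an $n$-tuple of variables from $\{x_1,\dots,x_{n'},z_1,\dots,z_\ell\}$ (repetitions allowed, no equality atoms). I would then propose that $\OR(R',S)$ is defined from $\OR(R,S)$ by
\[
\psi(x_1,\dots,x_{n'},y_1,\dots,y_m)\ \equiv\ \exists z_1,\dots,z_\ell\ \bigwedge_{j\in\br t}\OR(R,S)\bigl(\tuple w_j,y_1,\dots,y_m\bigr),
\]
which is visibly a pp-formula over the single relation $\OR(R,S)$.

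The verification splits into the two inclusions and is routine. For the inclusion of $\OR(R',S)$ into the relation defined by $\psi$: if $(\tuple a,\tuple b)$ has $\tuple b\in S$, then every conjunct of $\psi$ holds under an arbitrary assignment to the $z_i$ via the $S$-disjunct of $\OR(R,S)$; if instead $\tuple a\in R'$, then witnesses $\tuple c$ for the existentials of $\phi$ make $\tuple w_j[\tuple a,\tuple c]\in R$ for all $j$, satisfying the $R$-disjunct of each atom. For the reverse inclusion: given witnesses $\tuple c$ for $\psi(\tuple a,\tuple b)$, either $\tuple b\in S$ (and we are done), or $\tuple b\notin S$ forces $\tuple w_j[\tuple a,\tuple c]\in R$ for every $j$, so $\tuple c$ witnesses $\phi(\tuple a)$ and hence $\tuple a\in R'$.

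I do not anticipate any real obstacle; the content is a single syntactic substitution plus the above bookkeeping. Two small points I would make explicit in the writeup: first, the degenerate case of an empty conjunction in $\phi$ (so $R'=\urdomain^{n'}$ and $\OR(R',S)=\urdomain^{n'+m}$, which is trivially pp-definable); and second, that \emph{equality-freeness} is exactly what is needed here — it ensures each atom of $\phi$ is of the form $R(\tuple w_j)$ and hence turns into an $\OR(R,S)$-atom in $\psi$, whereas an equality atom between two free variables of $R'$ would instead require access to $\OR({=},S)$, which is not pp-definable from $\OR(R,S)$ in general. (Equality atoms that involve an existential variable can be removed beforehand by substitution, so only equalities among the free variables of $R'$ would cause trouble.)
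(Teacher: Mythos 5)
Your proof is correct and is essentially the standard argument: replace each $R$-atom in the equality-free pp-definition of $R'$ by the corresponding $\OR(R,S)$-atom, reusing the same right-hand block of $y$-variables in every conjunct. The paper does not reprove this lemma (it cites it from \cite{symmetries}), but your construction and verification match what that proof does, and your closing remark correctly identifies why equality-freeness is the necessary hypothesis.
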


Finally, we are in the position to  prove the main result of this section. We follow the proofs of Lemmata 21 and 22 in~\cite{symmetries}, and provide the arguments  for the convenience of the reader. Observe that any $\alpha$-stable equivalence $\sigma$ on $\urdomain$ is the $\alpha$-blow-up of $\sigma/\alpha$ on $\urdomain / \alpha$, which is a finite set by assumption.

\ltwentyonetwo*

\begin{proof}
    Let us denote the arity of $T$ by $n$. If $T$ is unary, we are done.

    If $n=2$ and $T$ is not linked, the transitive closure $\sigma$ of $T$ is pp-definable. Indeed, $T$ is $\alpha$-stable, whence for $\sigma$ to be the transitive closure of $T$, it is enough to verify that $\sigma / \alpha$ is the transitive closure of $T / \alpha$, which is a relation on a finite set. Observe that the transitive closure of a symmetric and reflexive binary relation $T/\alpha$ on a finite set is equality-free pp-definable -- if $T$ is itself not transitive, then replacing $T$ with the relation defined by $\exists z T(x,z)\wedge T(z,y)$ yields a relation which contains strictly more pairs than $T$, and is contained in its transitive closure. As the base set of $T/\alpha$ is finite, the process must stop eventually. Moreover, $\sigma$ is itself $\alpha$-stable, and can be defined without equality from $T$. We apply~\Cref{lemma:42} to $R=S:=T, R':=\sigma$, and get a pp-definition of $\OR(T,\sigma)$ from $\OR(T,T)$. Permuting the coordinates and applying~\Cref{lemma:42} again, we get a pp-definition of $\OR(\sigma,\sigma)$.

    If $n=2$ and $T$ is linked, or $n\geq 3$, then we apply~\Cref{l:lemma14}, and get an $\alpha$-stable TSR-relation $R$ which is $P$-central or $PQ$-central, and that has an equality-free pp-definition in $T$. Applying~\Cref{lemma:15+16}, $R$ equality-free pp-defines its centre $C$ which is itself $\alpha$-stable (if $R$ is P-central) or its $\alpha$-stable central equivalence $\sigma$ (if $R$ is $PQ$-central). Again, applying~\Cref{lemma:42} twice, we get a pp-definition of $\OR(C,C)$, or of $\OR(\sigma,\sigma)$, respectively, and we are done.
\end{proof}

\paragraph{From \texorpdfstring{$\OR(D_L, D_R)$}{OR(DL, DR)} to \texorpdfstring{$\OR(U,U)$}{OR(U,U)}}

We will show that from a relation $\OR(D_R,D_L)$, where $D_L,D_R$ are unary, $\alpha$-stable relations, we can get a relation $\OR(U,U)$ for some unary, $\omega$-stable $U$. We  use the following reformulation of~\cite[Lemma 19]{symmetries}.

\begin{lemma}\label{lemma:19}
    A non-empty  $C\subsetneq \urdomain$ and $\edge$ tree pp-define a non-empty  $\subdomain\subsetneq \urdomain$ such that $\urgraph|_\subdomain$ is smooth. If $C$ is $\omega$-stable, then so is $\subdomain$.
\end{lemma}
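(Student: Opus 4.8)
The plan is to reconstruct the standard ``smooth part'' argument in the tree-pp-definable setting. Recall that the \emph{smooth part} of a set $\subdomain$ is the largest $\subdomain'\subseteq \subdomain$ such that $\urgraph|_{\subdomain'}$ is smooth, i.e.\ such that every vertex of $\subdomain'$ has both an $\edge$-predecessor and an $\edge$-successor inside $\subdomain'$. First I would recall, as in the proof of \Cref{claim:fullcaseoldargument} above, that the smooth part of any set $X$ is tree pp-definable from $\edge$ and $X$: the pp-formula $\phi(x)$ asserting the existence of an $\edge$-forward walk of some fixed length $m$ starting at $x$ inside $X$, together with an $\edge$-backward walk of length $m$ inside $X$, is a tree pp-formula, and for $m$ large enough (e.g.\ $m=|\urdomain/\alpha|$, or in the finite-component situation $m$ equal to the number of vertices) it defines exactly the smooth part, because the set of vertices satisfying $\phi$ induces a smooth digraph (every witness of the existentially quantified variables again satisfies $\phi$), and conversely every vertex of the smooth part satisfies $\phi$. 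The same $\phi$ shows $\omega$-stability is preserved: since $\edge$ and $X=C$ are $\Omega$-invariant (being a union of $\omega$-classes), the relation defined by $\phi$ is $\Omega$-invariant, hence $\omega$-stable.

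The second ingredient is that, starting from a non-empty $C\subsetneq\urdomain$, the smooth part is still \emph{non-empty} and still \emph{proper}. Properness is immediate: the smooth part of $C$ is contained in $C\subsetneq \urdomain$. For non-emptiness I would use that $\urgraph$ is smooth (this is a standing assumption on $\urgraph$ throughout the paper), so from any vertex of $C$ one can follow an infinite $\edge$-forward walk in $\urgraph$; however, to stay inside $C$ one needs the stronger fact that $C$ contains a \emph{cycle}. The way to force this is the usual trick: pass to $C$ and iterate the operation $x\mapsto$ ``has an $\edge$-successor in the current set'' together with ``has an $\edge$-predecessor in the current set'', i.e.\ repeatedly replace the current set $X$ by $(X+\ledge)\cap(X+\edge)\cap X$; this is a decreasing sequence of $\omega$-stable (resp.\ tree-pp-definable) sets, so in the relevant quotient $\urgraph/\alpha$ (which is finite on each weakly connected component) it stabilises, and the stable set is precisely the smooth part of $C$. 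It is non-empty because $C$ is non-empty and $\urgraph$ is smooth: taking any $a\in C$ and any weakly connected component $D$ of $\urgraph$ meeting $C$, the component $D/\alpha$ is finite and, being a smooth finite digraph, contains a cycle; chasing this cycle back into $C$ (using that $C$, being $\omega$-stable hence $\alpha$-stable, is closed under the bijections of \Cref{aitm:bijection}) yields a cycle inside $C$, and the vertices of that cycle survive into the smooth part. Finally one takes $\subdomain$ to be the smooth part of $C$; it is tree pp-definable from $C$ and $\edge$ by the first paragraph, non-empty and proper by the second, smooth by construction, and $\omega$-stable whenever $C$ is.

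The main obstacle I anticipate is the non-emptiness claim: in the purely finite setting of~\cite[Lemma 19]{symmetries} this is routine, but here $\urgraph$ itself is infinite, so one genuinely needs to descend to $\urgraph/\alpha$ (or to the finite weakly connected components after applying $\alpha$) to run the stabilisation argument, and one must check that the tree-pp-definition of the smooth part is not disturbed by this passage --- concretely, that the bound $m$ on the walk lengths in $\phi$ can be taken uniformly (e.g.\ as $|\urdomain/\alpha|$, which is finite on each component by \Cref{aitm:finite} of \Cref{lem:alphaFinitises}), so that $\phi$ is a genuine finite tree pp-formula. Once that uniform bound is in place, everything else is the formal bookkeeping already used for \Cref{claim:fullcaseoldargument} and \Cref{l:smoothpartreduc}.
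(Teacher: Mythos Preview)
Your approach has a genuine gap in the non-emptiness argument. You propose to take $H$ as the smooth part of $C$, but the smooth part of $C$ can be empty: for example, if $C$ is a single $\omega$-class $O$ and $\urgraph/\Omega$ has no loop at $O$ (which is the standing assumption in this section), then no element of $C$ has an $\edge$-successor in $C$, so the smooth part is empty. Your claim that a cycle in $D/\alpha$ can be ``chased back into $C$'' does not hold: such a cycle visits $\alpha$-classes (indeed $\omega$-classes) that need not lie in $C$, and neither $\omega$-stability of $C$ nor the bijections of \Cref{aitm:bijection} force those classes into $C$. Thus the set you construct may well be empty, and the argument fails.

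The paper's route is different and much shorter. Since $\Omega$ is oligomorphic, $\urgraph/\omega$ is a \emph{finite} smooth digraph, and the paper simply applies~\cite[Lemma~19]{symmetries} to $\urgraph/\omega$ and $C/\omega$ as a black box, observing that the formula $\psi$ produced there is a tree pp-formula. The construction behind that lemma does not take the smooth part of $C$ directly (which, as above, could be empty); the set $H$ it yields is in general not contained in $C$. Interpreting the same $\psi$ in $\urgraph$ then gives the $\omega$-blow-up of the set it defines in $\urgraph/\omega$, because $\Omega$ acts transitively on each $\omega$-class and hence witnesses in the quotient lift along the tree. In particular there is no need to pass through $\alpha$ here at all.
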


\begin{proof}
    Apply~\cite[Lemma 19]{symmetries} to $\urgraph/\omega$ and $C/\omega$, and observe that the formula $\psi$ defining $\subdomain$ is a tree pp-formula.
\end{proof}

The following is an adjusted version of~\cite[Lemma 20]{symmetries} to our setting.

\ltwenty*

\begin{proof}
    We proceed as in~\cite{symmetries}, the only difference being the precise formula defining our desired $\OR(U,U)$-relation. We provide the complete proof for the convenience of the reader.

    Let us assume that the first item does not apply. Recall that $k$ was chosen to be the smallest number such that $\urgraph/\alpha$ is $k$-linked, and $\edgek\neq (\urdomain / \alpha)^2$.
    Let $\psi$ be the tree pp-formula defining $\subdomain\subsetneq \urdomain$ from~\Cref{lemma:19} applied to $D_L$. Since $D_L$ is easily seen to be pp-definable from $\OR(D_L,D_R)$ and $\inj$, it in particular follows that $\urgraph|_{\subdomain}$ is not $k$-linked. Let $\beta$ be the $k$-linkedness equivalence relation on $\subdomain/\alpha$. Let $\phi$ be the  tree pp-formula defining the $k$-linkedness relation on $\urgraph/\alpha$. Note that the same formula $\phi$ pp-defines a  relation on $\urdomain$ which is reflexive and symmetric. Moreover, this relation is full modulo $\alpha$ since $\urgraph/\alpha$ is $k$-linked. 
    Let now $\phi'$ be the formula obtained from $\phi$ by restricting all variables to $\subdomain$, i.e. $\phi'$ is a  pp-definition of $\beta$ in $(\urgraph|_{\subdomain})/\alpha$.

    Still following the proof of~\cite[Lemma 20]{symmetries}, we define $\phi''$ to be the formula obtained from $\phi'$ by replacing every restriction $\subdomain(x)$ on an existentially quantified variable by the formula $\psi(x)$. Note that if we remove in $\phi''$ all the restrictions of the quantified variables to $D_L$, we obtain a formula which defines a full relation in the factor $\subdomain/\alpha$.

    Now, we successively remove the conjuncts $D_L(x)$ in $\phi''$, and we stop when we arrive at a formula with a quantified variable $x$ such that if we remove the conjunct $D_L(x)$, $\phi''$ defines in $(\urgraph|_{D_L})/\alpha$ a subset of $(D_L / \alpha)^2$ strictly larger than $\beta$, but with the conjunct $D_L(x)$, it defines $\beta$. Making $x$ free, we get a tree pp-definition of a ternary relation $S$ satisfying
    \begin{itemize}
        \item $\exists x S(y,z,x)$ defines a subset of $(\subdomain/\alpha)^2$ strictly larger than $\beta$, and
        \item $\exists x S(y,z,x)\wedge D_L(x)$ defines $\beta$ in $(\subdomain/\alpha)^2$.
    \end{itemize}
     Let us now find $a,b\in\urdomain$ arbitrary such that for their $\alpha$-classes $a/\alpha,b/\alpha$ we have $(a/\alpha,b/\alpha)\notin \beta$, and such that there exists $c\in\urdomain$ with $S(a,b,c)$. Note that $c\notin D_L$, whence $c+\OR(D_L, D_R)=D_R$.

    Assume without loss of generality that $\inj$ contains a tuple $(a,b,a_3,\dots,a_N)$. We set $R:=\OR(D_L, D_R)$, and define a pp-formula with free variables $x_1,x_2$ as follows
    \begin{equation*}
        \begin{aligned}
            \exists y_0,y_1,y_2,X_1,X_2,z_3,\dots,z_N
            \\
            \inj(y_0,y_2,z_3,\dots,z_N)\wedge
            S(y_0,y_1,X_1)\wedge \\
            R(X_1,x_1)\wedge
            S(y_1,y_2,X_2)\wedge R(X_2,x_2)
        \end{aligned}
    \end{equation*}
     We now  claim that $T=\OR(U,U)$ for $U:=D_R$.

    Let us first take $(d_1,d_2)\in \urdomain^2$ such that $d_1\in U$ (the case $d_2\in U$ is symmetric), and let us find an evaluation witnessing that $(d_1,d_2)\in T$.
    We evaluate $y_0\mapsto a, y_1,y_2\mapsto b, X_1\mapsto c$, $X_2$ to an element $d$ of $D_L$ such that $S(b,b,d)$ holds (note that this is possible as the relation defined by $\phi''$ on $H$ is reflexive). Finally, we evaluate $z_1,z'_1\mapsto a, z_2,z'_2\mapsto b$, and $z_i\mapsto a_i$ for $3\leq i\leq n$. It is easy to check that this is an evaluation witnessing that $(d_1,d_2)\in T$. Indeed, the first three conjuncts as well as the conjuncts containing $S$ hold by the choice of $a,b,c,d$; moreover, the other conjuncts hold since $R$ is an $\OR(D_L,D_R)$-relation.

    For the opposite direction, let $(d_1,d_2)\in T$ be such that $d_1\notin U$; we aim to show that $d_2\in U$ (the case $d_2\notin U$ being completely symmetric again), and let $\val$ be an evaluation of the quantified variables witnessing that $(d_1,d_2)\in T$. As $(\val(X_1),d_1)\in R$, it follows that $\val(X_1)\in D_L$, whence $(\val(y_0)/\alpha,\val(y_1)/\alpha)\in \beta$. Note that $(\val(y_0)/\alpha,\val(y_2)/\alpha)\notin\beta$ -- indeed, $(\val(y_0),\val(y_2),\val(z_3),\dots,\val(z_N))\in \inj$ by the third conjunct, whence there exists $g\in\Omega$ such that $(g(\val(y_0)),a)\in \alpha$ and $(g(\val(y_2)),b)\in \alpha$, and we know that $(a/\alpha,b/\alpha)\notin \beta$. As $(\val(y_0)/\alpha,\val(y_1)/\alpha)\in \beta$, it follows that $(\val(y_1)/\alpha,\val(y_2)/\alpha)\notin \beta$ by transitivity of $\beta$. Now, $\val(X_2)\notin D_L$, and since $(\val(X_2),d_2)\in R$, it holds that $d_2\in U$ as desired.
\end{proof}

\paragraph{The final step}

In this section, we derive one of the two items of~\Cref{thm:7} from the fact that $\urgraph$ defines the relation $\OR(U,U)$ for a unary, $\omega$-stable $U$. We  prove the following modification of~\cite[Lemma 23]{symmetries}.

\ltwentythree*

\begin{proof}
    The  structure of the proof is the same as in~\cite[Lemma 23]{symmetries}, but we provide the full proof for the convenience of the reader. Let us first observe that $R:=\OR(U,U)$ pp-defines $U$ by the formula $R(x,x)$. Moreover, for any $\ell\geq 1$, the relation $\OR(U^\ell,U^\ell)(x_1,\dots,x_\ell,y_1,\dots,y_\ell)$ is pp-definable from $R$ by the formula 
    $$
    \bigwedge\limits_{i\in[\ell]}\bigwedge\limits_{j\in[\ell]} \OR(U,U)(x_i,y_j)$$
    
    As in the proof of~\Cref{lemma:20}, we take the tree pp-formula $\psi$ from~\Cref{lemma:19} which uses $U$ to define a non-empty $\omega$-stable set $\subdomain\subseteq \urdomain$ such that $\urgraph|_\subdomain$ is smooth. We assume that $(\urgraph|_H)/\alpha$ is not $k$-linked; let $\beta$ be the $k$-linkedness equivalence on $\subdomain/\alpha$.

    We take the tree pp-formula $\phi''$ as in the proof of~\Cref{lemma:20}, and we stop removing the conjuncts $U(x)$ when the binary relation defined by the formula on $\subdomain/\alpha$ becomes linked. By making a variable free, we get a tree pp-definition of a ternary relation $S$ which satisfies
    \begin{itemize}
        \item $\exists x S(y,z,x)\wedge U(x)$ is contained in a non-trivial equivalence on $\subdomain/\alpha$ (namely, in the $k$-linkedness equivalence), and
        \item $\exists x S(y,z,x)$ is not contained in any proper equivalence on $B/\alpha$.
    \end{itemize}

    Observe moreover that both binary relations above are reflexive on $\subdomain/\alpha$ as both of them contain $\beta$. Moreover, we can find a formula over the signature consisting of a single binary symbol $Q$ which, depending on the interpretation of the symbol $Q$, pp-defines the symmetric and transitive closure of both of these relations.

    Let us modify this formula as follows. We replace every occurrence of the conjunct $Q(y,z)$ by $S(y,z,x_i)$ for a new free variable $x_i$, and we obtain a tree pp-definition of a relation $T$ such that
    \begin{itemize}
        \item $\exists x_1,\dots,x_n T(y,z,x_1,\dots,x_n)$ defines $(\subdomain/\alpha)^2$, and
        \item $\exists x_1,\dots,x_n T(y,z,x_1,\dots,x_n)\wedge \bigwedge\limits_{i\in[n]}U(x_i)$ defines a proper equivalence $\sigma$ on $\subdomain/\alpha$.
    \end{itemize}
   
    Consider the equivalence $\sigma$ on $\subdomain/\alpha$. Note that the pp-definition of $\sigma$ does not necessarily lift to a pp-definition of its $\alpha$-blow-up $\sigma^\alpha$ on $\subdomain$ since $\sigma$ is a binary relation (\Cref{lem:liftingTreeDefs} enables us to lift only definitions of unary relations). However, it suffices to find $P\subseteq\subdomain$ such that $\alpha|_P$ is pp-definable, and such that $\sigma^{\alpha}|_{P\times P}$ is not full. To this end, let us distinguish two cases: if there exists an $\omega$-class $O$ such that $\sigma^{\alpha}|_{O\times O}$ is not full (i.e. $O/\alpha$ intersects at least two $\sigma$-classes), then we set $P:=O$; otherwise, every $\sigma^{\alpha}$-class is $\omega$-stable. Since $\sigma^{\alpha}$ is a non-trivial equivalence, and $\urgraph$ is weakly connected, there must exist two $\omega$-classes $O_1,O_2$ with $O_1\edgeo O_2$, and such that $O_1/\alpha$ and $O_2/\alpha$ are contained in two different $\sigma$-classes. In this case, we set $P:= O_1\cup O_2$.

    Let us now observe that by existentially quantifying $x_1,\dots,x_n,y',z'$ in the formula
    $$\alpha|_P(y',y)\wedge \alpha|_P(z',z)\wedge T(y',z',x_1,\dots,x_n),$$ we obtain a formula defining $P$, while by existentially quantifying the same variables in the formula $$\alpha|_P(y',y)\wedge \alpha|_P(z',z)\wedge T(y',z',x_1,\dots,x_n)\wedge \bigwedge\limits_{i\in[n]}U(x_i),$$ we obtain a formula defining $\sigma^{\alpha}$ on $P$.

    As we have observed above, $\OR(U,U)$ pp-defines $\OR(U^n,U^n)$. Hence, the following formula $\chi(u,v,y,z)$ pp-defines the $\alpha$-blow-up of the restriction of $\OR(\sigma, \sigma)$ to $P/\alpha$: \begin{equation*}
        \begin{aligned}
            \exists x_1,\dots,x_n,y_1,\dots,y_n,u',v',y',z'\\
            \OR(U^n,U^n)(x_1,\dots,x_n,y_1,\dots,y_n)\wedge\\
            \alpha|_P(u',u)\wedge \alpha|_P(v',v)\wedge \alpha|_P(y',y)\wedge \alpha|_P(z',z)\wedge\\
            T(u',v',x_1,\dots,x_n)\wedge T(y',z',y_1,\dots,y_n).
        \end{aligned}
    \end{equation*} 
    Finally, observe  that  $P/\alpha$ is finite  as $\alpha$ finitises $(\urgraph, \Omega)$ by assumption.
\end{proof}

\subsection{Proof of \texorpdfstring{\Cref{prop:Siggerscrap}}{Proposition~\ref{prop:Siggerscrap}}}\label{sect:proofsiggerscrap}

The proof of~\Cref{prop:Siggerscrap} below again uses the  finitising equivalence $\alpha(\urgraph, \Omega)$.   In this case, knowledge of the particular shape of $\urgraph/\Omega$ allows us to pp-define $\alpha(\urgraph, \Omega)$ on the entirety of the domain of $\urgraph$. 
\Siggerscrap*

\begin{proof}
    Note that all unions of pairs of $\omega$-classes are pp-definable from $\urgraph$ and $\omega$-classes. Indeed $0\cup 1$ is equal to $(0+\edgeo)+\edgeo$, $0\cup 2=1+\ledgeo$, and $1\cup 2=0+\edgeo$. In other words, the $2$-conservative expansion of $\urgraph$ with respect to $\Omega$ is pp-definable from $\urgraph$. \Cref{l:alphaonpairs} yields that $\urgraph$ pp-defines $\alpha:=\alpha(\urgraph,\Omega)$ on all pairs of $\omega$-classes. Since $\alpha$ finitises $(\urgraph, \Omega)$ by \Cref{lem:alphaFinitises}, the edge relation $\edgeo$ induces a bijection between $\edgeo$-adjacent $\omega$-classes, whence the following formulae define the same relation in $\urgraph / \alpha$:
    \begin{itemize}
        \item $\phi(x,y)\equiv \exists z (x\edge z)\wedge (z\edge y)$
        \item $\phi(x,y)\equiv \exists z (x\edge z)\wedge (z\edge y)\wedge z\in 0\cup 1$
    \end{itemize}

    From the definition of $\alpha$ using symmetric $\Omega$-orbit-labelled paths, it now easily follows that the following formula pp-defines  $\alpha(x,y)$ on $G$:
    \begin{multline*}
        \exists u,u',v,v' ((x\edge u)\wedge \alpha|_{0\cup 1}(u,u')\wedge \\(u'\ledge y))\wedge ((x\ledge v)\wedge \alpha|_{0\cup 1}(v,v')\wedge (v'\edge y)).
    \end{multline*}

    By~\cite[Corollary 3.10]{wonderland}, $\urgraph$ pp-constructs the finite digraph $\subgraph:=\urgraph / \alpha$. Moreover, as $\alpha$ finitises $(\urgraph, \Omega)$, it is $\Omega$-invariant, whence  $\Omega$ acts on $\alpha$-classes. Since $\subgraph$ is a finite digraph such that $\subgraph / \Omega=\urgraph / \Omega$ has algebraic length $1$, \cite[Theorem 2]{symmetries} finishes the proof.
\end{proof}

\end{document}